\newcommand{\R}{\mathbb{R}}
\newcommand{\ra}{\rightarrow}
\newtheorem{thm}{Theorem}
\newtheorem{prps}{Proposition}
\newtheorem{lem}{Lemma}
\theoremstyle{definition}
\newtheorem{axm}{\sc Axiom}
\newtheorem{ass}{Assumption}
\newtheorem{defn}{Definition}
\newtheorem{cor}{Corollary}
\newtheorem{obs}{Observation}
\begin{document}
\title{Behavioral Foundations of Nested Stochastic Choice and Nested Logit \footnote{Kovach: Virginia Tech (mkovach@vt.edu); Tserenjigmid: UC Santa Cruz (gtserenj@ucsc.edu ). We thank Jos\'e Apesteguia, Miguel A. Ballester, Khai Chiong, Ian Crawford, Federico Echenique, Mira Frick, Sean Horan, Ryota Iijima, Shaowei Ke, Jay Lu, Fabio Maccheroni, A.A.J. Marley, Paulo Natenzon, Pietro Ortoleva, Collin Raymond, Matthew Shum, Tomasz Strzalecki, the audiences of the BGSE Summer Forum 2018 and SAET 2019, and the seminar participants at UC Santa Cruz, University of Sussex, University of Toronto, KAIST, and Hitotsubashi University. We are also very grateful to the editor, Emir Kamenica, and four anonymous referees for their excellent suggestions and feedback.}}

\author{Matthew Kovach \and Gerelt Tserenjigmid}

\maketitle

\begin{abstract}
We provide the first behavioral characterization of \textbf{nested logit}, a foundational and widely applied discrete choice model, through the introduction of a non-parametric version of nested logit that we call \textbf{Nested Stochastic Choice} (NSC). NSC is characterized by a single axiom that weakens \textbf{Independence of Irrelevant Alternatives} based on revealed similarity to allow for the \textbf{similarity effect}. Nested logit is characterized by an additional menu-independence axiom. Our axiomatic characterization leads to a practical, data-driven algorithm that identifies the true nest structure from choice data. We also discuss limitations of generalizing nested logit by studying the testable implications of cross-nested logit. 
\end{abstract}

\noindent{{\bf Keywords:} Nested Logit; Nested Stochastic Choice; Luce Model; IIA; Similarity Effect; Regularity; Revealed Similarity; Cross-Nested Logit; Nest Identification.}
\bigskip

\noindent{\bf JEL:} D01, D81, D9.

\section{Introduction}

\textbf{Nested logit} (\cite{Ben-Akiva1973nestedlogit}, \cite{mcfadden1978modeling}) is the most widely applied generalization of multinomial logit (or Luce's (1959) model) due to its ability to capture various substitution patterns.\footnote{Nested logit has been used to study transportation demand (\cite{anderson1992multiproduct}, \cite{forinash1993application}), airline competition (\cite*{lurkin2018}), automobile demand  (\cite{brownstone1989efficient}, \cite{goldberg1995product}), telephone use (\cite*{train1987demand}, \cite{lee1999calling}), and much more. See Chapter 4 of \cite{train2009discrete} for an excellent discussion.}  In nested logit, each alternative belongs to a nest (or subset) of ``similar" alternatives, and choice may be decomposed into two Luce procedures: the probability that $a$ is chosen from menu $A$ is the probability that $a$'s nest is chosen from among nests available in $A$ multiplied by the conditional probability that $a$ is chosen from that nest.  Despite its immense importance, nested logit has escaped behavioral characterization. In this paper, we provide the first behavioral characterization of nested logit through the introduction of a fully non-parametric version that we term \textbf{Nested Stochastic Choice} (NSC). This axiomatic characterization sheds light on the implicit assumptions behind nested logit and related models and leads to a tractable method to identify (unobserved) nests from data. 

Nested logit was developed to address the limitations of multinomial logit when dealing with ``similar alternatives.'' In the Luce model, choice probabilities are proportional to a utility index and hence satisfy \textbf{Independence of Irrelevant Alternatives} (IIA): probability ratios are menu-independent. The \textbf{similarity effect} (\cite{debreu1960review}, \cite{tversky1972eba}) is a violation of IIA in which the introduction of an alternative to a menu has a much larger effect on the choice probabilities of alternatives of a ``similar type'' than on those of a ``different type.'' Nested logit allows the similarity effect by assuming nested (similar) alternatives are more substitutable (e.g., because they receive a correlated utility shock). Our main result shows that nesting of similar alternatives, a key behavioral feature of nested logit (and also NSC), is captured by a weakening of IIA that allows for the similarity effect. This finding reveals a deep connection between nested logit and the similarity effect. 

To illustrate the similarity effect, consider the red bus/blue bus example from \cite{debreu1960review}.  A commuter making a choice between a red bus and a train may choose either with probability $0.5$. If the option to take a blue bus is introduced, it is plausible that it will have no effect on the commuter's likelihood of taking the train; the blue bus only affects the probability of selecting the red bus (e.g., by reducing it to $0.25$).  The intuition behind this is that the buses are similar to each other in a way that neither one is to the train. Nested logit handles this by placing the two buses into a bus nest.  While this example provides an extreme case of the similarity effect (the buses are perfect substitutes, or ``duplicates''), the principle that ``similar alternatives affect each other'' readily extends to many situations of interest to economists, firms, and policy-makers, such as a consumer's choice of vehicle or apartment.\footnote{In these cases, the correct nest specification is not easily observed by the analyst. For instance, apartments in a city might be nested based on subjective neighborhoods, which may depend on a variety of factors. In turn, some of these factors may be observable while others may be subjective or difficult to observe.}

Our analysis of nested logit relies on the introduction of NSC, a non-parametric version of nested logit. Formally, a stochastic choice function $p$ is an NSC if there exist nests $X_1, \ldots, X_K$ that partition the set of all alternatives $X$ and functions $v$ and $u$ such that, for each choice set $A$, the probability of choosing $a \in A \cap X_i$ is given by\begin{equation}p(a,A)=\frac{v(A\cap X_i)}{\sum_{j=1}^K v(A\cap X_j)}\,\frac{u(a)}{\sum_{b \in A\cap X_i}u(b)}.\end{equation} 
The NSC is defined by two Luce rules where the attractiveness of the nest $A\cap X_i$ is measured by $v(A\cap X_i)$, and the attractiveness of the alternative $a$ is measured by $u(a)$ (or simply Luce's utility of $a$). In terms of the red bus/blue bus example, $v$ governs the choice between general modes, ``a bus'' or ``a train,'' while $u$ governs the choice between specific alternatives, the red or blue bus. 

Notice that nested logit is the special case of NSC in which, for each $i\le K$,
\begin{equation}\label{NLeq}v(A\cap X_i)=\left(\sum_{a \in A\cap X_i}u(a)\right)^{\eta_i}\text{ for some } \eta_i>0.\end{equation}
Hence, NSC is nested logit without any assumptions on the relationship between $v$ and $u$. It is commonly assumed in the applied literature that $\eta_i\le 1$, as this ensures that nested logit is a Generalized Extreme Value (GEV) model \citep{mcfadden1978modeling}, and therefore this restriction is sufficient for consistency with the random utility model (RUM). However, this restriction ($\eta_i\le 1$) is not necessary for \autoref{NLeq} to be a RUM and nested logit has been estimated without this restriction.\footnote{This parameter restriction is sometimes referred to as the Daly-Zachary-McFadden condition (\cite{Daly1978}, \cite{mcfadden1978modeling}), as they showed that this is sufficient for consistency with RUM for arbitrary values of the other variables (e.g., utilities). However, this restriction is not always imposed. For instance, \cite*{train1987demand} provide estimates of a model for which $\eta_i > 1$, remarking that it represents greater substitutability between nests than within nests (see also \cite*{train1989}, \cite{lee1999calling}, \cite{Foubert2007}). Further, nested logit with $\eta_i>1$ may still be consistent with RUM (see \cite{borsch1990} and \cite{herriges1996}).} Therefore, for simplicity of exposition we will refer to \autoref{NLeq} as a nested logit for any parameter value. To provide further clarity, we may sometimes refer to nested logit satisfying the restriction ($\eta_i\le 1$) as the \emph{random utility (RU) nested logit}. We provide behavioral foundations for NSC and nested logit, along with a characterization of random utility nested logit.

We utilize a revealed preference approach to identify the subjective/endogenous nest structure of the NSC.  This is achieved by introducing a notion of revealed similarity. In nested logit, if two alternatives are in the same nest, then their probability ratio will always be independent of other alternatives. Consistent with nested logit, we use this insight to define a notion of similarity: alternatives $a$ and $b$ are \textbf{revealed categorically similar}, denoted $a \sim_p b$, if IIA holds between $a$ and $b$ at any menu. Otherwise, they are \emph{revealed categorically dissimilar}.  Therefore, the core notion of similarity underpinning nested logit is binary: alternatives are similar or not. 

Equipped with this notion of reveled similarity, we can weaken IIA to allow for the similarity effect. To do so, we decompose IIA into two axioms. The first axiom, \textbf{\nameref{ISA}} (ISA), imposes an IIA condition between $a$ and $b$ in the presence of a third alternative $x$, when $x$ is symmetrically related to $a$ and $b$ in terms of revealed similarity (i.e., both are revealed categorically similar or dissimilar to $x$). In terms of the red bus/blue bus example, IIA should hold between the buses in the presence of the train. 
The second axiom, \textbf{\nameref{IAA}} (IAA), ``completes'' ISA by imposing an IIA condition when $x$ is asymmetrically similar to $a$ and $b$ (i.e., $x$ is similar to one and not the other). For example, IAA implies that the introduction of the blue bus impacts the red bus and the train equally, which directly rules out the similarity effect. 

Our main result is that ISA characterizes NSC (\autoref{NSCthm}). Further, since ISA is the minimal departure from IIA that allows for the similarity effect (i.e., violations of IAA), our analysis reveals that NSC is the model obtained when IIA is relaxed to allow for the similarity effect.

To see how our axiomatization provides a clearer picture of nested logit, recall its standard textbook description. For instance, Chapter 4 of \cite{train2009discrete} states that for nested logit ``IIA holds over alternatives in each nest and independence of irrelevant nests\footnote{If $a$ and $b$ are from distinct nests, then the addition of an alternative $c$ from a third nest will not affect the relative probabilities of $a$ and $b$.} (IIN) holds over alternatives in different nests.''  NSC also satisfies these properties. Indeed, ISA ensures the existence of endogenous nests and imposes exactly these properties on them. But this finding shows that ``IIA within a nest $+$ IIN'' are not sufficient for a nested logit representation, as ISA is equivalent to NSC. Since there are missing behavioral assumptions behind nested logit, the textbook description is incomplete.

To provide a complete picture of nested logit, we establish two characterizations of nested logit as well as a characterization of random utility nested logit. Our first characterization shows that an NSC is a nested logit if and only if it satisfies \textbf{\nameref{LRI}}. This axiom requires that the natural logarithm of certain probability ratios featuring collections of similar alternatives is menu-independent. The explicit use of a functional form in the axiom allows us to establish necessary and sufficient conditions for the functional form assumed in nested logit with finite data. 

Our second characterization is based upon a novel monotonicity condition, \textbf{\nameref{RLI}}, that is necessary for nested logit and becomes sufficient under a mild richness assumption.  To understand this axiom, note that nested logit (\autoref{NLeq}) requires that the attractiveness of a nest is increasing in the sum of utilities. \textbf{\nameref{RLI}} implies that this feature must hold in a relative sense; the attractiveness of a  nest \emph{relative to another nest} is increasing in the sum of utilities, holding the alternatives in the other nest fixed. Finally, random utility nested logit is characterized by one additional axiom, \textbf{Regularity}, a well-known monotonicity property that all random utility models must satisfy (under the same richness assumption).  We summarize all of our characterization results in \autoref{charsum}. 

\begin{table}[h]
\centering

\begin{tabular}{| l l l |}
\hline
 NSC  & $\Leftrightarrow$  ISA & \\
  \hline
 Luce  &  $\Leftrightarrow$  ISA + IAA  & $\Leftrightarrow$ IIA \\
 \hline
 Nested Logit & $\Leftrightarrow $ ISA + LRI & $\Leftrightarrow $ ISA + RLI (Richness) \\
  \hline
RU Nested Logit & $\Leftrightarrow $ & \,\,\,\,\,\,\,\,\, Nested Logit + Regularity (Richness) \\
\hline
\end{tabular}

\caption{Summary Of Characterization Results.}\label{charsum}
\end{table}

\nameref{ISA} implies that the revealed similarity relation $\sim_p$ is transitive, which ensures that the nests form a partition. Hence, our axiomatic characterizations show that the notion of (categorical) similarity in nested logit, as well as in NSC, is quite structured. In some applications, an analyst may want to allow for more flexible forms of substitutability. For this reason, \textbf{cross-nested logit}, a generalization of nested logit, has been proposed and widely applied in empirical work (see \cite{vovsha1997application}, \cite{ben1999discrete}). The main difference between nested logit and cross-nested logit is that alternatives may belong to several nests in cross-nested logit. Since we are focused on the problem of recovering the nesting structure, we consider a generalization of cross-nested logit that relaxes the typical parameter restriction, and refer to this as the \emph{unrestricted cross-nested logit}.  We show that the unrestricted cross-nested logit does not have testable implications. Therefore, our results reveal the trade-off between nested logit and cross-nested models: relaxing the partition structure of nested logit results in an overly permissive model. In other words, the behavioral content of cross-nested logit is essentially driven by the analyst's assumption of the nest structure and parameter restrictions.

In practice, $p$ is estimated from observed choices and ``IIA like" conditions never hold exactly. However, we show that the true nest structure can still be identified, for any NSC, by solving a minimization problem. In particular, our axiomatic characterization allows us to derive a ``distance" function $D$ that measures, for a given nest structure, the degree of violations of IIA within and across nests for a given set of observations. When the data are close to the true (or theoretical) $p$, the true nest structure will be the unique minimizer of $D$. In applied settings where the researcher has several nest structures in mind, $D$ may also be useful as a selection criteria.  

Because the number of possible nests grows exponentially as the number of alternatives increases, the full minimization problem may become intractable quickly. However, this issue can be managed due to insights from our similarity relation; one only needs to check nest structures that are consistent with an empirical approximation of $\sim_p.$ In fact, we show that there are at most $|X|$ potential nests that we need to check, where $|X|$ is the number of alternatives. We illustrate our theoretical finding and our data-driven algorithm to reduce the number number of candidate nests with a simulation exercise.

The rest of the paper is organized as follows. In \autoref{basics}, we discuss setup and notation as well as define NSC and nested logit. In \autoref{nestedchoice}, we define revealed similarity and the similarity effect (\ref{similarity}) before characterizing NSC (\ref{CharNSC}) and nested logit (\ref{nested}). We discuss ways of extending our notion of similarity and the testable implications of unrestricted cross-nested logit in \autoref{Disc}. The identification of nest structure from choice data is presented in \autoref{IDnest}. We conclude with a discussion of related literature in \autoref{rellit}. We also discuss the relationship between the similarity effect and regularity in Appendix \ref{monNSC}.

\section{Nested Stochastic Choice and Nested Logit}\label{basics}

All of our models are developed in the standard stochastic choice setup. Accordingly, let $X$ be a finite set of alternatives and $\mathscr{A}$ be the collection of all nonempty subsets of $X$ (menus). Let $\mathds{R}_{+}$ ($\mathds{R}_{++}$) denote the non-negative (positive) real numbers. 

\begin{defn} A function $p:X\times\mathscr{A}\to[0, 1]$ is a \textbf{stochastic choice function} if for any $A\in\mathscr{A}$, $\sum_{a\in A}p(a, A)=1$ and $p(x, A)=0$ when $x\not\in A$. In some instances, we may write $p(B,A)=\sum_{b \in B}p(b,A)$ for $B \in\mathscr{A}$. 
\end{defn}

Throughout this paper, we assume that $p$ is \emph{positive}; i.e., $p(a, A)>0$ for all $A\in\mathscr{A}$ and $a\in A$. For notational simplicity, we write $A\cup x$ instead of $A\cup \{x\}$. 

The Luce model is the most widely-known and influential stochastic choice model. In this model,  choice probabilities are proportional to a utility index: $p(a, A)=u(a)/\sum_{b\in A} u(b)$. In his seminal paper, \cite{luce1959individual} proves that a stochastic choice function can be represented by the Luce model if and only if it satisfies IIA for every pair of alternatives.

\begin{defn}[IIA] A stochastic  choice function \textbf{$p$ satisfies IIA at $a, b\in X$} if, for any $A\in\mathscr{A}$ with $a, b\in A$,
\[\frac{p(a,A)}{p(b,A)}=\frac{p(a,\{a,b\})}{p(b,\{a,b\})}.\]
Further, we say that \textbf{$p$ satisfies IIA} if $p$ satisfies IIA at any $a, b\in X$. 
\end{defn}

It is well known that IIA may fail when similar alternatives are added to the menu, as was illustrated by Debreu's (\citeyear{debreu1960review}) famous ``red bus/blue bus'' example. 
The nested logit is the most commonly applied generalization of Luce's model and was developed to accommodate violations of IIA like the similarity effect. We now formally define nested logit and the NSC, which is a novel, non-parametric version of nested logit.

\begin{defn}\label{NSC} A stochastic choice function $p$ is a \textbf{Nested Stochastic Choice} (NSC) if there exist a partition $X_1, \ldots, X_K$ of $X$, a utility function $u:X\to \mathds{R}_{++}$, and a nest utility function $v:\bigcup^K_{i=1} 2^{X_i} \to \mathds{R}_{+}$ with $v(\emptyset)=0$ such that for any $A\in\mathscr{A}$ and $a\in A\cap X_i$,
\begin{equation}
p(a, A)=\frac{v(A\cap X_i)}{\sum_{j=1}^{K}v(A\cap X_j)}\frac{u(a)}{\sum_{b\in A\cap X_i} u(b)}.
\end{equation}
Moreover, $p$ is a \textbf{nested logit} if there exist real numbers $\eta_1, \ldots, \eta_n>0$ such that for any $A\in\mathscr{A}$ and $i\le K$,\begin{equation}
v(A\cap X_i)=\Big(\sum_{x\in A\cap X_i}u(x)\Big)^{\eta_i}.\end{equation}
Finally, we say that $p$ is a \textbf{random utility (RU) nested logit} when $\eta_i\le 1$ for each $i\le K$. 
\end{defn} 

The NSC is defined by two Luce procedures, where $v$ governs the choice over nests (e.g., transportation modes or neighborhoods) and $u$ governs the choice over the particular alternatives in the selected nest (e.g., the red/blue bus or a specific apartment).  Note that the nest value function $v$ is not necessarily related to alternative utilities $u$, which enables the NSC to capture rich behavior (see \ref{bnl}). Despite this generality, the NSC may be falsified with relatively few observations. This is because behavior is disciplined by $u$ and the partition structure of the nests, both of which are menu-independent.\footnote{It is straightforward to derive from the representation that for any $A\subseteq X$ with $|A|=3$, there is a distinct pair $a, b\in A$ such that $\frac{p(a, \{a, b\})}{p(b, \{a, b\})}=\frac{p(a, A)}{p(b, A)}$. This is because for any three alternatives, either (i) at least two belong to the same nest or (ii) all three belong to distinct nests. Hence there must exist some pair for which IIA holds and therefore NSC may be rejected with only three alternatives.}

The nested logit imposes a specific parametric relationship between $v$ and $u$. Notice that the NSC, and consequently nested logit, reduces to the Luce model when there is a single nest.  Additionally, it is simple to see from \autoref{NSC} that any NSC satisfies ``IIA within a nest $+$ IIN." These properties are often taken as the hallmark of nested logit, yet they apply to all NSC (with endogenous nests). Since NSC permits behavior that nested logit does not (three examples are discussed in section \ref{bnl}), this means that there are additional behavioral assumptions underpinning nested logit. We elucidate these assumptions in section \ref{nested}.

In nested logit, $1-\eta_i$ is usually considered a measure of correlation or substitutability between alternatives in nest $i$.  When $\eta_i < 1$ alternatives within the same nest are substitutes. Further, it is well known that when $\eta_i < 1$, the nested logit is always a RUM for any profile of utilities. 

When $\eta_i >1$, choice frequencies may (but do not always) violate \emph{regularity}, a necessary property of every random utility model (RUM) which states that the probability of choosing some alternative must never increase as the menu expands.\footnote{There is some experimental evidence that violations of regularity occur when similar alternatives are introduced, in-line with convex aggregation. This has been observed in humans \citep*{rieskamp2006extending} and animals \citep*{shafir2002}.  Recently, \cite{Batley2016} estimated nested logit parameters to check for consistency with regularity (and various forms of stochastic transitivity) and found that parameter values consistent with violations of regularity provided the best fit.} Behaviorally, we can interpret $\eta_i >1$ as indicating complementarities among alternatives.\footnote{Relatedly, \cite*{ortoleva2019deliberate} suggests that regularity may be violated due to deliberate randomization between complementary lotteries.} In certain contexts, we may even anticipate $\eta_i >1$. For instance, \cite{Foubert2007} study the effects of ``product bundling'' and find a parameter greater than one, consistent with the effectiveness of bundling.\footnote{Indeed, in regard to whether $\eta_i$ should be less than or greater than one, \cite*{train1987demand} state that ``...the value of [$\eta_i$] indicates relative substitutability within and among nests, and neither possibility can be rule out \emph{a priori}.''} As NSC allows for violations of regularity, formally defined below, the NSC is not nested by RUM.  
\begin{defn}[Regularity]\label{reg} A stochastic choice function $p$ satisfies \textbf{regularity} if $p(x, A\cup y) \le p(x, A)$ for any $A \in \mathscr{A}$ and $x, y\in X$ with $x\in A$.
\end{defn}

Finally, note that when $\eta_i=1$ the nest value is exactly proportional to the sum of Luce utilities. If this proportionality happens for every nest $i$, the model reduces to Luce. In fact, in this case the Luce model has multiple NSC (and nested logit) representations with different partitions and identification of a unique nest structure is not possible. To rule this out, we say that an NSC $p$ with $(v, u, \{X_i\}^K_{i=1})$ is \emph{nondegenerate} if there is at most one nest where this proportionality occurs: there is at most one $i\le K$ such that for some $a\in X_i$,
\[\frac{\sum_{x\in A_i}u(x)}{v(A_i)}= \frac{u(a)}{v(a)}\text{ for any }A_i\subseteq X_i\text{ with }a\in A_i.\]This restriction rules out cases when $v$ is always proportional to the sum of Luce utilities. Further, the Luce model has a unique nondegenerate NSC representation in which there is a single nest, $X_1=X$.\footnote{Indeed, if there are $i, j$ such that
$\frac{\sum_{x\in A_i}u(x)}{v(A_i)}=\frac{u(a)}{v(a)}$ and $\frac{\sum_{y\in A_j}u(y)}{v(A_j)}=\frac{u(b)}{v(b)}$ for any $A_i\subseteq X_i, A_j\subseteq X_j$, $a\in A_i$, and $b\in A_j$, then $X_i\cup X_j$ should be treated as one nest. It is also not difficult to show that the set of degenerate NSC is measure zero with respect to the set of all NSC.} This nondegeneracy condition will be crucial for the unique identification of nests, but it is not required for the sufficiency part of  our characterization (Theorem 1).

\section{Behavioral Characterizations}\label{nestedchoice}

\subsection{The Similarity Effect and Revealed Similarity}\label{similarity}

Following the intuition behind the similarity effect, we introduce a notion of revealed similarity that will be essential to our analysis. Consider the effect of adding an alternative $x$ on the probabilities of choosing $a$ and $b$ from some menu $A$.  Adding $x$ might decrease these probabilities as it competes with $a$ and $b$.  If $x$ disproportionately affects one of them, say $a$ relative to $b$, this reveals that $a$ and $b$ are dissimilar. Conversely, if $x$ takes away from $a$ and $b$ proportionally, then this reveals that $a$ and $b$ are similar (symmetric) in menu $A$. We take a conservative approach and call two alternatives similar only if this is true for any menu $A$ (they are symmetric to all other alternatives). 

\begin{defn}For any alternatives $a, b\in X$, we say that $a$ and $b$ are \textbf{revealed categorically similar}, denoted by $a\sim_p b$, if $p$ satisfies IIA at $a, b$. We also say that $a$ and $b$ are \textbf{revealed categorically dissimilar} if $a\not\sim_p b$.
\end{defn}

The similarity effect is often defined using an exogenously given similarity relation. With our formal notion of revealed similarity, we may establish a fully behavioral definition of the similarity effect given $\sim_p$.\footnote{There are other ways to define similarity and other properties one might demand of a similarity relation. For instance, \cite{rubinstein1988similarity} studies similarity and choice under risk. In his paper, the similarity relation is reflexive and symmetric, like ours, but also must satisfy a form of betweenness with respect to objective attributes and violates transitivity, unlike ours. \cite{natenzon2018random} introduces a notion of comparative similarity based on absolute rather than relative choice frequencies. This similarity notion is not related to IIA and will not induce a partition structure on the set of alternatives.} Recalling the red bus/blue bus example, adding the blue bus had a larger effect on the red bus than on the train. Hence the blue bus ``takes more away'' from similar alternatives than from dissimilar alternatives.

\begin{defn} A stochastic choice function $p$ exhibits the \textbf{similarity effect} if for all $A \in \mathscr{A}$, $a, b\in A$, and $x\not\in A$, 
\[\text{ if }a\sim_p x\text{ and }b\not\sim_p x,\text{ then }\,\, \frac{p(a,A\cup x)}{p(b,A\cup x)} < \frac{p(a,A)}{p(b,A)} .\]  
\end{defn}
Intuitively, $x$ hurts the revealed categorically similar alternative $a$ more than a revealed categorically dissimilar alternative $b$. Since $a$ and $x$ are closer substitutes, $x$ competes more with $a$ than it does with $b$.

\subsection{Characterization of NSC}\label{CharNSC}

In order to introduce our axiom, we consider the general effect of introducing an alternative $x$ on the choice probabilities of two alternatives $a$ and $b$. IIA requires that the relative probability between $a$ and $b$ is \emph{always} independent of $x$. However, as the similarity effect suggests, (asymmetric) similarity between $x$ and $a, b$ might affect the relative probabilities. We therefore divide IIA into two logically independent axioms based on the revealed similarity between $x$ and $a, b$.

\begin{axm}[Independence of Symmetric Alternatives]\label{ISA} For any $A\in\mathscr{A}$, $a, b\in A$, and $x \notin A$,

\begin{equation*}
\begin{aligned}[c]
&\,a\sim_p x\text{ and }b\sim_p x\\
&\,\,\,\,\,\,\,\,\,\,\,\,\,\,\,\,\text{ or }\\
&\, a\not\sim_p x\text{ and }b\not\sim_p x
\end{aligned}
\qquad\Longrightarrow\qquad
\begin{aligned}[c]
\frac{p(a, A)}{p(b, A)}=\frac{p(a, A\cup x)}{p(b, A\cup x)}.
\end{aligned}
\end{equation*}
\end{axm}

\begin{axm}[Independence of Asymmetric Alternatives]\label{IAA} For any $A\in\mathscr{A}$, $a, b\in A$, and $x \notin A$,
\begin{equation*}
\begin{aligned}[c]
a\sim_p x\text{ and }b\not\sim_p x
\end{aligned}
\qquad\Longrightarrow\qquad
\begin{aligned}[c]
\frac{p(a, A)}{p(b, A)}=\frac{p(a, A\cup x)}{p(b, A\cup x)}. 
\end{aligned}
\end{equation*}
\end{axm}

The first axiom requires that the relative probability between $a$ and $b$ is independent of $x$ when $a$ and $b$ are revealed categorically (dis)similar to $x$. Intuitively, if $a$ and $b$ are symmetric from the perspective of $x$, then $x$ should symmetrically influence $a$ and $b$; it does not affect the relative probability between $a$ and $b$. The similarity effect directly contradicts the second axiom yet is unrelated to the first axiom.

\begin{obs} \emph{Luce's IIA is equivalent to the joint assumption of \textbf{\nameref{ISA}} and \textbf{\nameref{IAA}}. Moreover, the two axioms are independent.}
\end{obs}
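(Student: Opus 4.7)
The plan is to dispatch the observation in three pieces: (a) IIA $\Rightarrow$ ISA $\land$ IAA; (b) ISA $\land$ IAA $\Rightarrow$ IIA; and (c) give two explicit examples showing logical independence.

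For (a), I would observe that if $p$ satisfies IIA at every pair, then $a\sim_p b$ for all $a,b\in X$, so the premise of IAA (``$a\sim_p x$ and $b\not\sim_p x$'') is never satisfied and IAA holds vacuously. ISA's conclusion is just the defining equation of IIA for the pair $a,b$ applied twice (once at $A$ and once at $A\cup x$), so it is immediate. For (b), the key point is that, given any $a,b\in A$ and any $x\notin A$, exactly one of the following holds: ($i$) $a\sim_p x$ and $b\sim_p x$; ($ii$) $a\not\sim_p x$ and $b\not\sim_p x$; ($iii$) $\{a\sim_p x,\,b\not\sim_p x\}$ or the symmetric variant. Cases ($i$) and ($ii$) are covered by ISA and case ($iii$) by IAA (applied to $(a,b)$ or $(b,a)$). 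Hence $\frac{p(a,A)}{p(b,A)}=\frac{p(a,A\cup x)}{p(b,A\cup x)}$ for every single-element addition $x$. Starting from $A=\{a,b\}$ and adding the remaining elements of any target menu one at a time yields IIA.

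For (c), I would provide two small examples on $|X|=3$. For ISA but not IAA, I would use the red bus/blue bus/train setup: set $p(RB,\{RB,T\})=p(T,\{RB,T\})=p(BB,\{BB,T\})=p(T,\{BB,T\})=p(RB,\{RB,BB\})=p(BB,\{RB,BB\})=1/2$ and $p(RB,\{RB,BB,T\})=p(BB,\{RB,BB,T\})=1/4$, $p(T,\{RB,BB,T\})=1/2$. One verifies $RB\sim_p BB$ but $RB\not\sim_p T$ and $BB\not\sim_p T$, so the only ISA instance with a triple has both buses dissimilar to a fixed third alternative, and ISA holds; IAA fails since adding $BB$ changes $p(RB)/p(T)$ from $1$ to $1/2$. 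For IAA but not ISA, I would specify $p$ on $\{a,b,c\}$ with $p(\cdot,\{a,b\})=p(\cdot,\{a,c\})=p(\cdot,\{b,c\})=(1/2,1/2)$ but $p(\cdot,\{a,b,c\})=(1/2,3/10,1/5)$. Every binary ratio changes on adding the third alternative, so no pair is revealed categorically similar; hence the premise of IAA is never met and IAA holds vacuously, while ISA is violated (e.g., for the pair $a,b$ with $x=c$: both dissimilar to $c$, yet ratios differ).

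There is no serious obstacle here; the only subtlety worth spelling out is the exhaustive case split in (b), which makes transparent that ISA and IAA genuinely partition the work of IIA according to the similarity pattern between the added alternative and the pair under comparison.
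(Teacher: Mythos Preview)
Your proposal is correct. The paper states this observation without proof, so there is no ``paper's own proof'' to compare against; your argument supplies exactly the missing details. Parts (a) and (b) are routine and correctly handled: the exhaustive case split in (b) is the natural way to see that ISA and IAA together cover every possible similarity configuration of the added alternative relative to the pair. Your two three-alternative examples for independence are well chosen and verify cleanly: the red bus/blue bus example is the canonical NSC that violates IAA while satisfying ISA, and your second example exploits the fact that when \emph{no} pair is revealed similar, IAA is vacuous while the ``both dissimilar'' branch of ISA still has bite. One minor stylistic note: in the first example you might remark that this $p$ is in fact a nondegenerate NSC (two nests $\{RB,BB\}$ and $\{T\}$ with $v(\{RB,BB\})=v(\{T\})$), which connects the example directly to \autoref{NSCthm} and makes the verification of ISA immediate rather than case-by-case.
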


We show in \autoref{NSCthm} that NSC is characterized by \nameref{ISA}, and thus NSC is precisely the generalization of Luce's model that accommodates the similarity effect.

\begin{thm}\label{NSCthm} Let $p$ be a stochastic choice function with at least three alternatives that are dissimilar to each other. Then $p$ satisfies \textbf{\nameref{ISA}} if and only if $p$ is a \textbf{nondegenerate NSC}.\end{thm}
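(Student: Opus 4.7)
The plan is to argue the two directions separately. For necessity, suppose $p$ is a nondegenerate NSC with partition $\{X_i\}_{i=1}^K$ and functions $u, v$. I would first establish that $a \sim_p b$ if and only if $a, b$ lie in a common nest. The ``if'' direction is immediate: within a nest the representation reduces to a Luce rule, so $p(a,A)/p(b,A) = u(a)/u(b)$ is menu-independent. For the ``only if'' direction, I argue by contrapositive: if $a \in X_i$, $b \in X_j$ with $i\neq j$, assuming $a \sim_p b$ and evaluating the ratio $p(a,A)/p(b,A)$ first at $A=\{a,b\}$ and then at enlargements by single alternatives from $X_i$ and from $X_j$, I can force the proportionality $v(B)/\sum_{c\in B}u(c) = \mathrm{const}$ on both nests, contradicting nondegeneracy. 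Once $\sim_p$ is identified with the ``same-nest'' relation, ISA reduces to a short case analysis on the nests of $a, b, x$: either all three share a nest (where Luce's IIA does everything) or $x$ lies in a nest disjoint from $\{X_i, X_j\}$ containing $a, b$, in which case $A\cap X_i$ and $A\cap X_j$ are unchanged when $x$ is added and the cross-nest ratio is preserved.

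For sufficiency, assume $p$ satisfies ISA. The first critical step is to show $\sim_p$ is an equivalence relation; reflexivity and symmetry are immediate, so the nontrivial content is transitivity. Suppose $a \sim_p b$ and $b \sim_p c$, and fix any $A \supseteq\{a,c\}$. If $b \in A$, I apply the two IIA conditions at the pairs $(a,b)$ and $(b,c)$ in $A$ and divide to express $p(a,A)/p(c,A)$ as a constant $\kappa$ depending only on $a,b,c$. If $b \notin A$, I instead apply ISA to the triple $(a, c, x=b)$: since $a \sim_p b$ and $c \sim_p b$ symmetrically, ISA yields $p(a,A)/p(c,A) = p(a, A\cup b)/p(c, A\cup b)$, reducing to the previous case. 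Taking $A = \{a,c\}$ gives $a \sim_p c$. Transitivity ensures that $\sim_p$ partitions $X$ into nests $X_1, \ldots, X_K$, and the hypothesis of three pairwise dissimilar alternatives gives $K \geq 3$.

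Since IIA holds on each nest $X_i$, a Luce-type argument (applied nest by nest) produces $u:X \to \mathds{R}_{++}$ such that for $a \in A\cap X_i$,
\[
p(a,A) = u(a)\,\lambda_i(A), \qquad \lambda_i(A) := \frac{p(A\cap X_i, A)}{\sum_{c\in A\cap X_i}u(c)}.
\]
To build $v$, I next show that $\lambda_i(A)/\lambda_j(A)$ depends only on $B_i := A\cap X_i$ and $B_j := A\cap X_j$. Fixing $a\in B_i$, $b\in B_j$, and taking $x$ in any third nest $X_k$ (available because $K\geq 3$), ISA applies since both $a \not\sim_p x$ and $b\not\sim_p x$; hence adding or removing such an $x$ leaves $p(a,A)/p(b,A) = u(a)\lambda_i(A)/(u(b)\lambda_j(A))$ unchanged. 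Define $g_{ij}(B_i, B_j) := p(B_i, A)/p(B_j, A)$ for any such $A$. A direct computation on $A = B_i\cup B_j\cup B_k$ verifies the Luce cycle identity $g_{ij}(B_i,B_j)\,g_{jk}(B_j,B_k) = g_{ik}(B_i,B_k)$, so fixing a reference subset $B_0$ in a reference nest and setting $v(B) := g_{i, i_0}(B, B_0)$ (with a second reference handling within-reference-nest subsets) yields a consistent $v$ on $\bigcup_i 2^{X_i}$ with $v(\emptyset)=0$ and $p(B_i, A)/p(B_j, A) = v(B_i)/v(B_j)$. Combined with the within-nest Luce formula this yields the NSC representation. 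Nondegeneracy of the constructed representation follows from the partition structure: if two nests were simultaneously degenerate, the cross-nest ratio $p(a, A)/p(b, A)$ for $a, b$ in those nests would reduce to a menu-independent constant, forcing $a \sim_p b$ and collapsing the two nests, a contradiction.

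The main obstacle is the second half of sufficiency: showing that the cross-nest weights admit the multiplicative decomposition $v(B_i)/v(B_j)$ with a single $v$ defined consistently on subsets of all nests. The cycle condition is the crux, and its derivation relies on applying ISA with an alternative drawn from a third nest, which is exactly why three pairwise dissimilar alternatives are assumed. Transitivity of $\sim_p$ is also delicate but becomes brief once one decomposes by whether $b$ already belongs to the ambient menu.
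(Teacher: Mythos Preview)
Your proposal is correct and follows essentially the same architecture as the paper's proof: transitivity of $\sim_p$ via ISA's first clause, Luce within each equivalence class, invariance of cross-nest ratios under addition or removal of third-nest alternatives via ISA's second clause, and then a reference-anchored construction of $v$ whose consistency is guaranteed by a Luce cycle condition. The paper carries out the $v$-construction concretely by fixing the full reference sets $X_1$ and $X_2$ and verifying a chain of ``Facts'' (Facts~1--5), whereas you package the same content as the single identity $g_{ij}\,g_{jk}=g_{ik}$; these are the same argument, with your formulation being somewhat more transparent about why the construction closes up. The nondegeneracy step and the necessity direction are likewise the paper's arguments (your contrapositive for ``different nests $\Rightarrow$ dissimilar'' is exactly the paper's Step~9 run in reverse). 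The one place to be careful is your remark that ``a second reference handles within-reference-nest subsets'': to check that $v$ on $2^{X_1}$ is compatible with the values already assigned on $2^{X_i}$ for $i\geq 2$, you must again route through a third nest (the paper does this via $X_3$ in Fact~3), so the $K\geq 3$ hypothesis is used a second time there, not only in establishing the cycle identity.
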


\autoref{NSCthm} characterizes NSC when there are at least three nests; $a\not\sim_p b$, $b\not\sim_p c$, and $a\not\sim_p c$ for distinct alternatives.\footnote{When the assumption is violated (i.e., there are only two nests), we can still obtain the characterization result by modifying \nameref*{ISA}. In particular, we can impose a modification of Luce's (1959) Product Rule instead of the second part of \nameref*{ISA}. It is well known that IIA is equivalent to the Product Rule for menus with two alternatives (see Luce (1959)).}
While the proof is in the appendix, we discuss briefly how our axiom characterizes NSC. It should be apparent from the definition that $\sim_p$ is reflexive and symmetric. It turns out that the first part of \nameref{ISA} ($a\sim_p x$ and $b\sim_p x$) implies that $\sim_p$ is transitive.\footnote{More general notions of similarity may be intransitive (e.g., due to context dependence). Since we take a conservative definition of similarity, we find transitivity quite reasonable in our setting.  That is, by requiring  $\frac{p(a, A)}{p(b, A)}=\frac{p(a, \{a, b\})}{p(b, \{a, b\})}$ for any menu, we eliminate much of the context dependence. Further, transitivity of this revealed similarity relation is implicitly assumed in nested logit. See \autoref{Disc}.} Hence, transitivity of $\sim_p$ immediately generates a partition $X_1, \ldots, X_k$ of $X$ (or disjoint nests) such that any two alternatives in $X_i$ are revealed categorically similar.\footnote{Transitivity of $\sim_p$ is imposed in \cite{li2016associationistic}, which will be carefully discussed in \autoref{rellit}. A weak version of transitivity of $\sim_p$ is also used in \cite{echenique2018perception}.} However, by itself it imposes no particular structure on choice, nor does it establish a relationship between the partition and choices (except that IIA is satisfied within each nest).  The essential structure of NSC is captured by the second part of \nameref{ISA} ($a\nsim_p x$ and $b\nsim_p x$). Therefore, almost all of the proof is devoted to showing that the second part of \nameref{ISA} implies a nested choice structure consistent with this partition.

Lastly, we state the uniqueness properties of the NSC representation. The following proposition shows that the nest structure is unique, the nest utility $v$ is unique up to a positive scalar, and Luce's utility $u$ is unique up to a positive scalar at each nest.   

\begin{prps}[Uniqueness]\label{NSCuniqueprop} Suppose $p$ is a nondegenerate NSC with respect to $(v, u, \{X_i\}^K_{i=1})$ as well as to $(v', u', \{X'_i\}^{K'}_{i=1})$. Then $K=K'$ and $\{X_i\}^{K}_{i=1}=\{X'_i\}^{K'}_{i=1}$. Moreover, there is $(\alpha_1, \ldots ,\alpha_K, \delta)\in\mathds{R}^{K+1}_{++}$ such that $v'=\delta\, v$ and for any $x_i\in X_i$, $u'(x_i)=\alpha_i\,u(x_i)$.\end{prps}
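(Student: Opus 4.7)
The plan is to establish three things in sequence: that the partition $\{X_i\}$ is determined solely by the revealed similarity relation $\sim_p$ and is thus the same in both representations; that $u$ is then pinned down up to a nest-specific positive scalar; and that $v$ is pinned down up to a single global positive scalar. Because $\sim_p$ depends only on $p$, the partition uniqueness step is the conceptual heart of the argument; once the partition is fixed, the $u$ and $v$ uniqueness statements follow by comparing the NSC formula across the two representations on small, carefully chosen menus.

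The main obstacle is showing that for any NSC representation $(v,u,\{X_i\})$, $a\sim_p b$ if and only if $a$ and $b$ share a nest. The ``if'' direction is immediate because within a nest the $v$-factor cancels in $p(a,A)/p(b,A)$ leaving the constant ratio $u(a)/u(b)$. For the ``only if'' direction, I will suppose $a\in X_i$ and $b\in X_j$ with $i\ne j$, compute $p(a,A)/p(b,A)$ for a menu $A$ whose nest-intersections are $A_i\ni a$ and $A_j\ni b$, compare to the doubleton ratio $v(\{a\})/v(\{b\})$, and rearrange. The equality forced by IIA will factor into two identities in the independent variables $A_i$ and $A_j$, each of which is the Luce-proportionality condition
\[\frac{v(A_i)}{\sum_{c\in A_i}u(c)}=\frac{v(\{a\})}{u(a)}\qquad\text{and}\qquad\frac{v(A_j)}{\sum_{c\in A_j}u(c)}=\frac{v(\{b\})}{u(b)}.\]
Since this would require two distinct nests to each satisfy the proportionality that nondegeneracy permits in at most one nest, we will reach a contradiction. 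Hence the nests coincide with the $\sim_p$-equivalence classes, so $K=K'$ and $\{X_i\}=\{X'_i\}$.

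With the partition fixed, the uniqueness of $u$ on each $X_i$ follows immediately by evaluating both representations at the doubleton $\{a,b\}$ for $a,b\in X_i$: this gives $u'(a)/u'(b)=u(a)/u(b)$, so $u'(\cdot)/u(\cdot)$ is a common constant $\alpha_i>0$ on $X_i$. For the uniqueness of $v$, I will compare representations on menus of the form $A_i\cup\{b\}$ with $A_i\subseteq X_i$ and $b\in X_j$ for some $j\ne i$. Writing the ratio $p(a,A_i\cup\{b\})/p(b,A_i\cup\{b\})$ under both representations, the $\alpha_i$ factors cancel in the $u$-terms, and the identity will reduce to
\[\frac{v'(A_i)}{v(A_i)}=\frac{v'(\{b\})}{v(\{b\})}.\]
Thus $v'/v$ is constant on subsets of $X_i$, and this constant agrees with $v'(\{b\})/v(\{b\})$ for $b$ in any other nest. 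Varying the choice of nests yields a single global scalar $\delta>0$ with $v'=\delta v$. The boundary case $K=1$ reduces to Luce, where the partition is trivial, $u$ is unique up to scalar, and $v$ cancels from the formula so the conclusion holds vacuously for any $\delta>0$.
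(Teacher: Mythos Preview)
Your proposal is correct and follows essentially the same approach as the paper: the core step---showing that in any nondegenerate NSC representation $a\sim_p b$ if and only if $a,b$ share a nest, by deriving the Luce-proportionality condition in two distinct nests from a cross-nest IIA identity and invoking nondegeneracy---is exactly the argument the paper gives (it is also Step~1 of the necessity direction of Theorem~1). The paper's own proof of the proposition only treats partition uniqueness explicitly and defers the $u$ and $v$ parts to ``standard arguments,'' which you correctly supply.

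One small correction on your boundary remark: when $K=1$ the conclusion $v'=\delta v$ is not vacuous but in fact need not hold (e.g., $v\equiv 1$ and $v'(A)=|A|$ yield the same Luce $p$ yet are not scalar multiples), so the $v$-uniqueness claim should be read under $K\ge 2$; this is a quirk of the proposition's statement rather than a flaw in your main argument.
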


\subsection{Characterizations of Nested Logit}\label{nested}

The most well-known special case of NSC is nested logit, which was specifically created to handle the similarity effect. The difference between NSC and nested logit is that the latter imposes a special structure on the nest values, $v(A\cap X_i)=\big(\sum_{a\in A\cap X_i} u(a)\big)^{\eta_i}$ with $\eta_i>0$, which has non-trivial behavioral consequences.

Despite its widespread use, nested logit has not been subject to careful axiomatic analysis in the way that other choice models have been. We provide two characterizations of nested logit that clarify the behavioral assumptions embedded in this model. The first characterization uses one additional axiom that imposes a menu independence condition on certain probability ratios.

\begin{axm}[Log Ratio Invariance]\label{LRI} For any $a, x\in X$ and $A, B\in\mathscr{A}$ such that $a\sim_p a'$ for all $a' \in A\cup B$,  
\[\frac{\log\Big(\frac{p(A,\, A\,\cup\, x)}{p(x,\, A\,\cup\, x)}\big{/}\frac{p(a,\, \{a,\, x\})}{p(x,\, \{a,\, x\})}\Big)}{\log\Big(\frac{p(A, \,A\,\cup\, a)}{p(a, \,A\,\cup\, a)}\Big)}=\frac{\log\Big(\frac{p(B,\, B\cup\, x)}{p(x,\, B\cup\, x)}\big{/}\frac{p(a,\, \{a,\, x\})}{p(x,\, \{a,\, x\})}\Big)}{\log\Big(\frac{p(B, \,B\,\cup\, a)}{p(a, \,B\,\cup\, a)}\Big)}.\]
\end{axm}

Log Ratio Invariance requires that the ratio $\log\!\big(\frac{p(A,\, A\,\cup\, x)}{p(x,\, A\,\cup\, x)}\big{/}\frac{p(a,\, \{a,\, x\})}{p(x,\, \{a,\, x\})}\big)$ and $\log\!\big(\frac{p(A, \,A\,\cup\, a)}{p(a, \,A\,\cup\, a)}\big)$ are proportional. 

\medskip
\begin{thm}\label{NL} A nondegenerate NSC satisfies \textbf{\nameref{LRI}} if and only if it is an \textbf{nested logit}.
\end{thm}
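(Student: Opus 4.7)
The plan is to translate the axiom LRI into an algebraic identity on the NSC representation $(v, u, \{X_i\}_{i=1}^K)$ provided by Theorem~\ref{NSCthm}, and then solve that identity. The single-nest case $K=1$ reduces $p$ to Luce, which is the nested logit with $\eta_1=1$ (and LRI is automatic), so I focus on $K \ge 2$. Since $a \sim_p a'$ for every $a' \in A \cup B$, both $A$ and $B$ lie in the nest $X_i$ containing $a$. A direct calculation from the NSC formula shows that for $x \in X_j$ with $j \neq i$,
\[
\frac{p(A, A \cup x)}{p(x, A \cup x)} = \frac{v(A)}{v(\{x\})}, \quad \frac{p(a, \{a, x\})}{p(x, \{a, x\})} = \frac{v(\{a\})}{v(\{x\})}, \quad \frac{p(A, A \cup a)}{p(a, A \cup a)} = \frac{u(A)}{u(a)},
\]
where $u(A) := \sum_{b \in A} u(b)$ (the third ratio is the same whether or not $a \in A$). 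Hence LRI reduces to
\[
\frac{\log\bigl(v(A)/v(\{a\})\bigr)}{\log\bigl(u(A)/u(a)\bigr)} \;=\; \frac{\log\bigl(v(B)/v(\{a\})\bigr)}{\log\bigl(u(B)/u(a)\bigr)}.
\]
For $x \in X_i$ the same calculation collapses both sides to $1$, so the only binding instances of LRI are those with $x$ outside the nest of $a$.

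The ``if'' direction is then immediate: for a nested logit, $v(A)=u(A)^{\eta_i}$ on $X_i$ makes both sides of the reduced identity equal to $\eta_i$. For the ``only if'' direction, the reduced identity says the ratio $\log(v(A)/v(\{a\}))/\log(u(A)/u(a))$ is independent of $A \subseteq X_i$ (wherever the denominator is nonzero), so there exists $\eta_i^a > 0$ with
\[
v(A) \;=\; c_i^a \, u(A)^{\eta_i^a}, \qquad c_i^a := v(\{a\})/u(a)^{\eta_i^a},
\]
for every $A \subseteq X_i$. Next I would show $(\eta_i^a, c_i^a)$ depends only on the nest: equating the two expressions for $v(A)$ obtained from distinct $a, a' \in X_i$ and taking logs yields an affine identity in $\log u(A)$, and since $u(A)$ takes at least two distinct values as $A$ ranges over subsets of $X_i$ (when $|X_i| \ge 2$), the coefficients must match, giving common values $\eta_i$ and $c_i$.

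To finish, I would absorb $c_i$ using Proposition~\ref{NSCuniqueprop}: setting $u'(x) := c_i^{1/\eta_i}\, u(x)$ for $x \in X_i$ is an admissible rescaling, so $(v, u', \{X_i\})$ still represents $p$, and by construction $v(A) = u'(A)^{\eta_i}$ on each nest --- the nested logit form. Singleton nests pose no issue, since $v(\{a\}) = u'(a)^{\eta_i}$ can be arranged for any prescribed $\eta_i$ by rescaling $u(a)$. The main obstacle I anticipate is the bookkeeping of constants: being careful about when $\log(u(A)/u(a))=0$ so the reduced identity is well-posed, and verifying that the per-nest rescalings $\alpha_i = c_i^{1/\eta_i}$ are compatible with the single global scalar $\delta$ allowed for $v$ in Proposition~\ref{NSCuniqueprop}. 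The conceptual core, however, is the elementary fact that the log ratio being constant in $A$ is exactly equivalent to $v(A)$ being a power function of $u(A)$.
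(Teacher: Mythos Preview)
Your proposal is correct and follows essentially the same approach as the paper: reduce LRI, via the NSC representation, to the statement that $\log\bigl(v(A)/v(\{a\})\bigr)\big/\log\bigl(u(A)/u(a)\bigr)$ is independent of $A\subseteq X_i$, conclude that $v(A)=c_i\,u(A)^{\eta_i}$, and absorb $c_i$ by the per-nest rescaling $u'(x)=c_i^{1/\eta_i}u(x)$. The only difference is organizational: the paper fixes a single reference element $a^*_i$ in each nest from the outset and \emph{defines} $\eta_i$ as the value of that ratio at $A=X_i$, so the step where you argue $\eta_i^a$ is independent of $a$ never arises (and your concern about reconciling per-nest rescalings with the global scalar $\delta$ on $v$ is moot, since $v$ is held fixed and only $u$ is rescaled nest-by-nest, which the NSC formula manifestly permits).
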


The explicit use of a functional form in Log Ratio Invariance allows us to establish testable implications for the functional form assumed in nested logit even with finite data. 

In the rest of this section, we discuss an alternative axiom that captures the essential features of nested logit without an explicit functional form and shows that is characterizes nested logit under a richer domain assumption. To state this axiom, first notice that an important behavioral property of nested logit, beyond its treatment of similarity (ISA), is that the probability of choosing a nest depends on the total attractiveness of the nest: $v(A\cap X_i)$ is increasing in $\sum_{a\in A\cap X_i} u(a)$. 

This behavior is characterized by a simple monotonicity property imposed among similar alternatives. Suppose that $A, A'\in\mathscr{A}$ are menus such that all alternatives in $A\cup A'$ are revealed similar. When $A$ is more attractive than $A'$, then alternatives in $A$ are always chosen more frequently than alternatives in $A'$ when they are compared with any other alternative $x$. More formally, $p(A, A\cup A')\ge p(A', A\cup A')$ implies $p(A, A\cup x) \ge p(A', A'\cup x)$ for any $x\not\in A\cup A'$. This can be viewed as an additional form of context independence, as it requires that there is no interaction between alternatives in $A\cup A'$ and $x$ which might create a choice frequency reversal.

Because nested logit involves a power function, it satisfies a stronger version of the monotonicity property above. In particular, the monotonicity property holds even in relative terms: if $A$ is relatively more attractive than $A'$ when they are compared to any other menus, $B$ and $B'$, then alternatives in $A$ will be chosen relatively more frequently than $A'$ when they are chosen against $x$. 
 
\begin{axm}[Relative Likelihood Independence]\label{RLI} For any $x\in X$ and $A, B, A', B'\in\mathscr{A}$ such that $a\sim_p a'$ for any $a, a'\in A\cup B\cup A'\cup B'$,
\[\frac{p(A, A\cup B)}{p(B, A\cup B)}\ge \frac{p(A', A'\cup B')}{p(B', A'\cup B')} \implies \frac{p(A, A\cup x)}{p(x, A\cup x)}\Big{/}\frac{p(B, B\cup x)}{p(x, B\cup x)}\ge \frac{p(A', A'\cup x)}{p(x, A'\cup x)}\Big{/}\frac{p(B', B'\cup x)}{p(x, B'\cup x)}.\]
\end{axm}

In our next result, we prove that Relative Likelihood Independence is a necessary condition for nested logit. Moreover, it implies that $v(A\cap X_i)$ is increasing in $\sum_{a\in A\cap X_i} u(a)$.

\begin{prps}\label{proposition2} Any nested logit satisfies \textbf{\nameref{RLI}}. Conversely, if a nondegenerate NSC with $(v, u, \{X_i\}^K_{i=1})$ satisfies Relative Likelihood Independence, then for each $i\le K$ there is an increasing function $f_i:\mathds{R}_{++}\to\mathds{R}_{++}$ such that $v(A)=f_i\big(\sum_{x\in A}u(x)\big)$ for any $A\subseteq X_i$. 
\end{prps}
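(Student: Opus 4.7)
Let $p$ be a nested logit with utilities $u$ and parameters $(\eta_i)$, and write $U(S) := \sum_{s \in S} u(s)$. For $A, B \subseteq X_i$ and any $x \in X_j$ with $j \neq i$, I first compute the two ratios that appear in \nameref{RLI}:
\[
\frac{p(A, A\cup x)}{p(x, A\cup x)} = \frac{(U(A))^{\eta_i}}{v(\{x\})}, \qquad \frac{p(A, A\cup B)}{p(B, A\cup B)} = \frac{U(A)}{U(B)}.
\]
The first identity holds because $A\cup x$ contains exactly two nests, with nest weights $(U(A))^{\eta_i}$ and $v(\{x\})$; the second because $A\cup B$ lies within a single nest, so IIA applies. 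Substituting into \nameref{RLI}, the conclusion becomes $\big(U(A)/U(B)\big)^{\eta_i} \geq \big(U(A')/U(B')\big)^{\eta_i}$, which follows from $U(A)/U(B) \geq U(A')/U(B')$ by monotonicity of $t \mapsto t^{\eta_i}$ on $\mathds{R}_{++}$. The case $x \in X_i$ is easier: every menu of the form $C \cup x$ lies in a single nest and all four ratios collapse to $U(C)/u(x)$, so \nameref{RLI} holds trivially.

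\textbf{Sufficiency.} Fix a nest $X_i$; I may assume $K \geq 2$ (otherwise the NSC is Luce and any monotone interpolation of $v$ on the realized values of $U$ defines a suitable $f_1$). Pick any $x$ in a different nest $X_j$. Analogous direct computations in the NSC representation give, for $A, B \subseteq X_i$,
\[
\frac{p(A, A\cup x)}{p(x, A\cup x)} = \frac{v(A)}{v(\{x\})}, \qquad \frac{p(A, A\cup B)}{p(B, A\cup B)} = \frac{U(A)}{U(B)}.
\]
Therefore \nameref{RLI} reduces to the purely structural implication, for all $A, B, A', B' \subseteq X_i$,
\[
\frac{U(A)}{U(B)} \geq \frac{U(A')}{U(B')} \quad \Longrightarrow \quad \frac{v(A)}{v(B)} \geq \frac{v(A')}{v(B')}.
\]
Specializing $B = B' = \{a_0\}$ for a fixed $a_0 \in X_i$, this becomes $U(A) \geq U(A') \Rightarrow v(A) \geq v(A')$; applying the same implication with $A$ and $A'$ swapped gives the converse, so $U(A)=U(A')$ forces $v(A)=v(A')$. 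Hence $f_i(U(A)) := v(A)$ is well defined and nondecreasing on the finite set $\{U(A) : \emptyset \neq A \subseteq X_i\}$, and I complete the construction by any strictly monotone extension of $f_i$ to $\mathds{R}_{++}$.

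\textbf{Main obstacle.} The crux is recognizing that the intricate-looking \nameref{RLI} is, within an NSC, equivalent to a single monotonicity condition tying $v$ to $U$. The essential trick is to deploy a cross-nest probe $x \in X_j$ whose presence decouples the nest-value ratio $v(A)/v(\{x\})$ from the within-nest Luce probabilities; without such a probe, within-nest menus hide $v$ entirely. A minor additional subtlety is that \nameref{RLI}, stated with weak inequalities, only yields weak monotonicity of $f_i$ on the realized domain, so strictness on all of $\mathds{R}_{++}$ must be produced via the extension rather than extracted from the axiom itself.
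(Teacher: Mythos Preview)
Your proof is correct and follows essentially the same approach as the paper: compute the relevant ratios in the NSC/nested logit representation, then specialize the four-menu axiom (you take $B=B'=\{a_0\}$, the paper takes $A'=B'$) to extract the monotonicity $U(A)\ge U(B)\Rightarrow v(A)\ge v(B)$. Your explicit remark that the axiom yields only weak monotonicity on the realized finite domain, with strictness obtained by extension, is a nice point the paper glosses over.
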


While \nameref{RLI} is not sufficient for nested logit, this is essentially due to the limitations of finite data. Indeed, we show that Relative Likelihood Independence characterizes nested logit when the following richness condition is satisfied. 

\begin{axm}[Richness]\label{richness} For any $a\in X$ and $\rho\in (0, 1)$, there is $b\in X$ such that $a\sim_p b$ and $p(a, \{a, b\})=\rho$.  
\end{axm}

On its own, \nameref{richness} is relatively mild and simply ensures that there are alternatives for each utility value. However, under this condition \nameref{RLI} yields the well-known functional form of nested logit. Consequently, \nameref{RLI} captures all remaining behavioral features of nested logit.

\begin{thm}\label{nlthm} Any nondegenerate NSC that satisfies \textbf{\nameref{RLI}} and \textbf{\nameref{richness}} is a nested logit. 
\end{thm}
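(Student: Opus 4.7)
The plan is to use Proposition~\ref{proposition2} to reduce each nest value function to $f_i \circ U_i$, then to exploit \nameref{richness} and \nameref{RLI} to force each $f_i$ to be a power function, and finally to invoke the uniqueness statement in Proposition~\ref{NSCuniqueprop} to absorb constants and match the nested logit form exactly.

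First, I would apply Proposition~\ref{proposition2}: since $p$ is a nondegenerate NSC satisfying RLI, there exist increasing functions $f_i:\mathds{R}_{++}\to\mathds{R}_{++}$ with $v(A)=f_i(U_i(A))$ for every $A\subseteq X_i$, where $U_i(A):=\sum_{a\in A}u(a)$. The task reduces to showing each $f_i$ is a power function. Next, I would use \nameref{richness} to pin down the range of $u$ on each nest: for any $a\in X_i$ and any $\rho\in(0,1)$, Richness produces $b\in X$ with $a\sim_p b$ (hence $b\in X_i$) and $p(a,\{a,b\})=\rho$; Luce's formula within the nest then gives $u(b)/u(a)=(1-\rho)/\rho$, so letting $\rho$ vary shows $u(X_i)=\mathds{R}_{++}$.

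The core step is to translate RLI into a functional equation on $f_i$. Fix a nest $X_i$ and choose $x$ from a different nest, which is possible unless $K=1$ (in which case $p$ is Luce and is trivially a nested logit with $\eta_1=1$). Apply \nameref{RLI} with singletons $A=\{a\}, B=\{b\}, A'=\{a'\}, B'=\{b'\}\subseteq X_i$. A short computation from the NSC formula gives
\[
\frac{p(\{a\},\{a,b\})}{p(\{b\},\{a,b\})}=\frac{u(a)}{u(b)}\quad\text{and}\quad\frac{p(\{a\},\{a,x\})/p(x,\{a,x\})}{p(\{b\},\{b,x\})/p(x,\{b,x\})}=\frac{f_i(u(a))}{f_i(u(b))},
\]
with analogous identities for $(a',b')$. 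Varying $u(a),u(b),u(a'),u(b')$ over $\mathds{R}_{++}$ via Richness, RLI becomes
\[
\frac{s}{t}\ge\frac{s'}{t'}\;\Longrightarrow\;\frac{f_i(s)}{f_i(t)}\ge\frac{f_i(s')}{f_i(t')},\qquad\forall\, s,t,s',t'>0.
\]
Applying this in both directions at $s/t=s'/t'$ shows that $f_i(s)/f_i(t)$ depends only on the ratio $s/t$; defining $g_i(r):=f_i(r)/f_i(1)$, the identity $f_i(st)/f_i(t)=g_i(s)$ then yields $g_i(rs)=g_i(r)\,g_i(s)$. Monotonicity of $f_i$ makes $g_i$ monotone, so the classical argument (via $\log g_i\circ\exp$, which is monotone additive hence linear) gives $g_i(r)=r^{\eta_i}$ for some $\eta_i>0$, and thus $f_i(s)=f_i(1)\,s^{\eta_i}$.

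Finally, I would invoke Proposition~\ref{NSCuniqueprop}: rescaling $u$ within $X_i$ by $\alpha_i=f_i(1)^{1/\eta_i}$ preserves $p$ and turns $f_i(1)\,U_i(A)^{\eta_i}$ into $U_i(A)^{\eta_i}$ for every $A\subseteq X_i$, delivering the nested logit form. The main obstacle is the core step: one must verify carefully that the singleton computations reduce RLI to the stated ratio-monotonicity implication, and that Richness supplies enough utility values within each nest to make the functional equation hold for \emph{all} positive $s,t,s',t'$ rather than only those attainable as utility values. Once this is in hand, extracting the power-function form from multiplicativity plus monotonicity is standard.
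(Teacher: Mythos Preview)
Your proposal is correct and follows essentially the same route as the paper's proof: both invoke Proposition~\ref{proposition2} (the paper re-derives it inline), use \nameref{richness} to force $u(X_i)=\mathds{R}_{++}$, apply \nameref{RLI} with singletons $A=\{a\},B=\{b\},A'=\{a'\},B'=\{b'\}\subseteq X_i$ against $x$ from another nest to obtain the ratio-monotonicity implication on $f_i$, extract the multiplicative Cauchy equation $g_i(\alpha\beta)=g_i(\alpha)g_i(\beta)$, solve it via monotonicity, and rescale $u$ within each nest by $f_i(1)^{1/\eta_i}$ to reach the nested logit form. The only cosmetic difference is that you cite Propositions~\ref{proposition2} and~\ref{NSCuniqueprop} where the paper spells out the corresponding short computations directly.
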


In applied settings, nested logit is often restricted to $\eta_i \in (0,1]$, as this is sufficient for it to be a RUM.  Since any RUM satisfies \nameref{reg}, a random utility nested logit must as well.  It is well known that \nameref{reg} is necessary but not sufficient for a model to be a RUM in general.  However, we show that \nameref{reg} is sufficient for the nested logit to be a RUM under \nameref{richness}.

\begin{prps}\label{runlprop} Any nested logit that satisfies \textbf{\nameref{reg}} and \textbf{\nameref{richness}} is a random utility nested logit.
\end{prps}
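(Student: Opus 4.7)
The plan is to prove this by contrapositive: suppose the nested logit is not a random utility nested logit, so $\eta_i > 1$ for some $i \le K$, and use \nameref{richness} to construct a menu expansion that violates \nameref{reg}. The case $K = 1$ is trivial because nested logit with a single nest reduces to Luce (the nest-choice factor $v(A)/v(A) = 1$ regardless of $\eta_1$), so the representation can be relabeled with $\eta_1 = 1$. Hence I assume $K \ge 2$, fix $i$ with $\eta_i > 1$, and pick any $j \ne i$.

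Using \nameref{richness} within $X_i$ I first obtain distinct $a, x \in X_i$ with fixed positive utilities $u(a), u(x)$. Using \nameref{richness} within $X_j$: starting from any $y_0 \in X_j$, for any target $M > 0$ I set $\rho = \frac{u(y_0)}{u(y_0) + M} \in (0,1)$ and apply the axiom to obtain $y \in X_j$ satisfying $p(y_0, \{y_0, y\}) = \rho$. Since both alternatives lie in the same nest $X_j$, the nested logit formula reduces to $p(y_0, \{y_0, y\}) = \frac{u(y_0)}{u(y_0) + u(y)}$, forcing $u(y) = M$. Thus I can produce $y \in X_j$ with $u(y)$ arbitrarily large.

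Now consider the menus $A = \{a, y\}$ and $A \cup x = \{a, x, y\}$. Direct substitution into the nested logit formula yields
\[
\frac{p(a, A \cup x)}{p(a, A)} = \left(1 + \frac{u(x)}{u(a)}\right)^{\eta_i - 1} \cdot \frac{u(a)^{\eta_i} + M^{\eta_j}}{\bigl(u(a) + u(x)\bigr)^{\eta_i} + M^{\eta_j}}.
\]
As $M \to \infty$ the second factor tends to $1$, while the first factor is a fixed constant strictly greater than $1$ (since $\eta_i - 1 > 0$ and $u(x) > 0$). Hence for $M$ large enough, $p(a, A \cup x) > p(a, A)$, contradicting \nameref{reg}. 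The main (minor) obstacle is simply verifying this limit and confirming that \nameref{richness} genuinely delivers $u(y)$ unbounded; intuitively, when the competing nest $X_j$ is overwhelmingly attractive, the within-nest dilution from adding $x$ becomes negligible, while the superlinear ($\eta_i > 1$) growth of $\bigl(u(a) + u(x)\bigr)^{\eta_i}$ relative to $u(a)^{\eta_i}$ dominates, producing the regularity violation.
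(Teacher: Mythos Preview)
Your proof is correct and follows essentially the same approach as the paper. Both arguments fix $a,x\in X_i$ (the paper writes $a,a'$), use \nameref{richness} to produce an element of $X_j$ with arbitrarily large utility, and compare $p(a,\{a,y\})$ with $p(a,\{a,x,y\})$; the paper frames this as a direct derivation of $\eta_i\le 1$ from the regularity inequality while you argue the contrapositive, but the construction and the limiting argument (letting $u(y)\to\infty$ so that the within-nest dilution becomes negligible) are identical.
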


If $\eta_i > 1$, nested logit will violate regularity for some specifications of $u$.  Therefore \nameref{richness} strengthens the bite of \nameref{reg} and $\eta_i \in (0,1]$ is ensured.

\subsection{Beyond Nested Logit}\label{bnl}

It is a matter of folk knowledge that the aforementioned ``IIA within a nest and IIN'' properties serve as the behavioral underpinnings of nested logit. However, our results show that ``IIA within a nest and IIN'' (with an endogenous nest structure) in fact characterize NSC, and nested logit requires an additional property (\nameref{RLI}). In this subsection, we present three special cases of NSC that are distinct from nested logit. These examples illustrate some natural choice behaviors that \nameref{RLI} rules out, further clarifying the behavioral assumptions behind nested logit.

\subsubsection{Linear NSC}

One interesting example of NSC that is distinct from nested logit is the Linear NSC. In this example, the nest value is linear in total nest utility, in contrast to the power function used in nested logit. For each nest $i$, there exist parameters $\lambda_{i}\ge 0$ and $\nu_{i},$ so that 
\begin{equation}\label{mulsub} v(A\cap X_i) = \lambda_i \Big(\sum_{x \in A\cap X_i}u(x)\Big)+\nu_i.\end{equation}
In the Linear NSC, the attractiveness of a nest depends on both its instrumental utility, through $\lambda_i$, and an intrinsic ``category'' utility, through $\nu_i$. It turns out that the Linear NSC is a special case of both Elimination-by-Aspects (EBA) of \cite{tversky1972eba} and the Attribute Rule (AR) of \cite*{gul2014random}. Consequently, the Linear NSC is also a RUM.

\subsubsection{Menu-Dependent Substitutability}

In nested logit, the nest parameter $\eta_i$ captures substitutability of alternatives.  While the standard nested logit only allows for a single substitution parameter for each nest, NSC can accommodate \emph{menu-dependent substitutability}. For instance, consider the following example where substitutability depends on the size of the menu, capturing the idea that consumers may find it more difficult to distinguish between alternatives in larger option sets.

For each nest $i$, there exists a threshold $\tau_i \in \{1,\ldots,|X_i|\}$, and nest parameters $\eta_{i}, \tilde{\eta}_{i} > 0,$ so that \begin{equation}\label{mulsub} v(A\cap X_i) = \begin{cases} \left(\sum_{x \in A\cap X_i}u(x)\right)^{\eta_{i}} & \text{ if } |A\cap X_i| > \tau_i \\ \left(\sum_{x \in A\cap X_i}u(x)\right)^{\tilde{\eta}_{i}} & \text{ if } |A\cap X_i| \le \tau_i. \end{cases}\end{equation} If $1-\eta_{i} > 1-\tilde{\eta}_{i}$, this means the agent perceives fewer differences among alternatives as the nest becomes ``more represented.'' That is, when $ |A\cap X_i|$ exceeds $\tau_i$, alternatives are ``more substitutable.'' In this example, $\tau_i$ has a natural interpretation as the consumer's ``distinction capacity.'' Further, if $1-\eta_{i} > 0 > 1-\tilde{\eta}_{i}$, then whether the alternatives are complements or substitutes depends on the size of the nest. Lastly, when $\eta_i$ and  $\tilde{\eta}_{i}$ are both less than one and are sufficiently close, this example is also consistent with RUM.

\subsubsection{Attention and Spillover Effects}

The NSC also allows for cases where the nest value is not directly tied to alternative utility. We consider a particular example in which $v$ is determined by the ``salience'' of alternatives. For some function $S:X \ra \R_{++}$, 
\begin{equation}v(A\cap X_i)=\max_{x \in A\cap X_i}S(x).\end{equation} 

In this specification, the value of a nest is determined by the ``attractiveness'' or ``noticeability'' of its most salient alternative. 
To illustrate its behavioral implications, suppose there are three alternatives, $X=\{x,y,z\}$, with nests $X_1=\{x,z\}$ and $X_2=\{y\}$. When $z$ is highly salient but low utility, $S(z) u(x) > S(x)[u(x)+u(z)]$, then $\frac{p(x,\{x,y,z\})}{p(y,\{x,y,z\})}  > \frac{p(x,\{x,y\})}{p(y,\{x,y\})}$. Examples of such $z$ include brands offering a high-end good with a high price to attract attention, expecting all consumers to purchase their ``moderate'' offering $x$.  When the value of $S(z)$ is large enough relative to the value of $u(z)$, this may induce a violation of regularity.  Similar examples can generate ``spillover'' effects. For example, one successful or attractive product may funnel attention to others, causing demand spillover. This is the traditional rationale behind the use of ``loss-leaders" (\cite{lal1994}) or ``attention-grabbers'' (\cite{eliaz2011attention}).

\section{Revealed Similarity and its Extensions}\label{Disc}

We say that two alternatives are revealed categorically similar if IIA is satisfied between them at all menus. Requiring this eliminates the menu dependence of similarity, and so our notion of revealed similarity captures a form of absolute or fundamental similarity. Consequently, similarity is symmetric and, under \nameref{ISA}, transitive. One drawback is that this notion does not allow statements about comparative similarity; two alternatives are similar or not.  Additionally, in some cases impressions of similarity may be context-dependent.\footnote{There is a sense in which our notion is somewhat moderate. Consider Debreu's red bus/blue bus example. In this case, the similar alternatives (buses) are in fact identical, often called duplicates (or in some cases replicas). Duplicates are not merely similar alternatives; they are similar and \emph{provide the exact same utility value.} For example, in \cite{gul2014random} the use of duplicates is essential to their characterization of the Attribute Rule. Formally, $x$ and $y$ are duplicates if $p(a, A\cup x)=p(a, A\cup y)$ for any $A$ and $a\in A$. However, our notion of similarity is not tied to utility. A commuter may regard all buses as similar (i.e., they belong to the same nest), yet nothing in our model restricts an agent from exhibiting a preference over different buses (e.g., because some bus routes may be faster or cheaper than others).} Because of these apparent limitations, we consider two ways in which to extend NSC to accommodate more complex notions of similarity. 

The first extension of NSC relaxes the requirement that an alternative must belong to a single nest.  In section \ref{CNL}, we consider the (unrestricted) \emph{cross-nested logit} \citep{vovsha1997application} and the (more general) generalized nested logit (\cite{wen2001gnl}), which allow for each alternative to be ``allocated'' across several nests. While overlapping nests allows for the most flexible notion of similarity, these models have no testable implications if the nests are not known \emph{a priori} and parameter values are unrestricted. Thus we demonstrate an important trade-off between nested and cross-nested models.

The second extension of NSC allows for ``intermediate nests." These intermediate nests are often visually represented through a multi-level decision tree. Within this structure, we can allow for a more nuanced notion of similarity through the introduction of a second similarity relation that is conceptually related to our core similarity notion. This secondary relation captures ``context-dependent'' similarity and allows for comparative statements.  We provide an axiomatic characterization of this model (\autoref{3NSCthm}) in appendix \ref{3nsc}.\footnote{Just as our similarity relation identifies endogenous nests, this secondary relation identifies endogenous, intermediate nests. Thus, \autoref{3NSCthm} shows that we may identify an endogenous tree structure.}

\subsection{Overlapping Nests}\label{CNL}

In NSC, each alternative belongs to one, and only one, nest. This feature of NSC places restrictions on the similarity relation. Because of these restrictions, in some settings, applied researchers have proposed allowing alternatives to exist in multiple nests. This leads to a class of models known as ``cross-nested logits'' (see \cite{vovsha1997application}, \cite{ben1999discrete}, \cite{wen2001gnl}, \cite{papola2004cnl}, and \cite{bierlaire2006cnl}).  In the cross-nested logit and the generalized nested logit, each alternative is allocated among the various nests. This allocation is specified with a vector of weights, one for each alternative, which describes to what extent an alternative belongs to each nest. 

We show that any stochastic choice rule $p$ can be rationalized by some unrestricted cross-nested logit. That is, for any $p$, there exist some nesting structure, $X_1, \ldots, X_K$, allocations to these nests, $(\alpha^k_x)^K_{k=1}$, and utilities so that the resulting unrestricted cross-nested logit generates identical choice frequencies. Hence, unlike the nested logit and Luce models, there can be no behavioral characterization of the unrestricted cross-nested logit. The only testable implications of the model are due to the analysts' assumptions about alternative categorization and parameter restrictions.

\begin{defn}[Generalized Nested Logit]\label{cross-nested} A stochastic choice function $p$ is an unrestricted \textbf{generalized nested logit} if there is a collection of subsets $X_1, \ldots, X_K$ of $X$ and a vector $(\alpha^k_x)^K_{k=1}\in \mathds{R}^{K}_{+}$ with $\sum^K_{k=1} \alpha^k_x=1$ for each $x\in X$ such that $x\not\in X_k$ iff $\alpha^k_x=0$, a utility function $u:X\to \mathds{R}_{++}$, and parameters $(\lambda_k)^K_{k=1}\in\mathds{R}^K_{++}$ such that for any $A\in\mathscr{A}$ and $x\in A$, 
\begin{equation}
p(x, A)=\sum_{k: x\in A\cap X_k}\frac{\big(\alpha^k_x\, u(x)\big)^\frac{1}{\lambda_k}}{\sum_{y\in A\cap X_k}\big(\alpha^k_y\, u(y)\big)^\frac{1}{\lambda_k}}\cdot\frac{\Big(\sum_{y\in A\cap X_k}\big(\alpha^k_y\, u(y)\big)^\frac{1}{\lambda_k}\Big)^{\lambda_k}}{\sum_{l:A\cap X_l\neq\emptyset}\Big(\sum_{z\in A\cap X_l}\big(\alpha^l_z\,u(z)\big)^\frac{1}{\lambda_l}\Big)^{\lambda_l}}.\end{equation}
Moreover, we say that $p$ is an \textbf{unrestricted cross-nested logit} if $\lambda_k=\lambda_{k'}$ for any $k, k'\le K$.  
\end{defn}

\begin{thm}\label{CNLthm} Every stochastic choice function is an \textbf{unrestricted cross-nested logit}.\end{thm}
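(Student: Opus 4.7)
The plan is to give an explicit construction: for any stochastic choice function $p$, I would exhibit nests, allocation weights, utilities, and a common $\lambda$ realizing $p$ as an unrestricted cross-nested logit. The starting observation is that the unrestricted cross-nested logit has a very large parameter space—one is free to choose the number of nests $K$, the nests themselves, and for each nest the allocation vector $(\alpha^k_x)_{x\in X_k}$ subject only to the cross-normalization $\sum_k \alpha^k_x = 1$—and this flexibility should be enough to fit any $p$.

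Concretely, for each $A \in \mathscr{A}$ I would introduce a distinct nest $X_A = A$, so that $K = |\mathscr{A}|$. I would set $u(x) = 1$ for all $x \in X$, absorbing all information about $p$ into the allocation weights $\alpha^A_x$ and the common parameter $\lambda$. A natural ansatz is to take $\alpha^A_x = c_A\cdot g(p(x,A))$ for a scaling $c_A > 0$ and some monotone transformation $g$, chosen so that within nest $X_A$ the conditional probabilities at menu $A$ already match $p(\cdot,A)$ (for instance, $g(t)=t^{\lambda}$ makes the ratio $(\alpha^A_x)^{1/\lambda}/(\alpha^A_y)^{1/\lambda}$ coincide with $p(x,A)/p(y,A)$). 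The constants $c_A$ would then be tuned to enforce the normalization $\sum_{A\ni x} \alpha^A_x = 1$ for every $x$, and $\lambda$ would serve as a final free dial.

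The main obstacle is simultaneous consistency across menus. When the realized menu is $B$, the cross-nested logit sum runs over every nest $X_A$ with $A\cap B\neq\emptyset$, not just $X_B$; these off-diagonal contributions depend on $\alpha^A_y$ for $y \in A\cap B$, which are pinned down by behavior at menu $A$. To close the system I would either establish an exact identity—likely exploiting a telescoping coming from $\sum_{A \ni x}\alpha^A_x = 1$ together with the identities $\sum_{y\in A}p(y,A)=1$—or employ a limiting argument in which the scalings $c_A$ shrink in a coordinated way (or $\lambda$ is sent to an appropriate limit) so that at menu $B$ the ``diagonal'' nest $X_B$ is effectively the only one active. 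A parameter count is reassuring: both the number of free $\alpha^A_x$ and the number of independent probability equations scale like $\sum_{A}|A|$. Nonetheless, the nonlinearity of the cross-nested logit formula means the real work lies in the verification step, and I would expect to need either an explicit closed-form identity or a fixed-point argument to conclude, given that $p$ is an entirely arbitrary positive stochastic choice function.
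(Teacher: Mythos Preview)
Your high-level strategy---one nest per menu, $u\equiv 1$, then hope the allocation weights absorb everything---matches the paper's approach, and you correctly identify the central obstacle: at menu $B$, every nest $X_A$ with $A\cap B\neq\emptyset$ contributes, so the system is globally coupled. You also correctly anticipate that a fixed-point argument is the way through. What is missing is the execution, and one of your two suggested resolutions would fail.

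The limiting route does not work: sending $\lambda\to\infty$ or shrinking the $c_A$'s would at best give an approximation, not an exact representation, and the theorem asserts an exact one. There is also no evident telescoping identity that makes your ansatz $\alpha^A_x=c_A\,p(x,A)^\lambda$ solve the system on the nose. The paper instead proceeds as follows. First, a short lemma removes the normalization constraint $\sum_k\alpha^k_x=1$: it suffices to find arbitrary positive numbers $u^A_x$ (one for each pair $x\in A$) satisfying the cross-nested equations, after which $u(x)$ and $\alpha^A_x$ are recovered algebraically. Second, the existence of such $u^A_x$ is reduced to a fixed point of an explicit continuous map $\sigma$ on $\mathbb{R}^N_{++}$, where $N=\sum_A|A|$. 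The substantive work is constructing a compact convex set $S\subset\mathbb{R}^N_{++}$ on which $\sigma$ is a self-map; this is where $\lambda$ enters, not as a limit but as a large finite parameter chosen so that the ``diagonal'' nest $X_A$ dominates all off-diagonal contributions uniformly over $S$. Brouwer then finishes. Your parameter count is indeed the reason to expect this to succeed, but the proof lives in the construction of $S$ and the verification that $\sigma(S)\subseteq S$, neither of which your proposal supplies.
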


\begin{cor}\label{GNLthm} Every stochastic choice function is an \textbf{unrestricted generalized nested logit}.\end{cor}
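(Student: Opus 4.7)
My plan is to establish Theorem~\ref{CNLthm} by an explicit construction and derive Corollary~\ref{GNLthm} as a special case. Given any stochastic choice function $p$, I would take the nest structure to be $\{X_M : M \in \mathscr{A}\}$ with $X_M = M$, so that every nonempty menu is itself a nest. This yields $2^{|X|}-1$ overlapping nests---each alternative $x$ lies in $2^{|X|-1}$ of them---giving ample degrees of freedom. For convenience, I would reparameterize the weights and utilities via $s^M_x := (\alpha^M_x\, u(x))^{1/\lambda}$ (positive iff $x \in M$); after this change of variables, $u(x)$ is a derived quantity determined by the $s^M_x$'s and the normalization $\sum_{M \ni x}\alpha^M_x = 1$, and the representation equation becomes
\[
p(x,A)\cdot\sum_{M':\,M'\cap A\neq\emptyset}T_{M'}(A)^{\lambda}\;=\;\sum_{M\ni x}s^M_x\cdot T_M(A)^{\lambda-1},\qquad T_M(A):=\sum_{y\in M\cap A}s^M_y,
\]
required to hold for every $A\in\mathscr{A}$ and $x\in A$.

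The natural ansatz is $s^M_x = \delta_M\cdot p(x,M)$ for each $x\in M$, with menu-specific scalars $\delta_M>0$. Under this ansatz one has $T_M(M)=\delta_M$ (since $p(\cdot,M)$ sums to one on $M$) and $T_M(M')=\delta_M\,p(M\cap M',M)$ in general. The ``own-nest'' contribution from $X_A$ to $p(x,A)$ is then $\delta_A\,p(x,A)$ in the numerator and $\delta_A$ in the normalizer, reproducing $p(x,A)$ exactly. The remaining task is to select the $\delta_M$ and $\lambda$ so that the contributions from nests $M\neq A$ cancel across all menus simultaneously.

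The main obstacle is arranging cancellation of these cross contributions with a \emph{single} $\lambda$ shared across nests, since each $s^M_x$ is fixed across all menus. A workable strategy is to scale the $\delta_M$ hierarchically (for instance exponentially in $|M|$) and combine this with a suitable $\lambda$ so that each menu's own nest dominates in the relevant sum; exact equality is then secured via a continuity or implicit-function argument applied to the finitely many equations displayed above. Corollary~\ref{GNLthm} follows at once from Theorem~\ref{CNLthm}, because an unrestricted cross-nested logit is by definition a special case of the unrestricted generalized nested logit (the constraint $\lambda_k=\lambda_{k'}$ being the extra requirement). If the direct argument for the theorem proves elusive, one can first establish the corollary by induction on $|X|$---where menu-specific $\lambda_M$ supply the extra degrees of freedom to make cancellation automatic---and then deduce the theorem by a perturbation argument that finds a single $\lambda$ consistent with all menus.
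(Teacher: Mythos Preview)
Your high-level plan matches the paper's: take $X_k$ to range over all nonempty menus, reparameterize by $s^M_x=(\alpha^M_x u(x))^{1/\lambda}$ (this is exactly the paper's Lemma~\ref{cnl-md}), and derive the corollary from Theorem~\ref{CNLthm} since an unrestricted cross-nested logit is, by definition, an unrestricted generalized nested logit with equal $\lambda_k$'s. That final deduction is correct and is precisely how the paper obtains the corollary.

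The gap is in your proposed proof of Theorem~\ref{CNLthm}. The ansatz $s^M_x=\delta_M\,p(x,M)$ pins down the \emph{direction} of $s^M$ within each nest and leaves only one scalar $\delta_M$ per menu (plus $\lambda$) free. The resulting system of equalities---one for each pair $(x,A)$ with $x\in A$---has $\sum_{A\in\mathscr{A}}|A|=|X|\,2^{|X|-1}$ constraints but only $2^{|X|}$ unknowns, so for $|X|\ge 3$ it is strictly overdetermined; there is no reason to expect a solution, and a ``continuity or implicit-function argument'' cannot manufacture one (the implicit function theorem propagates solutions, it does not create them). Your fallback---allow menu-specific $\lambda_M$ and induct---does not obviously help either: the displayed equation for menu $A$ involves $\lambda$ through \emph{every} nest $M$ intersecting $A$, so you cannot decouple menus by assigning each its own exponent.

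The paper avoids this by keeping \emph{all} of the $u^A_x$ free (exactly $\sum_A|A|$ degrees of freedom) and turning the problem into a fixed-point equation $U=\sigma(U,\lambda,M)$. It then builds a compact convex set $S\subset\mathds{R}^N_{++}$ on which $\sigma$ is a self-map---this is where the real work is, choosing $\lambda$ large enough that the own-nest term dominates (your hierarchical-scaling intuition is in the right spirit here)---and applies Brouwer's fixed-point theorem. The existence step is therefore topological, not an explicit construction or perturbation.
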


Our result relies on the key insight that the cross-nested logit is behaviorally equivalent to a form of menu-dependent utility. We first prove this equivalence as \autoref{cnl-md} and show how we can go from menu-dependent utility to weighted allocations and back.  This equivalence between the  cross-nested logit and menu-dependent utility allows us to reduce the problem of finding weights to the problem of finding menu-dependent utility values for each menu that satisfy the cross-nested logit equation. The bulk of the proof is dedicated to showing that the existence of these menu-dependent utilities is equivalent to the existence of a fixed point for some self-map. The proof is completed by applying Brouwer's fixed point theorem. 

This result precisely shows the trade-offs between using nested logit and related models: either accept a restrictive form of similarity or impose assumptions on nest structure and model parameters. As we mentioned previously, further assumptions on parameters or nest structure may lead to testable restrictions. In the literature, similar to nested logit, it is commonly assumed that $\lambda\le 1$, since this is sufficient for cross-nested logit to be RUM. As with our handling of nested logit, we refer to such specifications as the \emph{random utility cross-nested logit}. Note that our result shows that this restriction is not necessary for consistency with RUM; By \autoref{CNLthm}, every RUM has an unrestricted cross-nested logit representation with $\lambda >1$.

In any case, a random utility cross-nested logit must have, at least, the same testable restrictions as RUM. However, our result suggests that random utility cross-nested logit may not have any testable restrictions beyond RUM. In fact, although it does not prove our hypothesis, \cite*{fosgerau2013choice} prove that the set of random utility cross-nested logit models is dense in the set of RUMs. Note that our \autoref{CNLthm} is quite different from the result of \cite{fosgerau2013choice} for the following reasons: (i) we prove an exact result while they prove an approximation result, (ii) they focus on random utility cross-nested logit, and (iii) our proof techniques are completely different because their proof relies on the properties of the CDF for GEV distributions while we use Brouwer's fixed point theorem.

\section{Identifying Nests}\label{IDnest}

In most applications of nested logit to market data, researchers assume nests based on knowledge of alternative attributes. This is potentially problematic, as in many environments there are many plausible structures. When studying vehicle choice, the researcher might construct nests based on brand, body type (e.g., sedan vs. truck), or country of origin.\footnote{A common approach to this type of problem is to utilize multiple levels of nesting (which we characterize in appendix \ref{3nsc}). Even under this approach, the order of the levels matters.}  In other environments, classification may be subjective. When studying choice over apartments, nests might depend on both observable attributes and a myriad of unobservables: subjective impressions of neighborhoods, proximity to landmarks, or a host of other features.  If the nest structure is misspecified, this may lead to biased conclusions regarding substitutability of goods and systematically inaccurate predictions.\footnote{\cite{greene2003econometric} provides an excellent summary of this issue: ``\emph{To specify the nested logit model, it is necessary to partition the choice set into branches [nests]. Sometimes there will be a natural partition ... In other instances, however, the partitioning of the choice set is ad hoc and leads to the troubling possibility that the results might be dependent on the branches so defined. ... There is no well-defined testing procedure for discriminating among tree structures, which is a problematic aspect of the model.}"}

We show in \autoref{distmin} that the true (unobserved) nest structure can be identified from the data by solving a minimization problem.  Any potential nest structure has implications for when IIA may and may not be violated between alternatives.  For a hypothesized nest structure, $\mathcal{Y}$, we propose a measure of the total magnitude of IIA violations within and across the proposed nests, $D(\mathcal{Y})$. We show that the true nest is a minimizer of $D$ and that it will be the unique minimizer of $D$ under a mild identification assumption (\autoref{NMPprop}).  In cases where the researcher has several potential nest structures in mind, such as in vehicle choice, our procedure for nest identification could be useful for nest selection. The researcher can calculate $D$ for the particular nests in mind and select the one that best fits.

 Because the number of possible nests grows exponentially as the number of alternatives increases, the full minimization problem becomes intractable. However, this issue can be managed due to insights from our similarity relation; by \autoref{Reduceprop}, one only needs to check nest structures that are consistent with an empirical approximation of $\sim_p.$  Note that in finite data sets it is unlikely that IIA will hold between any alternatives (e.g., since we observe a finite sample from the true distribution). However, one can measure the magnitude of the the IIA violation between $a, b$ across various menus in the data.  If this magnitude is below some threshold $\epsilon$, then we conclude that $a$ and $b$ are approximately similar: $a \sim_{\epsilon} b$.  When $\sim_{\epsilon}$ is transitive, then there are at most $|X|$ potential nests that we need to check, as stated in \autoref{NestReduceprop}.

\subsection{Identifying Nests by Distance Minimization}\label{distmin}

To analyze the problem of nest identification, we consider a data set denoted $\mathcal{O}=\{A, \{p_{t}(\cdot, A)\}^{N_A}_{t=1}\}_{A\in\mathscr{A}}$, where $N_A$ is the number of observations of menu $A$ and $p_{t}(a, A)=1$ means that $a$ was chosen from $A$ at observation $t\le N_A$. We also require $\sum_{x\in A}p_{t}(x, A)=1$ for each $A$, so that $p_{t}(a, A)=1$ implies $p_{t}(b, A)=0$ for any $b\in A\setminus\{a\}$. Note that we may always write
\[p_{t}(a, A)=\overline{p}(a, A)+\epsilon_{t, a, A},\]
where $\overline{p}(a, A)$ is the probability that $a$ is chosen from $A$ according to the NSC with $(v, u, \{X_i\}^K_{i=1})$. Then, the observed choice frequency of $a$ from $A$ in $\mathcal{O}$ is 
\[p(a, A)\equiv\frac{\sum^{N_A}_{t=1}p_t(a, A)}{N_A}=\overline{p}(a, A)+\epsilon_{a, A}\text{ where }\epsilon_{a, A}\equiv\frac{\sum^{N_A}_{t=1}\epsilon_{t, a, A}}{N_A}.\]
We assume that $\{p_t(\cdot, A)\}^{N_A}_{t=1}$ are independently drawn according to $\overline{p}(\cdot, A)$. Then, by the classical Glivenko-Cantelli theorem, $\epsilon_{a, A}\xrightarrow{a.s.} 0$.\footnote{All convergence statements in this paper are with respect to $N^*\to\infty$ where $N^*=\min_{A\in\mathscr{A}} N_A$.} For notational simplicity, we write \[r_A(A', B')\equiv\frac{p(A', A)}{p(B', A)}\text{ and }\bar{r}_A(A', B')\equiv \frac{\overline{p}(A', A)}{\overline{p}(B', A)}\text{ for any }A, A', B'\in\mathscr{A}.\] Finally, let $\mathscr{X}$ denote the set of all partitions of $X$ and $\mathcal{X}^*$ denote the true partition $\{X_i\}^K_{i=1}$.

Consider the following minimization problem. 
\begin{equation}\label{NMP}\tag{\textbf{NMP}}\min_{\mathcal{Y}\in\mathscr{X}}D(\mathcal{Y})=D_1(\mathcal{Y})+D_2(\mathcal{Y}),\end{equation}
\begin{equation}\label{D1eq}D_1(\mathcal{Y})=\frac{\sum_{Y\in \mathcal{Y}}\sum_{A, B\in\mathscr{A}, a, b\in A\cap B\cap Y}\Big(\log\big(r_A(a, b)\big)-\log\big(r_B(a, b)\big)\Big)^2}{\sum_{Y\in\mathcal{Y}}|\{(A, B, a, b)| a, b\in A\cap B\cap Y\}|},\end{equation}
\begin{equation}\label{D2eq}D_2(\mathcal{Y})=\frac{\sum_{Y, Y'\in\mathcal{Y}} \sum_{A, B\in\mathscr{A}: A\cap Y=B\cap Y,\, A\cap Y'=B\cap Y'}\Big(\log\big(r_A(Y, Y')\big)-\log\big(r_B(Y, Y')\big)\Big)^2}{\sum_{Y, Y'\in\mathcal{Y}} |\{(A, B)| A\cap Y=B\cap Y,\,A\cap Y'=B\cap Y'\}|}.\end{equation}

Intuitively, $D_1(\mathcal{Y})$ measures the degree to which the data violates IIA among elements in the same nest in $\mathcal{Y}$, while $D_2(\mathcal{Y})$ measures the degree to which the data violates IIA among different nests in $\mathcal{Y}$.  These measures are motivated by our axiom \nameref{ISA}: $D_1$ measures the extent to which the first part of \nameref{ISA} holds, and $D_2$ measures the extent to which the second part of \nameref{ISA} holds. 

Similarly, let us define loss functions $D^*, D^*_1, D^*_2$ when there is no observational noise; these are defined by replacing $p$ with $\bar{p}$ in Equations \ref{D1eq}-\ref{D2eq}. Moreover, let $\hat{\mathcal{X}}=\arg\min_{\mathcal{Y}\in\mathscr{X}}D(\mathcal{Y})$. Note that $\hat{\mathcal{X}}$ is an M-estimator (\cite{takeshi1985advanced}). Hence, by standard results (\cite{newey1994large}), $\hat{\mathcal{X}}$ is a strongly consistent estimator of $\mathcal{X}^*$ if $\mathcal{X}^*$ is the unique minimizer of $D^*$. Indeed, $\mathcal{X}^*$ is a minimizer of $D^*$ since $D^*(\mathcal{X}^*)=0$. It turns out that it is the unique minimizer under the following identification assumption.

\begin{ass}\label{IdentifyAss}For all subsets $A_i\subset X_i$ and $A_j\subseteq X_j$, there are menus $A, B\in\mathscr{A}$ such that $\bar{r}_A(A_i, A_j)\neq \bar{r}_B(A_i, A_j)$, $A\cap A_i=B\cap A_i$, and $A\cap Y_j=B\cap Y_j$.
\end{ass}

We now can state our strong consistency result.

\begin{prps}\label{NMPprop} $\hat{\mathcal{X}}\xrightarrow{a.s.}\mathcal{X}^*$ under \autoref{IdentifyAss}.
\end{prps}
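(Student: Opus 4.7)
The plan is to reduce the consistency claim to identification at the population level and then verify that $\mathcal{X}^*$ is the unique minimizer of $D^*$. Since $X$ is finite, the set of partitions $\mathscr{X}$ is finite, so it suffices to establish (i) $D(\mathcal{Y})\xrightarrow{a.s.}D^*(\mathcal{Y})$ for each $\mathcal{Y}\in\mathscr{X}$, (ii) $D^*(\mathcal{X}^*)=0$, and (iii) $D^*(\mathcal{Y})>0$ for every $\mathcal{Y}\neq\mathcal{X}^*$; standard M-estimator consistency then delivers $\hat{\mathcal{X}}\xrightarrow{a.s.}\mathcal{X}^*$. Step (i) follows from the Glivenko-Cantelli fact $\epsilon_{a,A}\xrightarrow{a.s.}0$ noted in the paper, positivity of $\bar p$, and continuity of $\log$: each $\log r_A(a,b)\xrightarrow{a.s.}\log\bar r_A(a,b)$, and since $D_1$ and $D_2$ are finite sums of continuous functions of these log-ratios, the continuous mapping theorem yields $D(\mathcal{Y})\xrightarrow{a.s.}D^*(\mathcal{Y})$ pointwise, hence uniformly over the finite set $\mathscr{X}$.

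For (ii), I would read the vanishing of $D^*(\mathcal{X}^*)$ directly from the NSC representation of $\bar p$ in \autoref{NSC}. For $a,b$ in the same true nest $X_i$, the NSC formula collapses to $\bar r_A(a,b)=u(a)/u(b)$, independent of $A$, so every $D_1^*$ summand vanishes. For distinct true nests $X_i,X_j$ and menus $A,B$ with $A\cap X_i=B\cap X_i$ and $A\cap X_j=B\cap X_j$, the cross-nest ratio $\bar r_A(X_i,X_j)=v(A\cap X_i)/v(A\cap X_j)$ coincides with $\bar r_B(X_i,X_j)$, so every $D_2^*$ summand vanishes as well.

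The substantive step is (iii), and it is the main obstacle. Any $\mathcal{Y}\neq\mathcal{X}^*$ falls into one of two cases. In the \emph{merging} case, some $Y\in\mathcal{Y}$ contains $a\in X_i$ and $b\in X_j$ with $i\neq j$; I apply \autoref{IdentifyAss} with $A_i=\{a\}$ and $A_j=\{b\}$ to produce menus $A,B$ (both containing $a$ and $b$) with $\bar r_A(a,b)\neq\bar r_B(a,b)$, which contributes a strictly positive term to $D_1^*(\mathcal{Y})$. In the \emph{refinement} case (no merging occurs, so $\mathcal{Y}$ strictly refines $\mathcal{X}^*$), some true nest $X_i$ is split by $\mathcal{Y}$ into multiple parts, at least one of which, $Y$, is a proper subset of $X_i$; picking any $Y'\in\mathcal{Y}$ contained in a different true nest $X_j$ (which exists provided $K\geq 2$, implicit in the nondegeneracy maintained throughout) and applying \autoref{IdentifyAss} with $A_i=Y$ and $A_j=Y'$ yields menus $A,B$ with $A\cap Y=B\cap Y$, $A\cap Y'=B\cap Y'$, and $\bar r_A(Y,Y')\neq\bar r_B(Y,Y')$, contributing positively to $D_2^*(\mathcal{Y})$. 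The refinement case is the delicate one: splitting a true nest does not generate within-nest IIA violations (since sub-nests inherit IIA from $X_i$), so the violation must be detected via the cross-nest comparison provided by \autoref{IdentifyAss}, which is precisely what that assumption is engineered to supply.

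With (i)-(iii) in hand, consistency follows by a standard M-estimator argument: setting $\delta:=\tfrac{1}{2}\min_{\mathcal{Y}\neq\mathcal{X}^*}D^*(\mathcal{Y})>0$, uniform convergence implies that, eventually a.s., $D(\mathcal{X}^*)<\delta$ while $D(\mathcal{Y})>\delta$ for every other $\mathcal{Y}$, forcing $\hat{\mathcal{X}}=\mathcal{X}^*$ for all sufficiently large $N^*$.
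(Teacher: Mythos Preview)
Your proposal is correct and follows essentially the same approach as the paper: both reduce the consistency claim to the finite-parameter M-estimator argument by verifying (i) $D(\mathcal{Y})\xrightarrow{a.s.}D^*(\mathcal{Y})$ pointwise over the finite set $\mathscr{X}$, (ii) $D^*(\mathcal{X}^*)=0$, and (iii) $D^*(\mathcal{Y})>0$ for $\mathcal{Y}\neq\mathcal{X}^*$ via the identical merging/refinement case split, invoking \autoref{IdentifyAss} to produce a positive $D_1^*$ term in the merging case and a positive $D_2^*$ term in the refinement case. One small remark: your parenthetical attributing $K\ge 2$ to ``nondegeneracy'' is slightly misplaced---it is really \autoref{IdentifyAss} itself (which cannot hold when $i=j$, since within a true nest the ratio $\bar r_A(Y,Y')$ depends only on $A\cap Y$ and $A\cap Y'$) that forces the existence of at least two distinct true nests, so that a $Y'\subseteq X_j$ with $j\neq i$ is available in the refinement case.
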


\autoref{NMPprop} shows that the true nest structure can be found by solving \ref{NMP}. The intuition behind the result is as follows. As we see in our axiomatization, $a\sim_p b$ if and only if $a, b\in X_i$ for some $i$. Hence, IIA is satisfied between $a$ and $b$ when $a, b\in X_i$ and IIA is violated at least once between $a$ and $b$ when $a\in X_i$ and $b\in X_j$. Hence, the distance $\sum_{A, B\in\mathscr{A}, a, b\in A\cap B\cap Y}\Big(\log\big(r_A(a, b)\big)-\log\big(r_B(a, b)\big)\Big)^2$ between $a$ and $b$ is smaller whenever $a, b\in X_i$. Hence, minimizing $D_1(\mathcal{Y})$ helps us to correctly identify that elements from different nests are in fact from different nests. 

However, it is important to notice that $D_1(\mathcal{Y})$ alone is not sufficient to identify $\mathcal{X}^*$. For example, suppose $X=\{a_1, \ldots, a_5\}$ and $\mathcal{X}^*$ is given by $X_1=\{a_1, a_2, a_3\}$ and $X_2=\{a_4, a_5\}$. Since the data provide a noisy measure of $\bar{p}$, it is possible that $D_1$ is minimized at some finer partition, say $Y_1=\{a_1\}$, $Y_2=\{a_2, a_3\}$, and $Y_3=\{a_4, a_5\}$. Note that $\mathcal{Y}$ splits $X_1$, and since $D_1$ measures IIA violations within nests, $D_1(\mathcal{Y})\le D_1(\mathcal{X}^*)$ because  $\mathcal{Y}$ never combines two elements from different nests into the same nest (i.e., it is finer than $\mathcal{X}^*$). 

This example illustrates a potential problem. $D_1$ by itself tends to select finer partitions (it wants to create ``too many nests''). The second term, $D_2$, corrects this problem. If $\mathcal{Y}$ were the true nest structure, our axiomatization (i.e., the second half of \nameref{ISA}) requires that the relative likelihoods between alternatives in $Y_2$ (for instance, $a_2$) and alternatives in $Y_3$  (for instance, $a_4$) are unaffected by the presence of $a_1$. Accordingly, $\mathcal{Y}$ is penalized by $D_2$ if introducing $a_1$ changes the relative likelihoods between alternatives in $Y_2$ and $Y_3$. Importantly, since the true nest structure, $\mathcal{X}^*$, groups $a_1$ with $a_2$ and $a_3$, $\mathcal{X}^*$ will not be penalized, and so $D_2(\mathcal{Y})> D_2(\mathcal{X}^*)$ almost surely.\footnote{We say \emph{$Z_n>Z'_n$ almost surely} when there is $N$ such that Pr$(Z_n>Z'_n)=1$ for any $n>N$.} Thus $D_2(\cdot)$ enables us to correctly conclude that $a_1$ and $a_2$ do in fact belong to the same nest.

Notice that solving \ref{NMP} is quite different from the typical exercise of selecting a nest structure in the literature. In a typical nested logit estimation, a researcher assumes a nest structure and then runs a maximum likelihood (ML) estimation to identify model parameters. To compare different nest structures, the researcher has to run a full ML estimation for each nest structure. Hence, it is computationally expensive to compare many different nest structures. However, our \ref{NMP} provides a data-driven way to compare different nest structures without estimating the full parametric model.  Moreover, \ref{NMP} does not rely on the functional form of nested logit, since it applies to any NSC.

Finally, note that $\mathcal{X}^*$ is a minimizer of $D$ without any further assumptions; our identifying \autoref{IdentifyAss} is only required to ensure that  $\mathcal{X}^*$ is the \emph{unique} minimizer. Consequently, when \autoref{IdentifyAss} is violated, $\mathcal{X}^*$ will always be contained in the set of minimizers of $D$. This suggests that $D$ may still be used for nest selection and that solving \ref{NMP} can facilitate identification of the true nest structure.

\subsection{Revealed Similarity and Nest Selection}\label{similarityselection}

There is a practical concern with directly applying \autoref{NMPprop} to identify the nest structure because $|\mathscr{X}|$ grows exponentially as $|X|$ increases.\footnote{Unlike the standard method of estimating nested logit, it is not computationally expensive to solve \ref{NMP} by going through all possible partitions of $X$ when $|X|$ is small. For instance, there are 877 different partitions when $|X|=7$. Indeed, many papers in the literature study situations with relatively few alternatives (e.g., transportation modes or cellphone providers), and \autoref{NMPprop} can be applied to these situations directly.} Therefore, we further refine our result by showing that we only need to compare at most $|X|$ different partitions, rather than $|\mathscr{X}|$. This dramatically reduces the number of calculations; comparing $|X|$ different partitions is computationally inexpensive even when $X$ contains hundreds of alternatives. To establish this result, we introduce the following measure of IIA violations. For any $a, b\in X$, let 
\begin{equation}\label{sime}d(a, b)\equiv \frac{\sum_{A, B\in\mathscr{A}: a, b\in A\cap B}\Big(\log\big(r_A(a, b)\big)-\log\big(r_B(a, b)\big)\Big)^2}{|\{(A, B, a, b)| a, b\in A\cap B\}|}.\end{equation}

The value of $d(a, b)$ captures the total ``distance'' between $a$ and $b$, in terms of IIA violations in the data $\mathcal{O}$.  Consistent with our axiomatization, and the intuition behind $D_1$, the value of $d(a,b)$ is smaller when $a$ and $b$ are from the same nest than when they are from different nests. While conceptually similar to $D_1$, note that it is defined over the alternatives, not on nest structures. This crucial distinction allows us to use $d$ to narrow our candidate nests purely based on the data.
 
\begin{prps}\label{Reduceprop} Under Assumption 1, there are $\epsilon^*, \bar{N}>0$ such that for any $N^*>\bar{N}$,
\[\max_{i}\max_{a, b\in X_i} d(a, b)<\epsilon^*<\min_{i<j}\min_{a'\in X_i, b'\in X_j} d(a', b')\,\text{ with probability one.}\]
\end{prps}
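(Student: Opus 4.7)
\medskip
\noindent\textbf{Proof proposal.} The plan is to first compute the ``noiseless'' analogue of $d$ obtained by replacing $p$ with $\bar p$, show it is exactly zero for within-nest pairs and strictly positive for across-nest pairs, and then use the almost-sure convergence $p\to\bar p$ together with continuity of $d$ to transfer these inequalities to the observed quantities uniformly over the finitely many pairs in $X$.

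First I would define $\bar d(a,b)$ by the same formula as \eqref{sime} but with $r_A$ replaced by $\bar r_A$. The within-nest claim $\bar d(a,b)=0$ for every $i$ and every $a,b\in X_i$ follows immediately from the NSC representation: within a nest, the factor $v(A\cap X_i)/\sum_j v(A\cap X_j)$ cancels in the ratio, leaving $\bar r_A(a,b)=u(a)/u(b)$, a menu-independent quantity; hence every squared log-ratio in the numerator of $\bar d(a,b)$ vanishes. For the across-nest claim, fix $a\in X_i$ and $b\in X_j$ with $i\neq j$. By \autoref{NSCthm} and the definition of $\sim_p$, the fact that $a$ and $b$ lie in distinct nests of the (unique) NSC representation means $a\not\sim_p b$, which by definition means IIA fails between $a$ and $b$ at some pair of menus $A,B$. (In fact \autoref{IdentifyAss} applied to the singletons $A_i=\{a\}$ and $A_j=\{b\}$ provides such menus directly.) This yields a strictly positive summand in $\bar d(a,b)$, so $\bar d(a,b)>0$. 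Because $X$ is finite, there are finitely many across-nest pairs, and I set
\[
\delta^*\;\equiv\;\min_{i<j}\;\min_{a'\in X_i,\,b'\in X_j}\bar d(a',b')\;>\;0,\qquad \epsilon^*\;\equiv\;\delta^*/2.
\]

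Next I would transfer these statements to the observed $d$. By positivity of $\bar p$ and the Glivenko--Cantelli argument already invoked in the paper, $p(a,A)\xrightarrow{a.s.}\bar p(a,A)$ for each $(a,A)$. Since $\mathscr{A}$ is finite, we have simultaneous almost-sure convergence, and there exists a random $\bar N_0$ beyond which $p(a,A)$ is bounded away from zero for every $(a,A)$. On this event the map $p\mapsto d(a,b)$ is a finite sum of squared differences of logs of ratios of bounded-away-from-zero positive numbers, hence continuous; consequently $d(a,b)\xrightarrow{a.s.}\bar d(a,b)$ for every pair $a,b\in X$. Using finiteness of $X$ once more, I can choose a single (random) $\bar N\geq \bar N_0$ such that, for all $N^*>\bar N$, every within-nest $d(a,b)$ lies in $[0,\delta^*/4)$ and every across-nest $d(a',b')$ lies in $(3\delta^*/4,\infty)$, both simultaneously and with probability one. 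These two inequalities straddle $\epsilon^*=\delta^*/2$ and deliver the desired conclusion.

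The main obstacle is the across-nest strict positivity step, since the formula for $d$ averages over \emph{all} pairs of menus containing $\{a,b\}$ and the two menus witnessing an IIA violation must not be drowned out by many menus on which the ratio happens to coincide. This is handled cleanly: a single nonzero summand makes the numerator strictly positive, and the denominator is finite. A secondary nuisance is that the observed ratios $r_A(a,b)$ involve logarithms and are ill-defined when some realized $p(b,A)=0$; this is resolved by restricting attention to the almost-sure event $N^*>\bar N_0$ on which all observed probabilities are uniformly bounded below by half the minimum of $\bar p$. Everything else is a routine consequence of continuity and the finiteness of $X$ and $\mathscr{A}$.
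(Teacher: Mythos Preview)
Your proposal is correct and follows essentially the same route as the paper. The only cosmetic difference is that the paper writes the decomposition $\log\big(r_A(a,b)/r_B(a,b)\big)=\delta_{a,b,A,B}+\zeta_{a,b,A,B}$ explicitly and bounds the across-nest $d(a,b)$ below by $(\delta_{a,b,A^*,B^*})^2/(2M)$, whereas you package the same idea as $d(a,b)\xrightarrow{a.s.}\bar d(a,b)$ via continuity; both arguments rest on the same two facts (within-nest noiseless distance is zero, across-nest noiseless distance is strictly positive by \autoref{IdentifyAss}) and the same finiteness-of-$X$ uniformity step.
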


\autoref{Reduceprop} shows that for large enough $N^*$, there exists a ``separating" threshold that correctly identifies whether two alternatives belong, or do not belong, to the same nest. If the researcher knows $\epsilon^*$, then identifying the nest structure is a straightforward task due to this result. But when $\epsilon^*$ is unknown, \autoref{Reduceprop} is not sufficient to identify the nest structure. However, the insights provided by \autoref{Reduceprop} allow us to show that in order to identify the correct nest structure for any NSC, only $|X|$ different partitions are worth considering. In fact, we will explicitly construct the set of partitions that need to be considered using $d$ and prove that this set contains the true nest $\mathcal{X}^*$.

In order to construct the set of relevant partitions, we introduce the following ``approximately similar" relation: for any $\epsilon\ge 0$ and $a, b\in X$, let $a\sim_\epsilon b$ if $d(a, b)<\epsilon$. When $\sim_\epsilon$ is transitive, let $\mathcal{X}_\epsilon\equiv X/\sim_{\epsilon}$, which is the partition of $X$ such that for any $A\in\mathcal{X}_\epsilon$, $a\in A$, and $b\in X$, $a\sim_{\epsilon} b$ if and only if $b\in A$.  

Since we have finite data, if $\epsilon$ and $\epsilon'$ are close enough, they will result in the same relation ($\sim_{\epsilon}=\sim_{\epsilon'}$), except for certain knife-edge cases (which happens at most $|X|$ times).  Notice that for smaller $\epsilon$, we are ``more discriminating'' in declaring similarity and this results in a finer partition. For larger $\epsilon$, we are ``less discriminating'' in declaring similarity and this results in a coarser partition. Let $\overline{\epsilon}\equiv \max_{a, b\in X} d(a, b)$, the maximal distance calculated in the data. Then for any $\epsilon > \overline{\epsilon}$, the resulting relation $\sim_{\epsilon}$ is complete, which reduces to the Luce model ($\epsilon=0$ also gives the Luce model). Consequently, we never need to use an $\epsilon$ above $\overline{\epsilon}$. Because of these two key features of $\sim_{\epsilon}$, it turns out that the set $\mathscr{X}^*\equiv\{\mathcal{X}_\epsilon\}_{\epsilon\in [0, \overline{\epsilon}]}$ is the desired collection of partitions and contains at most $|X|$ different elements. 

\begin{prps}\label{NestReduceprop} $|\mathscr{X}^*|\le |X|$. \end{prps}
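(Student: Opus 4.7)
The plan is to show that the partitions appearing in $\mathscr{X}^*$ form a strictly coarsening chain whose length is bounded by $|X|$. The key observation is that the relation $\sim_\epsilon$ is monotone increasing in $\epsilon$ in the set-inclusion sense on pairs: if $\epsilon \le \epsilon'$ and $d(a,b)<\epsilon$, then $d(a,b)<\epsilon'$, so $\sim_\epsilon\ \subseteq\ \sim_{\epsilon'}$. I would record this as the first step.

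Next, I would use this monotonicity to compare any two partitions $\mathcal{X}_\epsilon$ and $\mathcal{X}_{\epsilon'}$ that both lie in $\mathscr{X}^*$ (so that both $\sim_\epsilon$ and $\sim_{\epsilon'}$ are transitive), assuming without loss of generality $\epsilon\le\epsilon'$. Because $\sim_\epsilon\ \subseteq\ \sim_{\epsilon'}$, every equivalence class of $\sim_\epsilon$ is contained in some equivalence class of $\sim_{\epsilon'}$, hence $\mathcal{X}_{\epsilon'}$ is a (weak) coarsening of $\mathcal{X}_\epsilon$. Note that transitivity of $\sim_\delta$ at intermediate values $\delta\in[\epsilon,\epsilon']$ is not needed for this comparison, since the argument only uses the endpoints.

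The third step is to observe that if in addition $\mathcal{X}_\epsilon\ne \mathcal{X}_{\epsilon'}$, the coarsening must be strict, i.e., at least two distinct cells of $\mathcal{X}_\epsilon$ merge into one cell of $\mathcal{X}_{\epsilon'}$, so $|\mathcal{X}_{\epsilon'}|<|\mathcal{X}_\epsilon|$ where $|\cdot|$ denotes the number of cells. Enumerating the distinct elements of $\mathscr{X}^*$ in increasing order of $\epsilon$ as $\mathcal{X}_{\epsilon_1},\mathcal{X}_{\epsilon_2},\ldots,\mathcal{X}_{\epsilon_m}$, the corresponding cell counts form a strictly decreasing sequence of positive integers, each at most $|X|$. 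Therefore $m\le |X|$, which is the desired bound.

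I do not anticipate any serious obstacle here: the argument is essentially combinatorial and relies only on the monotonicity of the defining inequality $d(a,b)<\epsilon$. The only minor subtlety worth flagging is that $\mathscr{X}^*$ is indexed by the (not necessarily connected) set of $\epsilon$ at which $\sim_\epsilon$ is transitive; but since the coarsening comparison in the second step depends only on the pointwise inclusion $\sim_\epsilon\ \subseteq\ \sim_{\epsilon'}$ at the two endpoints, this poses no issue.
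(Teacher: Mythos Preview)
Your argument is correct and is in fact cleaner than the paper's. You observe that $\sim_\epsilon$ is monotone in $\epsilon$ under set inclusion, so whenever $\epsilon\le\epsilon'$ and both relations are transitive, $\mathcal{X}_{\epsilon'}$ is a coarsening of $\mathcal{X}_\epsilon$; distinct elements of $\mathscr{X}^*$ therefore form a strictly coarsening chain of partitions of $X$, and any such chain has length at most $|X|$ since the number of cells is a strictly decreasing positive integer bounded above by $|X|$.

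The paper proceeds differently, by induction on $|X|$. It picks a pair $(a^*,b^*)$ achieving the minimal distance $d_1=\min_{a,b}d(a,b)$, notes that $a^*$ and $b^*$ lie in the same cell of every $\mathcal{X}_\epsilon$ with $\epsilon>d_1$, removes $b^*$, applies the inductive hypothesis to $X'=X\setminus\{b^*\}$ to get at most $|X|-1$ partitions there, and then argues that reinserting $b^*$ creates at most one additional partition. Your chain argument bypasses this induction entirely and makes the bound transparent: it uses only the monotonicity of the threshold relation and the elementary fact that a strict refinement chain of partitions of an $n$-element set has length at most $n$. The paper's approach, by contrast, mirrors the step-by-step construction in their three-step algorithm (compute $d$, build matrices $M^{ab}$, keep the transitive ones), which is perhaps why they chose it; but as a proof of the bare inequality $|\mathscr{X}^*|\le|X|$, yours is the more direct route.
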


Combining Propositions \ref{NMPprop}-\ref{NestReduceprop}, we can immediately show that $\mathscr{X}^*$ contains the true partition structure, and it can be found by solving \ref{NMP}, as formalized below. Let $\hat{\mathcal{X}}^*=\arg\min_{\mathcal{Y}\in\mathscr{X}^*}D(\mathcal{Y})$.

\begin{cor} $\hat{\mathcal{X}}^*\xrightarrow{a.s.}\mathcal{X}^*$ under Assumption 1.\end{cor}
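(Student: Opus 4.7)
The plan is to derive this corollary as a direct combination of the three preceding propositions, with the essential step being to verify that the true partition $\mathcal{X}^*$ is almost surely contained in the reduced candidate set $\mathscr{X}^*$ for sufficiently large $N^*$. Once this containment is established, the conclusion follows from \autoref{NMPprop} because restricting the minimization domain cannot change the argmin when the unique global minimizer already lies in the restricted domain.

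For the containment step, I would invoke \autoref{Reduceprop}. By that result, under \autoref{IdentifyAss} there exist $\epsilon^*$ and $\bar{N}$ such that, with probability one whenever $N^*>\bar{N}$,
\[
\max_{i}\max_{a,b\in X_i} d(a,b) \;<\; \epsilon^* \;<\; \min_{i<j}\min_{a'\in X_i,\,b'\in X_j} d(a',b').
\]
This chain of inequalities says exactly that $a\sim_{\epsilon^*} b$ iff $a$ and $b$ lie in a common true nest. Hence $\sim_{\epsilon^*}$ is transitive and the partition it induces is $\mathcal{X}_{\epsilon^*}=\mathcal{X}^*$. Moreover, the second inequality forces $\overline{\epsilon}=\max_{a,b}d(a,b) > \epsilon^*$, so $\epsilon^*\in[0,\overline{\epsilon}]$ and therefore $\mathcal{X}^*=\mathcal{X}_{\epsilon^*}\in\mathscr{X}^*$ almost surely once $N^*>\bar{N}$.

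For the second step, \autoref{NMPprop} gives $\hat{\mathcal{X}}=\arg\min_{\mathcal{Y}\in\mathscr{X}}D(\mathcal{Y})\xrightarrow{a.s.}\mathcal{X}^*$. Since $\mathscr{X}$ is finite, strong consistency of the $M$-estimator $\hat{\mathcal{X}}$ is equivalent to the statement that for large enough $N^*$, $\mathcal{X}^*$ is the strict minimizer of $D$ over $\mathscr{X}$ almost surely, i.e., $D(\mathcal{X}^*)<D(\mathcal{Y})$ for every $\mathcal{Y}\in\mathscr{X}\setminus\{\mathcal{X}^*\}$. Combining this with $\mathscr{X}^*\subseteq\mathscr{X}$ and the almost sure membership $\mathcal{X}^*\in\mathscr{X}^*$ from the first step yields, almost surely for large $N^*$, $\arg\min_{\mathcal{Y}\in\mathscr{X}^*}D(\mathcal{Y})=\{\mathcal{X}^*\}$, which is precisely $\hat{\mathcal{X}}^*\xrightarrow{a.s.}\mathcal{X}^*$.

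There is no substantive obstacle: the real work was already done in Propositions \ref{NMPprop} and \ref{Reduceprop}. The only minor care needed is to combine two almost-sure events that each hold for $N^*$ past some threshold, and to note that $\epsilon^*$ falls in the interval $[0,\overline{\epsilon}]$ so that $\mathcal{X}_{\epsilon^*}$ is indexed in the family defining $\mathscr{X}^*$. Both are one-line observations, and the proof is essentially a paragraph.
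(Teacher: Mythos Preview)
Your proposal is correct and follows essentially the same route as the paper: use \autoref{Reduceprop} to obtain a separating threshold $\epsilon^*$ so that $\mathcal{X}_{\epsilon^*}=\mathcal{X}^*$ almost surely, conclude $\mathcal{X}^*\in\mathscr{X}^*$ almost surely, and then invoke \autoref{NMPprop} together with $\mathscr{X}^*\subseteq\mathscr{X}$. If anything, you are slightly more careful than the paper in explicitly checking that $\sim_{\epsilon^*}$ is transitive and that $\epsilon^*\le\overline{\epsilon}$, both of which are needed for $\mathcal{X}_{\epsilon^*}$ to belong to $\mathscr{X}^*$ but are glossed over in the paper's version.
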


The minimization problem \ref{NMP} is not computationally demanding since $|\mathscr{X}^*|\le |X|$. Hence, we can find the true nest structure even if there are many products. In practice, computing $\mathscr{X}^*$ from choice frequencies is quite simple. First note that any partition of $X$ can be represented by an $|X|\times |X|$ matrix $M$ such that $M_{x, y}=1$ when $x$ and $y$ are from the same nest and $0$ otherwise. Hence, to compute $\mathscr{X}^*$, we follow the following steps: 
\begin{itemize}
\item[1)] Calculate $d(a, b)$ for each pair $(a, b)$ in $X \times X$;
\item[2)] For each $(a, b)$, construct the $|X|\times |X|$ matrix $M^{ab}$ such that $M^{ab}_{x, y}=1$ if $d(x, y)<d(a, b)$ and $0$ otherwise; 
\item[3)] Include the matrix $M^{ab}$ in $\mathscr{X}^*$ if for any $x, y, z\in X$, $M^{ab}_{x, z}=1$ whenever $M^{ab}_{x, y}=M^{ab}_{y, z}=1$. \end{itemize}

The first step determines the collection of relevant thresholds from the data to construct candidate relations $\sim_{\epsilon}$. The second step generates $|X|(|X|-1)/2$ matrices, which represent the similarity thresholds found in the previous step. The third step reduces the number to $|X|$, as we prove in \autoref{NestReduceprop}, since $\sim_{\epsilon}$ must be transitive.

\subsection{Identification from Simulations}

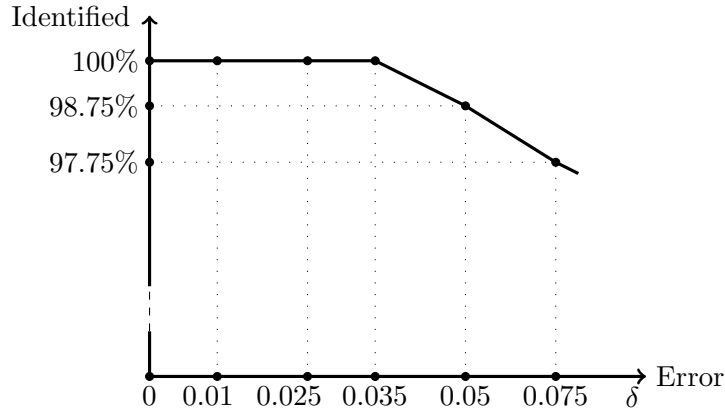
\begin{figure}
\centering
\begin{tikzpicture}[x=0.60cm, y=0.60cm][domain=0:1, range=0:1, scale=3/4, thick]
\usetikzlibrary{intersections}
\setlength{\unitlength}{0.5 cm}

\draw[->][very thick] (0,0)--(11, 0) ;
\draw[->][very thick] (0,2)--(0, 8); 

\draw[dashed] (0,1)--(0, 2); 

\draw[very thick] (0,0)--(0, 1); 


\coordinate[label=right: $\text{Error}$] (wf) at (11, 0);
\coordinate[label=below: $\text{$\delta$}$] (wf) at (10.7, 0);
\coordinate[label=left: $\text{Identified}$] (wf) at (-0.2, 8);

\coordinate[label=left: ${100\%}$] (wf) at (0, 7);

\filldraw (0,7) circle (1.5pt);

\coordinate[label=left: ${98.75\%}$] (wf) at (0, 6);

\filldraw (0,6) circle (1.5pt);

\coordinate[label=left: ${97.75\%}$] (wf) at (0, 4.75);

\filldraw (0,4.75) circle (1.5pt);




\draw[very thick] (0,7)--(1.5,7); 

\filldraw (1.5,7) circle (1.5pt);

\draw[very thick] (1.5,7)--(3.5,7); 

\filldraw (3.5,7) circle (1.5pt);

\draw[very thick] (3.5,7)--(5,7); 

\filldraw (5,7) circle (1.5pt);

\draw[very thick] (5,7)--(7,6); 

\filldraw (7, 6) circle (1.5pt);

\draw[very thick] (7,6)--(9,4.75); 

\filldraw (9, 4.75) circle (1.5pt);

\draw[very thick] (9,4.75) -- (9.5, 4.5); 










\coordinate[label=below: ${0}$] (wf) at (0, 0);

\filldraw (0,0) circle (1.5pt);

\draw[loosely dotted] (1.5,0)--(1.5,7); 

\coordinate[label=below: ${0.01}$] (wf) at (1.3, 0);

\filldraw (1.5,0) circle (1.5pt);

\coordinate[label=below: ${0.025}$] (wf) at (3.1, 0);

\filldraw (3.5,0) circle (1.5pt);

\draw[loosely dotted] (3.5,0)--(3.5,7); 

\coordinate[label=below: ${0.035}$] (wf) at (5, 0);

\filldraw (5,0) circle (1.5pt);

\draw[loosely dotted] (5,0)--(5,7);

\filldraw (7,0) circle (1.5pt);

\draw[loosely dotted] (7,0)--(7,6); 

\draw[loosely dotted] (0,6)--(7,6); 

\coordinate[label=below: ${0.05}$] (wf) at (7, 0);

\coordinate[label=below: ${0.075}$] (wf) at (9, 0);

\filldraw (9,0) circle (1.5pt);

\draw[loosely dotted] (9,0)--(9,4.75);

\draw[loosely dotted] (0,4.75)--(9,4.75);







\end{tikzpicture}
\caption{Percentage of trials with correctly identified nest structure with error $\delta$.}
\label{sim}
\end{figure}

To illustrate our algorithm and our theoretical result on identification, we ran the following simulation with six alternatives. We assumed that the true nest structure is $X_1=\{x_1, x_2, x_3\}$ and $X_2=\{x_4, x_5, x_6\}$, with $X=X_1\cup X_2$, and calculated the fraction of trials in which our procedure correctly identified the nest structure. To do so, we randomly generated values for $u$ and $v$ and calculated $\overline{p}$, which is the NSC given by $(v, u, \{X_1, X_2\})$. To introduce sampling error, we drew independent errors from a uniform distribution $U[0, \delta]$ and perturbed $\overline{p}$.\footnote{Specifically, for each menu $A$ and each simulation trial $t$, we independently draw errors $\{\zeta_{a, A,t}\}_{a\in A}$ from $U[0, \delta]$ and construct $\overline{p}^t(\cdot, A)$ as follows: $\overline{p}^t(a, A)=\frac{\overline{p}^t(a, A)+\zeta_{a, A, t}}{\sum_{b\in A}\overline{p}^t(b, A)+\zeta_{b, A, t}}$ for each $a\in A$.} As shown by \autoref{NMPprop}, when $\delta$ is small enough, the true nest structure will be identified correctly. This was confirmed by our simulation. 

We considered six different values for $\delta$ ($\{0, 0.01, 0.025, 0.035, 0.05, 0.075\}$) and ran a total of 2400 trials, the results of which are summarized in \autoref{sim}. For $\delta \in \{0, 0.01, 0.025, 0.035\}$, the nest structure was correctly identified in all trials. For $\delta=0.05$ ($0.075$), the nest structure was correctly identified 395 (391) times out of 400 trials. In other words, in line with our theoretical result (\autoref{NMPprop}), when error is relatively small (e.g., $\delta \le 0.035$) the true nest is recovered $100\%$ of the time. Even for relatively large errors (e.g., $\delta \ge 0.05$), we recover the true nests over $97\%$ of the time.

\section{Related Literature}\label{rellit}

The main contributions of our paper are the characterizations of nested logit and NSC. While nested logit is the most commonly applied model that deals with the similarity effect, many other models have been proposed. Two prominent such models are Elimination-By-Aspects (EBA) of \cite{tversky1972eba} and the Attribute Rule (AR) of \cite{gul2014random}. Both EBA and AR are RUM and generalize the Luce model. While each of these models has an intersection with the NSC, neither one nests nor is nested by NSC.

In Tversky's EBA, each alternative is a collection of \emph{aspects}. The decision maker randomly selects one of these aspects from the aspects available in the menu, via a Luce rule, and eliminates alternatives that do not have the selected aspect. The decision maker repeats this procedure until a single alternative remains. EBA is conceptually similar to an $N$-step nested logit, where $N$ is the total number of alternatives.  The Linear NSC is a special case of EBA, but EBA is disjoint from nested logit. 

In the AR of \cite{gul2014random}, each alternative has many \emph{attributes}. A decision maker randomly selects one attribute from the attributes available in the menu via a Luce rule. When the selected attribute is $\omega$, alternative $x$ will be chosen with a probability that is proportional to $\gamma^\omega_x$, where $\gamma^\omega_x$ is the intensity of $\omega$ in $x$. The AR is conceptually similar to cross-nested logit. In fact, one can show that the AR is a special case of a non-parametric version of cross-nested logit in which weights assigned to nests are menu-independent (i.e., $\gamma^\omega_x$ is menu-independent). Because of the menu independence of $\gamma^\omega_x$, the AR is more restrictive than unrestricted cross-nested logit.  The Linear NSC is a special case of the AR, but nested logit is not. 

Other recent papers dealing with the similarity effect are \cite{farolucereplicates} and \cite{li2016associationistic}. Both are special cases of NSC, but are generally distinct from nested logit. 

 \cite{farolucereplicates} introduces the \emph{Luce Model with Replicas} (LR), which is a special case of Linear NSC in which nest values and Luce utilities are constant: $v(A)=v_i$ for each $A\subseteq X_i$ and $u(a)=u(b)$ for any $a, b\in X_i$. In terms of behavior, Faro's model only allows for restrictive forms of the similarity effect in which similar alternatives are replicas. 
 
 \cite{li2016associationistic} present the \emph{Associationistic Luce Model} (AL), which is a special case of NSC with $v(A)=\sum_{a\in A} \gamma(a)$ for some function $\gamma$. Because of the additive structure of $v$, AL is significantly more restrictive than NSC. In fact, the Luce model is the only intersection between nested logit and AL. We note that the AL also allows for violations of regularity (e.g., the attraction effect). However, since $v$ is increasing, the AL cannot simultaneously allow for violations of regularity and the similarity effect (see Appendix \ref{monNSC}). In terms of axiomatic foundations, they also use the revealed similarity relation $\sim_p$, and impose transitivity of $\sim_p$ as one of their axioms.

 NSC has a large overlap with RUM, which goes back to \cite{Block1960rum}, \cite{falmagne1978representation}, and \cite{barbera1986falmagne}. For example, both random utility nested logit and Linear NSC are RUM. 
In addition to EBA, AR, and random utility nested logit, many special cases of RUM have been proposed, including: \cite{gul2006reu}, in which each preference has an expected utility representation; \cite{apesteguia2017scrum}, in which the collection of preferences satisfy the single-crossing property; and \cite{manzini2014stochastic}, in which randomness occurs due to stochastic consideration. Our characterization of random utility nested logit contributes to this area of the stochastic choice literature.

NSC has an interpretation as a sequential choice model, in which a nest is chosen and then an alternative.  \cite{manzini2012categorize} study a deterministic choice model in which a decision maker categorizes alternatives before choosing. The decision maker first selects the ``best'' category according to some ordering, then selects their most preferred alternative according to another.  Categories however do not need to form a partition, unlike in NSC.  \cite{ravid2018focus} introduce the following stochastic choice model that involves a sequence of binary comparisons: \[p(x, A)=\frac{\prod_{y\in A\setminus\{x\}}\pi(x, y)}{\sum_{z\in A} \prod_{t\in A\setminus\{z\}}\pi(z, t)}.\]
This model, which is a special case of \cite{marley1991}, is disjoint from nested logit but has an interesting connection to NSC. In particular, when $\pi(x, y)=\frac{1}{u(y)}$ and $\pi(x, z)=\frac{1}{w(z)}$ for any $x, y\in X_i$ and $z\in X_j$, we obtain an NSC with $v(A\cap X_i)=\big(\sum_{y\in A\cap X_i} u(y)\big)\frac{\prod_{y\in A\cap X_i} w(y)}{\prod_{y\in A\cap X_i} u(y)}$.

\bibliographystyle{ecta}
\bibliography{econref}

\appendix

\section{Proofs}

\subsection{Proof of \autoref{NSCthm}}

\noindent\textbf{Sufficiency.} We first prove the sufficiency part of Theorem 1 by the following nine steps. Suppose $p$ satisfies \nameref{ISA}. 

\medskip
\noindent\textbf{Step 1:} $\sim_p$ is transitive.\medskip

Take any $x, y, z\in X$ such that $x\sim_p y$ and $y\sim_p z$. We shall prove that $\frac{p(x, \{x, z\})}{p(z, \{x, z\})}=\frac{p(x, A)}{p(z, A)}$ for any $A$ with $x, z\in A$.

Take any $A$ with $x, z\in A$ and $y\not\in A$. Since $x\sim_p y$ and $y\sim_p z$, by \nameref{ISA} we have  $\frac{p(x, A\cup y)}{p(z, A\cup y)}=\frac{p(x, A)}{p(z, A)}$ and $\frac{p(x, \{x, y, z\})}{p(z, \{x, y, z\})}=\frac{p(x, \{x, z\})}{p(z, \{x, z\})}$. By the definition of $\sim_p$, $x\sim_p y$ implies $\frac{p(x, A\cup y)}{p(y, A\cup y)}=\frac{p(x, \{x, y, z\})}{p(y, \{x, y, z\})}$ and $y\sim_p z$ implies $\frac{p(y, A\cup y)}{p(z, A\cup y)}=\frac{p(y, \{x, y, z\})}{p(z, \{x, y, z\})}$. By combining all the previous equalities, 
\begin{eqnarray*}
\frac{p(x, A)}{p(z, A)}&=&\frac{p(x, A\cup y)}{p(z, A\cup y)}=\frac{p(x, A\cup y)}{p(y, A\cup y)}\cdot\frac{p(y, A\cup y)}{p(z, A\cup y)}\\
&=&\frac{p(x, \{x, y, z\})}{p(y, \{x, y, z\})}\cdot\frac{p(y, \{x, y, z\})}{p(z, \{x, y, z\})}=\frac{p(x, \{x, y, z\})}{p(z, \{x, y, z\})}=\frac{p(x, \{x, z\})}{p(z, \{x, z\})}.\end{eqnarray*}
Hence, $x\sim_p z$.

\medskip
\noindent\textbf{Step 2:} Let $X/\sim_p\equiv \{X_i\}^K_{i=1}$; that is, for any $x_i, x'_i\in X_i$ and $x_j\in X_j$, $x_i\sim_p x'_i$ and $x_i\not\sim_p x_j$. Since $\sim_p$ is reflexive, transitive, and symmetric, we have a well-defined partition of $X$.\medskip

\medskip
\noindent\textbf{Step 3:} The construction of $u$.\medskip

Notice that for each $i\le K$, IIA is satisfied at all subsets of $X_i$. Therefore, for each $i$, there is a utility function $u_i:X_i\to\mathbb{R}_{++}$ such that $p(a, A)=\frac{u_i(a)}{\sum_{b\in A}u_i(b)}$ for any $A\subseteq X_i$ and $a\in A$ (as in the characterization of the Luce model). Since $X_1, \ldots, X_k$ are disjoint, we also have $u:X\to\mathbb{R}_{++}$ such that for any $A\subseteq X_i$ and $a\in A$, $p(a, A)=\frac{u(a)}{\sum_{b\in A}u(b)}$.

\medskip
\noindent\textbf{Step 4:} For any $A\in\mathscr{A}$ and $a\in A\cap X_i$, 
\[p(a, A)=\frac{u(a)}{\sum_{x\in A\cap X_i}u(x)}\,p(A\cap X_i, A).\]

Take any $A$ and $a\in A\cap X_i$. By the definitions of $\sim_p$, $u$, and $\{X_j\}^K_{j=1}$, we have $\frac{p(a', A)}{p(a, A)}=\frac{p(a', \{a', a\})}{p(a, \{a', a\})}=\frac{u(a')}{u(a)}$ for any $a'\in A\cap X_i$. Then $\frac{p(A\cap X_i, A)}{p(a, A)}=\frac{\sum_{a'\in A\cap X_i}u(a')}{u(a)}$. Hence, $p(a, A)=\frac{u(a)}{\sum_{a'\in A\cap X_i}u(a')}\,p(A\cap X_i, A)$.\medskip

\medskip
\noindent\textbf{Step 5.} Take alternatives $a, b, x$ such that $a\in X_i, b\in X_j$, and $x\in X_k$. By \nameref{ISA}, for any $A\in\mathscr{A}$ with $a, b, x\in A$, we have $\frac{p(a, A)}{p(b, A)}=\frac{p(a, A\setminus \{x\})}{p(b, A\setminus \{x\})}$ since $a\not\sim_p x$ and $b\not\sim_p x$. Equivalently, 
\[\frac{\frac{u(a)}{\sum_{y\in A\cap X_i}u(y)}}{\frac{u(b)}{\sum_{z\in A\cap X_j}u(z)}}\cdot\frac{p(A\cap X_i, A)}{p(A\cap X_j, A)}=
\frac{\frac{u(a)}{\sum_{y\in A\cap X_i}u(y)}}{\frac{u(b)}{\sum_{z\in A\cap X_j}u(z)}}\cdot\frac{p(A\cap X_i, A\setminus \{x\})}{p(A\cap X_j, A\setminus \{x\})}.\]
Therefore, for any $A\in\mathscr{A}$ and $a, b, x\in A$ such that $a\in X_i, b\in X_j$, and $x\in X_k$, 
\begin{equation}\label{IICeqn}\frac{p(A\cap X_i, A)}{p(A\cap X_j, A)}=\frac{p(A\cap X_i, A\setminus \{x\})}{p(A\cap X_j, A\setminus \{x\})}.\end{equation}

\medskip
\noindent\textbf{Step 6.} We will construct the nest utility function $v:\bigcup^K_{i=1} 2^{X_i}\to\mathds{R}_{+}$ for subsets of $X_2, \ldots, X_k$ in the following way.\medskip

First, let us take $\alpha_i\subseteq X_i$ where $i\ge 2$. Let
\[v(\alpha_i)\equiv\frac{p(\alpha_i, \alpha_i \cup X_1)}{p(X_1, \alpha_i \cup X_1)}.\]

\noindent\textbf{Fact 1.} For any $\alpha_i\subseteq X_i, \alpha_j\subseteq X_j$ with $i, j\ge 2$,

\[\frac{p(\alpha_i, \alpha_i \cup \alpha_j)}{p(\alpha_j, \alpha_i \cup \alpha_j)}=\frac{v(\alpha_i)}{v(\alpha_j)}.\]

\noindent\textbf{Proof of Fact 1.} Notice that from Equation (\ref{IICeqn}) we can obtain the following by repeatedly eliminating $x\in X_1$ from $\alpha_i\cup \alpha_j\cup X_1$:
\[\frac{p(\alpha_i, \alpha_i \cup\alpha_j\cup X_1)}{p(\alpha_j, \alpha_i\cup \alpha_j\cup X_1)}=\frac{p(\alpha_i, \alpha_i\cup \alpha_j)}{p(\alpha_j, \alpha_i \cup\alpha_j)}.\]
Similarly, from Equation (\ref{IICeqn}) we obtain
\[\frac{p(\alpha_i, \alpha_i\cup \alpha_j \cup X_1)}{p(X_1, \alpha_i \cup\alpha_j \cup X_1)}=\frac{p(\alpha_i, \alpha_i \cup X_1)}{p(X_1, \alpha_i \cup X_1)}=v(\alpha_i)\]
and \[\frac{p(\alpha_j, \alpha_i\cup \alpha_j\cup X_1)}{p(X_1, \alpha_i\cup \alpha_j\cup X_1)}=\frac{p(\alpha_j, \alpha_j\cup X_1)}{p(X_1, \alpha_j\cup X_1)}=v(\alpha_j).\]
Combining the above three equalities, we obtain
\[\frac{v(\alpha_i)}{v(\alpha_j)}=\frac{\frac{p(\alpha_i, \alpha_i\cup \alpha_j\cup X_1)}{p(X_1, \alpha_i\cup \alpha_j\cup X_1)}}{\frac{p(\alpha_j, \alpha_i\cup \alpha_j\cup X_1)}{p(X_1, \alpha_i\cup \alpha_j\cup X_1)}}=\frac{p(\alpha_i, \alpha_i\cup \alpha_j\cup X_1)}{p(\alpha_j, \alpha_i\cup \alpha_j\cup X_1)}=\frac{p(\alpha_i, \alpha_i\cup \alpha_j)}{p(\alpha_j, \alpha_i\cup \alpha_j)}.\]

\noindent\textbf{Fact 2.} For any $\alpha_i\subseteq X_i, \alpha_j\subseteq X_j$ with $i, j\ge 2$ and $A\subseteq \cup_{s\neq i, j} X_s$,

\[\frac{p(\alpha_i, \alpha_i\cup \alpha_j\cup A)}{p(\alpha_j, \alpha_i\cup \alpha_j\cup A) }=\frac{v(\alpha_i)}{v(\alpha_j)}.\]

\noindent\textbf{Proof of Fact 2.} By Equation (\ref{IICeqn}), we obtain the following equality by repeatedly eliminating $x\in A$ from $\alpha_i\cup \alpha_j\cup A$: 
\[\frac{p(\alpha_i, \alpha_i\cup \alpha_j\cup A)}{p(\alpha_j, \alpha_i\cup \alpha_j\cup A) }=\frac{p(\alpha_i, \alpha_i\cup \alpha_j)}{p(\alpha_j, \alpha_i\cup \alpha_j)}=\frac{v(\alpha_i)}{v(\alpha_j)}.\]

\medskip
\noindent\textbf{Step 7.} We will construct the nest utility function $v:\bigcup^K_{i=1} 2^{X_i}\to\mathds{R}_{+}$ for subsets of $X_1$ in the following way.\medskip

First, let us take $\alpha_1\subseteq X_1$. Let
\[v(\alpha_1)\equiv\frac{p(\alpha_1, \alpha_1\cup X_2)}{p(X_2, \alpha_1\cup X_2)}\cdot \frac{p(X_2, X_1\cup X_2)}{p(X_1, X_1\cup X_2)}.\]

\noindent\textbf{Fact 3.} For any $\alpha_2\subseteq X_2$,

\[\frac{p(\alpha_1, \alpha_1\cup \alpha_2)}{p(\alpha_2, \alpha_1\cup \alpha_2)}=\frac{v(\alpha_1)}{v(\alpha_2)}.\]

\noindent\textbf{Proof of Fact 3.} Since $v(\alpha_2)=\frac{p(\alpha_2, X_1\cup \alpha_2)}{p(X_1, X_1\cup \alpha_2)}$, we shall prove that 
\[\frac{p(\alpha_1, \alpha_1\cup X_2)}{p(X_2, \alpha_1\cup X_2)}\cdot \frac{p(X_2, X_1\cup X_2)}{p(X_1, X_1\cup X_2)}=\frac{p(\alpha_1, \alpha_1\cup \alpha_2)}{p(\alpha_2, \alpha_1\cup \alpha_2)}\cdot \frac{p(\alpha_2, X_1\cup \alpha_2)}{p(X_1, X_1\cup \alpha_2)}.\]

Notice that from Equation (\ref{IICeqn}) we can obtain the following equalities by repeatedly eliminating $x\in X_3$ from $\alpha_1\cup X_2\cup X_3$ and $\alpha_1\cup \alpha_2\cup X_3$,
\[\frac{p(\alpha_1, \alpha_1\cup X_2)}{p(X_2, \alpha_1\cup X_2)}\cdot \frac{p(X_2, X_1\cup X_2)}{p(X_1, X_1\cup X_2)}=\frac{p(\alpha_1, \alpha_1\cup X_2\cup X_3)}{p(X_2, \alpha_1\cup X_2\cup X_3)}\cdot \frac{p(X_2, X_1\cup X_2)}{p(X_1, X_1\cup X_2)}\]
and 
\[\frac{p(\alpha_1, \alpha_1\cup \alpha_2)}{p(\alpha_2, \alpha_1\cup \alpha_2)}\cdot \frac{p(\alpha_2, X_1\cup \alpha_2)}{p(X_1, X_1\cup \alpha_2)}=\frac{p(\alpha_1, \alpha_1\cup \alpha_2\cup X_3)}{p(\alpha_2, \alpha_1\cup \alpha_2\cup X_3)}\cdot \frac{p(\alpha_2, X_1\cup \alpha_2)}{p(X_1, X_1\cup \alpha_2)}.\]
Therefore, we shall prove that 
\[\frac{p(\alpha_1, \alpha_1\cup X_2\cup X_3)}{p(X_2, \alpha_1\cup X_2\cup X_3)}\cdot \frac{p(X_2, X_1\cup X_2)}{p(X_1, X_1\cup X_2)}=\frac{p(\alpha_1, \alpha_1\cup \alpha_2\cup X_3)}{p(\alpha_2, \alpha_1\cup \alpha_2\cup X_3)}\cdot \frac{p(\alpha_2, X_1\cup \alpha_2)}{p(X_1, X_1\cup \alpha_2)}.\]

Moreover,
\begin{eqnarray*}
\!\!\frac{p(\alpha_1, \alpha_1\cup X_2\cup X_3)}{p(X_2, \alpha_1\cup X_2\cup X_3)}\cdot \frac{p(X_2, X_1\cup X_2)}{p(X_1, X_1\cup X_2 )}\!\!&\!=\!&\!\!\frac{p(\alpha_1, \alpha_1\cup X_2\cup X_3)}{p(X_3, \alpha_1\cup X_2\cup X_3)}\cdot \frac{p(X_3, \alpha_1\cup X_2\cup X_3)}{p(X_2, \alpha_1\cup X_2\cup X_3)}\cdot \frac{p(X_2, X_1\cup X_2)}{p(X_1, X_1\cup X_2)}\\
\!\!&\!=\!&\!\!\frac{p(\alpha_1, \alpha_1\cup X_2\cup X_3)}{p(X_3, \alpha_1\cup X_2\cup X_3)}\cdot \frac{v(X_3)}{v(X_2)}\cdot \frac{p(X_2, X_1\cup X_2)}{p(X_1, X_1\cup X_2)}\text{, by Fact 2,}\\
\!\!&\!=\!&\!\!\frac{p(\alpha_1, \alpha_1\cup X_2\cup X_3)}{p(X_3, \alpha_1\cup X_2\cup X_3)}\cdot \frac{v(X_3)}{v(X_2)}\cdot v(X_2)\text{, by the definition of }v,\end{eqnarray*}and 
\begin{eqnarray*}
\frac{p(\alpha_1, \alpha_1\cup \alpha_2\cup X_3)}{p(\alpha_2, \alpha_1\cup \alpha_2\cup X_3)}\cdot \frac{p(\alpha_2, X_1\cup \alpha_2 )}{p(X_1, X_1\cup \alpha_2)}\!\!&=&\!\!\frac{p(\alpha_1, \alpha_1\cup \alpha_2\cup X_3)}{p(X_3, \alpha_1\cup \alpha_2\cup X_3)}\cdot\frac{p(X_3, \alpha_1\cup \alpha_2\cup X_3)}{p(\alpha_2, \alpha_1\cup \alpha_2\cup X_3)}\cdot \frac{p(\alpha_2, X_1\cup \alpha_2)}{p(X_1, X_1\cup \alpha_2)}\\
\!\!&=&\!\!\frac{p(\alpha_1, \alpha_1\cup \alpha_2\cup X_3)}{p(X_3, \alpha_1\cup \alpha_2\cup X_3)}\cdot\frac{v(X_3)}{v(\alpha_2)}\cdot \frac{p(\alpha_2, X_1\cup \alpha_2)}{p(X_1, X_1\cup\alpha_2)}\text{, by Fact 2,}\\
\!\!&=&\!\!\frac{p(\alpha_1, \alpha_1\cup \alpha_2\cup X_3)}{p(X_3, \alpha_1\cup\alpha_2\cup X_3)}\cdot\frac{v(X_3)}{v(\alpha_2)}\cdot v(\alpha_2)\text{, by the definition of $v$.}\end{eqnarray*}Finally, we shall prove that 
\begin{equation}\frac{p(\alpha_1, \alpha_1\cup X_2\cup X_3)}{p(X_3, \alpha_1\cup X_2\cup X_3)}=\frac{p(\alpha_1, \alpha_1\cup \alpha_2\cup X_3)}{p(X_3, \alpha_1\cup \alpha_2\cup X_3)},\end{equation}
which immediately follows from Equation (\ref{IICeqn}) by repeatedly eliminating $x\in X_2\setminus \alpha_2$ from $\alpha_1\cup X_2\cup X_3$.  

\medskip
\noindent\textbf{Fact 4.} For any $\alpha_i\subseteq X_i$ with $i\ge 3$,

\[\frac{p(\alpha_1, \alpha_1\cup \alpha_i)}{p(\alpha_i, \alpha_1\cup \alpha_i)}=\frac{v(\alpha_1)}{v(\alpha_i)}.\]
\noindent\textbf{Proof of Fact 4.} By Equation (\ref{IICeqn}) and Facts 2-3,
\begin{eqnarray*}
\frac{p(\alpha_1, \alpha_1\cup \alpha_i)}{p(\alpha_i, \alpha_1\cup \alpha_i)}&=&
\frac{p(\alpha_1, \alpha_1\cup \alpha_i\cup X_2)}{p(\alpha_i, \alpha_1\cup \alpha_i\cup X_2)}=\frac{p(\alpha_1, \alpha_1\cup \alpha_i\cup X_2)}{p(X_2, \alpha_1\cup \alpha_i\cup X_2)}\cdot \frac{p(X_2, \alpha_1\cup \alpha_i\cup X_2)}{p(\alpha_i, \alpha_1\cup \alpha_i\cup X_2)}\\
&=&\frac{p(\alpha_1, \alpha_1\cup X_2)}{p(X_2, \alpha_1\cup X_2)}\cdot \frac{v(X_2)}{v(\alpha_i)}=\frac{v(\alpha_1)}{v(X_2)}\cdot \frac{v(X_2)}{v(\alpha_i)}=\frac{v(\alpha_1)}{v(\alpha_i)}.
\end{eqnarray*}

\medskip
\noindent\textbf{Fact 5.} For any $\alpha_i\subseteq X_i$ with $i\ge 2$ and $A\subseteq \cup_{s\neq i, 1} X_s$,

\[\frac{p(\alpha_1, \alpha_1\cup \alpha_i\cup A)}{p(\alpha_i, \alpha_1\cup \alpha_i\cup A)}=\frac{v(\alpha_1)}{v(\alpha_i)}.\]

\noindent\textbf{Proof of Fact 5.} From Equation (\ref{IICeqn}), we have \[\frac{p(\alpha_1, \alpha_1\cup \alpha_i\cup A)}{p(\alpha_i, \alpha_1\cup \alpha_i\cup A)}=\frac{p(\alpha_1, \alpha_1\cup \alpha_i)}{p(\alpha_i, \alpha_1\cup \alpha_i)}=\frac{v(\alpha_1)}{v(\alpha_i)}.\]

\noindent\textbf{Step 8.} By Facts 2, 4, and 5, for any $A\in\mathscr{A}$ and $i, j\le K$,
\[\frac{p(A\cap X_i, A)}{p(A\cap X_j, A)}=\frac{v(A\cap X_i)}{v(A\cap X_j)}.\]
Since $\sum^K_{i=1}p(A\cap X_i, A)=1$, we have $p(A\cap X_i, A)=\frac{v(A\cap X_i)}{\sum^K_{j=1} v(A\cap X_j)}$. Since $\frac{p(a, A)}{p(b, A)}=\frac{u(a)}{u(b)}$ for any $a, b\in X_i$, we have 
\[p(a_i, A)=\frac{u(a_i)}{\sum_{x\in A\cap X_i}u(x)}\cdot\frac{v(A\cap X_i)}{\sum^K_{j=1} v(A\cap X_j)}\text{ for each }a_i\in A\cap X_i.\]

\noindent\textbf{Step 9.} An NSC $p$ is nondegenerate.\medskip

By way of contradiction, suppose there are $i, j\le K$ such that for some $a\in X_i$ and $b\in X_j$,  
\[\frac{\sum_{x\in A_i}u(x)}{v(A_i)}=\frac{u(a)}{v(a)}\text{ and }\frac{\sum_{y\in A_j}u(y)}{v(A_j)}=\frac{u(b)}{v(b)}\text{ for any }A_i\subseteq X_i\text{ and }A_j\subseteq X_j.\]
In other words, 
\[\frac{\sum_{x\in A_i}u(x)}{v(A_i)}\Big{/}\frac{u(a)}{v(a)}= \frac{\sum_{y\in A_j}u(y)}{v(A_j)}\Big{/}\frac{u(b)}{v(b)}\text{ for any }A_i\subseteq X_i\text{ and }A_j\subseteq X_j.\]
Then by NSC representation, we have $\frac{p(a, \{a, b\})}{p(b, \{a, b\})}=\frac{p(a, A)}{p(b, A)}$ for any $A\in\mathscr{A}$; i.e., $a\sim_p b$, which contradicts the construction of $\{X_k\}^K_{k=1}$.\bigskip 

\noindent\textbf{Necessity.} Suppose $p$ is a nondegenerate NSC with $(v, u, \{X_i\}^K_{i=1})$.\medskip 

\noindent\textbf{Step 1.} For any $a, b\in X$, $a\sim_p b$ if and only if either $a, b\in X_i$ for some $i$.\medskip 

Take any $a, b\in X$. We consider two cases.\medskip

\noindent\textbf{Case 1}. Suppose $a, b\in X_i$. \medskip

In this case, by NSC representation, $\frac{p(a, A)}{p(b, A)}=\frac{u(a)}{u(b)}=\frac{p(a, \{a, b\})}{p(b, \{a, b\})}$ for any $A\in\mathscr{A}$. Therefore, $a\sim_p b$.\medskip

\noindent\textbf{Case 2}. Suppose $a\in X_i$ and $b\in X_j$ with $i\neq j$.\medskip

We shall prove that $a\not\sim_p b$. By nondegeneracy of $p$, either 
\[\frac{\sum_{x\in A_i}u(x)}{v(A_i)}\neq \frac{u(a)}{v(a)}\text{ for some }A_i \subseteq X_i \text{ with }a\in A_i\]
or
\[\frac{\sum_{y\in A_j}u(y)}{v(A_j)}\neq \frac{u(b)}{v(b)}\text{ for some }A_j\subseteq X_j\text{ wtih }b\in A_j.\]
Without loss of generality, suppose the former is true. Then we have
\[\frac{p(a, \{a, b\})}{p(b, \{a, b\})}=\frac{v(a)}{v(b)}\neq \frac{p(a, A_i\cup b)}{p(b, A_i\cup b)}=\frac{u(a)}{\sum_{x\in A_i}u(x)}\frac{v(A_i)}{v(b)}.\]
Therefore, $a\not\sim_p b$.

\medskip
\noindent\textbf{Step 2.} The first part of \nameref{ISA} is satisfied.\medskip

Take any $A\in\mathscr{A}$, $a, b\in A$, and $x\not\in A$ such that $a\sim_p x$ and $b\sim_p x$. By Step 1, we have $a, b, x\in X_i$ for some $i$. Therefore, $a\sim_p b$ implies $\frac{p(a, A)}{p(b, A)}=\frac{p(a, \{a, b\})}{p(b, \{a, b\})}=\frac{p(a, A\cup x)}{p(b, A\cup x)}$.

\medskip
\noindent\textbf{Step 3.} The second part of \nameref{ISA} is satisfied.\medskip

Take any $A\in\mathscr{A}$, $a, b\in A$, and $x\not\in A$ such that $a\not\sim_p x$ and $b\not\sim_p x$. By Step 1, $a\not\sim_p x$ and $b\not\sim_p x$ imply $x\in X_i$ and $a, b\not\in X_i$ for some $i$.

\medskip
\noindent\textbf{Case 1.} $a, b\in X_j$ for some $j$.\medskip

Since $a\sim_p b$, 
\[\frac{p(a, A)}{p(b, A)}=\frac{p(a, \{a, b\})}{p(b, \{a, b\})}=\frac{p(a, A\cup x)}{p(b, A\cup x)}.\]    

\noindent\textbf{Case 2.} $a\in X_j$ and $b\in X_k$ for some $j, k$ with $j\neq k$.\medskip    

In this case, we have

\[\frac{p(a, A)}{p(b, A)}=\frac{\frac{u(a)}{\sum_{y\in A_j}u(y)}\,v(A_j)}{\frac{u(b)}{\sum_{z\in A_k}u(z)}\,v(A_k)}=\frac{p(a, A\cup x)}{p(b, A\cup x)}\]
since $A_j=(A\cup x)\cap X_j=A\cap X_j$ and $A_k=(A\cup x)\cap X_k=A\cap X_k$.

\subsection{Proof of \autoref{NSCuniqueprop}}

Since the uniqueness of Luce utilities and nest utilities follows standard arguments, we only prove that the nest structure is unique. By way of contradiction, suppose $p$ is a nondegenerate NSC with respect to both of $(v, u, \{X_i\}^K_{i=1})$ and $(v', u', \{X'_i\}^{K'}_{i=1})$, and $\{X'_i\}^{K'}_{i=1}$ is not a permutation of $\{X_i\}^{K}_{i=1}$. Without loss of generality, suppose $K'\ge K$. Then there are $x_i, x'_i\in X_i$ such that $x_i\in X'_j$ and $x'_i\in X'_k$. Since $\frac{p(x_i, \{x_i, x'_i\})}{p(x'_i, \{x_i, x'_i\})}=\frac{p(x_i, A)}{p(x'_i, A)}$ for any $A$, by NSC representation with $\{v', u', \{X'_i\}^{K'}_{i=1}\}$, we have
\[\frac{v'(x_i)}{v'(x'_i)}=\frac{\frac{u'(x_i)}{\sum_{x_j\in A_j} u'(x_j)} v'(A_j)}{\frac{u'(x'_i)}{\sum_{x_k\in A_k} u'(x_k)} v'(A_k)}\text{ for any }A_j\subseteq X'_j\text{ and }A_k\subseteq X'_k.\]

Let us first set $A_j=\{x_i\}$. Then we have $\frac{u'(x'_i)}{v'(x'_i)}=\frac{\sum_{x_k\in A_k} u'(x_k)}{v'(A_k)}\text{ for any }A_k\subseteq X'_k$.
Similarly, by setting $A_k=\{x'_i\}$, we also obtain $\frac{u'(x_i)}{v'(x_i)}=\frac{\sum_{x_j\in A_j} u'(x_j)}{v'(A_j)}\text{ for any }A_j\subseteq X'_j$. Therefore, we obtain a contradiction since the above two equalities contradict the assumption that NSC $p$ with $(v', u', \{X'_i\}^{K'}_{i=1})$ is nondegenerate.

\subsection{Proof of \autoref{NL}}

Since the necessity part is straightforward, we only prove the sufficiency part. Suppose $p$ is an NSC with $(v, u, \{X_i\}^K_{i=1})$ and that it satisfies \nameref{LRI}. If $K=1$, then we immediately have a Luce model. Suppose now $K\ge 2$. For each $i\le K$, fix some $a^*_i \in X_i$. Then define \[\eta_i\equiv\frac{\log\big(v(X_i)/v(\{a^*_i\})\big)}{\log\big(\sum_{a_i \in X_i}u(a_i)/u(a^*_i)\big)}.\]

For any $A \subset X_i$ and any $x \in X\setminus X_i$, by Log Ratio Invariance, we have

\[\frac{\log\Big(\frac{p(X_i,\, X_i\,\cup\, x)}{p(x,\, X_i\,\cup\, x)}\big{/}\frac{p(a^*_i,\, \{a^*_i,\, x\})}{p(x,\, \{a^*_i,\, x\})}\Big)}{\log\Big(\frac{p(X_i, \,X_i\,\cup\, a^*_i)}{p(a^*_i, \,X\,\cup\, a^*_i)}\Big)}=\frac{\log\Big(\frac{p(A,\, A\,\cup\, x)}{p(x,\, A\,\cup\, x)}\big{/}\frac{p(a^*_i,\, \{a^*_i,\, x\})}{p(x,\, \{a^*_i,\, x\})}\Big)}{\log\Big(\frac{p(A, \,A\,\cup\, a^*_i)}{p(a^*_i, \,A\,\cup\, a^*_i)}\Big)}.\] 
Since 
\[\frac{\log\Big(\frac{p(X_i,\, X_i\,\cup\, x)}{p(x,\, X_i\,\cup\, x)}\big{/}\frac{p(a^*_i,\, \{a^*_i,\, x\})}{p(x,\, \{a^*_i,\, x\})}\Big)}{\log\Big(\frac{p(X_i, \,X_i\,\cup\, a^*_i)}{p(a^*_i, \,X\,\cup\, a^*_i)}\Big)}=\frac{\log\Big(\frac{v(X_i)}{v(\{a^*_i\})}\Big)}{\log\Big(\frac{\sum_{b \in X_i}u(b)}{u(a^*_i)}\Big)}= \eta_i,\]
we have 
\[\eta_i=\frac{\log\Big(\frac{p(A,\, A\,\cup\, x)}{p(x,\, A\,\cup\, x)}\big{/}\frac{p(a^*_i,\, \{a^*_i,\, x\})}{p(x,\, \{a^*_i,\, x\})}\Big)}{\log\Big(\frac{p(A, \,A\,\cup\, a^*_i)}{p(a^*_i, \,A\,\cup\, a^*_i)}\Big)}=\frac{\log\Big(\frac{v(A)}{v(\{a^*_i\})}\Big)}{\log\Big(\frac{\sum_{b \in A}u(b)}{u(a^*_i)}\Big)};\]
equivalently, $\frac{v(A)}{v(\{a^*_i\})}=(\frac{\sum_{b \in A}u(b)}{u(a^*_i)})^{\eta_i}$. Let $\delta_i=\frac{v(\{a^*_i\})}{(u(a^*_i))^{\eta_i}}$. Then $v(A)=\delta_i\,(\sum_{b \in A}u(b))^{\eta_i}$. Therefore, $p$ is the nested logit with $(\eta_1, \ldots, \eta_K, u', \{X_i\}^K_{i=1})$ such that $u'(x)=\delta^{\frac{1}{\eta_i}}_i\,u(x)$ when $x\in X_i$.

\subsection{Proof of \autoref{proposition2}}

Suppose $p$ is a nested logit with $(\eta_1, \ldots, \eta_K, u, \{X_i\}^K_{i=1})$. Take any $A, A'\in \mathscr{A}$ such that IIA is satisfied at $A\cup A'$. We shall prove that Relative Likelihood Independence is satisfied. Since $p$ is a nondegenerate NSC, by Step 1 of the necessity part proof of Theorem 1, $x\sim_p y$ if and only if $x, y\in X_i$ for some $i\le K$. Take any $A, B, A', B'\in\mathscr{A}$ and $x\in X$ such that $a\sim_p a'$ for any $a, a'\in A\cup B\cup A'\cup B'$. By the previous argument, $A\cup B\cup A'\cup B'\subseteq X_i$. By the nested logit representation,

\[\frac{p(A, A\cup B)}{p(B, A\cup B)}=\frac{\sum_{a\in A} u(a)}{\sum_{b\in B} u(b)}\ge \frac{p(A', A'\cup B')}{p(B', A'\cup B')}=\frac{\sum_{a\in A'} u(a)}{\sum_{b\in B'} u(b)}.\]
When $x\in X_i$,
\[\frac{p(A, A\cup x)}{p(x, A\cup x)}\big{/}\frac{p(B, B\cup x)}{p(x, B\cup x)}=\frac{\sum_{a\in A} u(a)}{\sum_{b\in B} u(b)}
\ge \frac{p(A', A'\cup x)}{p(x, A'\cup x)}\big{/}\frac{p(B', B'\cup x)}{p(x, B'\cup x)}=\frac{\sum_{a\in A'} u(a)}{\sum_{b\in B'} u(b)}.\]
When $x\in X_j$ and $i\neq j$,
\[\frac{p(A, A\cup x)}{p(x, A\cup x)}\big{/}\frac{p(B, B\cup x)}{p(x, B\cup x)}=\frac{(\sum_{a\in A} u(a))^{\eta_i}}{(\sum_{b\in B} u(b))^{\eta_i}}
\ge \frac{p(A', A'\cup x)}{p(x, A'\cup x)}\big{/}\frac{p(B', B'\cup x)}{p(x, B'\cup x)}=\frac{(\sum_{a\in A'} u(a))^{\eta_i}}{(\sum_{b\in B'} u(b))^{\eta_i}}.\]
Hence, \[\frac{p(A, A\cup B)}{p(B, A\cup B)}\ge \frac{p(A', A'\cup B')}{p(B', A'\cup B')}\text{ implies }
\frac{p(A, A\cup x)}{p(x, A\cup x)}\big{/}\frac{p(B, B\cup x)}{p(x, B\cup x)}\ge \frac{p(A', A'\cup x)}{p(x, A'\cup x)}\big{/}\frac{p(B', B'\cup x)}{p(x, B'\cup x)}.\]

Suppose $p$ is the nondegenerate NSC with $(v, u, \{X_i\}^K_{i=1})$ and satisfies Relative Likelihood Independence. If $K=1$, we trivially obtain the desired result. Suppose $K\ge 2$. Take any $i, j\le K$ with $i\neq j$. Take any $A, B, A'\subseteq X_i$ and $x\in X_j$. By Relative Likelihood Independence,
\[\frac{p(A, A\cup B)}{p(B, A\cup B)}\ge 1\text{ implies }\frac{p(A, A\cup x)}{p(x, A\cup x)}\big{/}\frac{p(B, B\cup x)}{p(x, B\cup x)}\ge 1;\]
equivalently, \[p(A, A\cup B)\ge p(B, A\cup B)\text{ implies }\frac{p(A, A\cup x)}{p(x, A\cup x)}\ge \frac{p(B, B\cup x)}{p(x, B\cup x)}.\]
By the NSC representation, $\sum_{a\in A}u(a)\ge \sum_{b\in B}u(b)$ implies \[\frac{p(A, A\cup x)}{p(x, A\cup x)}=\frac{v(A)}{v(x)}\ge \frac{p(B, B\cup x)}{p(x, B\cup x)}=\frac{v(B)}{v(x)}.\]
Then $\sum_{a\in A}u(a)\ge \sum_{b\in B}u(b)$ implies $v(A)\ge v(B)$. Therefore, there is an increasing function $f_i:\mathds{R}_{++}\to\mathds{R}_{++}$ such that $v(A)=f_i\big(\sum_{a\in A}u(a)\big)$ for any $A\subseteq X_i$.

\subsection{Proof of \autoref{nlthm}}

Suppose $p$ is the nondegenerate NSC with $(v, u, \{X_i\}^K_{i=1})$ and satisfies Relative Likelihood Independence and Richness. We shall show that $p$ is a nested logit. If $K=1$, we trivially obtain the desired result. Suppose $K\ge 2$. Let us fix $i\le K$. 

Take any $j\le K$ with $i\neq j$. Take any $A, B, A', B'\subseteq X_i$ and $x\in X_j$. By Relative Likelihood Independence and the NSC representation,
\[\frac{p(A, A\cup B)}{p(B, A\cup B)}=\frac{\sum_{a\in A} u(a)}{\sum_{b\in B} u(b)}\ge \frac{p(A', A'\cup B')}{p(B', A'\cup B')}=\frac{\sum_{a\in A'} u(a)}{\sum_{b\in B'} u(b)}\]
\[\text{ implies }\frac{p(A, A\cup x)}{p(x, A\cup x)}\big{/}\frac{p(B, B\cup x)}{p(x, B\cup x)}=\frac{v(A)}{v(B)}\ge \frac{p(A', A'\cup x)}{p(x, A'\cup x)}\big{/}\frac{p(B', B'\cup x)}{p(x, B'\cup x)}=\frac{v(A')}{v(B')}.\]
Equivalently, \[\frac{\sum_{a\in A} u(a)}{\sum_{b\in B} u(b)}\ge \frac{\sum_{a\in A'} u(a)}{\sum_{b\in B'} u(b)}\text{ implies }\frac{v(A)}{v(B)}\ge \frac{v(A')}{v(B')}.\]
When $A'=B'$, $\sum_{a\in A}u(a)\ge \sum_{b\in B}u(b)$ implies $v(A)\ge v(B)$. Therefore, there is an increasing function $f_i:\mathds{R}_{++}\to\mathds{R}_{++}$ such that $v(A)=f_i\big(\sum_{a\in A}u(a)\big)$ for any $A\subseteq X_i$. Let $R_i=\{x\in\mathbb{R}_{++}| x=\sum_{a\in A} u(a) \text{ for some }A\in\mathscr{A}\}$. Then we have for any $v, v', w, w'\in R_i$, 
\[\frac{v}{v'}\ge \frac{w}{w'}\text{ implies } \frac{f_i(v)}{f_i(v')}\ge \frac{f_i(w)}{f_i(w')}.\]

Take any $a\in X_i$. By Richness, for any $\rho\in (0, 1)$, there is $b\in X_i$ such that $u(b)=\frac{1-\rho}{\rho}\, u(a)$. Therefore, for any $\alpha>0$, there is $b\in X_i$ such that $u(b)=\alpha$. Hence, $u(X_i)=\mathbb{R}_{++}$.

Take any $\alpha, \beta\in \mathbb{R}_{++}$. Since $u(X_i)=\mathbb{R}_{++}$, there are alternatives $a, b, a', b'\in X_i$ such that $u(a)=\alpha\beta, u(b)=\beta, u(a')=\alpha$, and $u(b')=1$. By the above implication of Relative Likelihood Consistency, we have
\[\frac{f_i(\alpha\,\beta)}{f_i(\beta)}=\frac{f_i(\alpha)}{f_i(1)}\text{ for any }\alpha, \beta>0.\]

Let $g(t)\equiv\frac{f_i(t)}{f_i(1)}$. Then $g(1)=1$ and $g(\alpha\,\beta)=g(\alpha)\,g(\beta)$ for any $\alpha, \beta>0$. Finally, we can prove that $g$ is a power function. Since $g>0$, let $h(t)=\log(g(\exp(t)))$ for any $t\in\mathds{R}$. Then for any $t, t'\in\mathds{R}$, we have $h(t+t')=\log(g(\exp(t+t')))=\log(g(\exp(t)\,\exp(t')))=\log(g(\exp(t))\,g(\exp(t')))=\log(g(\exp(t)))+\log(g(\exp(t')))=h(t)+h(t')$. We have obtained a typical Cauchy functional equation for $h$. Hence, there is $\eta_i\ge 0$ such that $h(t)=\eta_i\,t$. In other words, $g(a)=a^{\eta_i}$. Therefore,
\[p(a, A)=\frac{f_i(1)\,\big(\sum_{x\in A\cap X_i}u(x)\big)^{\eta_i}}{\sum_{j: A\cap X_j\neq\emptyset}f_j(1)\,\big(\sum_{y\in A\cap X_j}u(y)\big)^{\eta_j}}\frac{u(a)}{\sum_{b\in A\cap X_i} u(b)}.\]
\[=\frac{\big(\sum_{x\in A\cap X_i}\bar{u}(x)\big)^{\eta_i}}{\sum_{j: A\cap X_j\neq\emptyset}\,\big(\sum_{y\in A\cap X_j}\bar{u}(y)\big)^{\eta_j}}\frac{\bar{u}(a)}{\sum_{b\in A\cap X_i} \bar{u}(b)},\]
where $\bar{u}(x)=(f_i(1))^\frac{1}{\eta_i}\,u(x)$ for each $x\in X_i$. That is, $p$ is a nested logit.

\subsection{Proof of \autoref{runlprop}}

Suppose $p$ is the nested logit with $(\eta_1, \ldots, \eta_K, u, \{X_i\}^K_{i=1})$ and satisfies Regularity and Richness. If $K=1$, we obtain the desired result since the Luce model is a random utility nested logit. Suppose $K\ge 2$. Let us fix $i\le K$. We shall prove that $\eta_i\le 1$. Take any $a, a'\in X_i$ and $b'\in X_j$ with $i\neq j$. By Richness, for any $\rho\in (0, 1)$ there is $b\in X_j$ such that $u(b)=\frac{1-\rho}{\rho}\,u(b')$. By Regularity, $p(a, \{a, b\})\le p(a, \{a, b, a'\})$; i.e.,  

\[p(a, \{a, b\})=\frac{(u(a))^{\eta_i}}{(u(a))^{\eta_i}+(u(b))^{\eta_j}}\ge p(a, \{a, b, a'\})=\frac{u(a)}{u(a)+u(a')}\cdot\frac{(u(a)+u(a'))^{\eta_i}}{(u(a)+u(a'))^{\eta_i}+(u(b))^{\eta_j}}.\]
After simplifying the above inequality, we obtain 
\[\frac{(u(a)+u(a'))^{\eta_i}+(u(b))^{\eta_j}}{(u(a))^{\eta_i}+(u(b))^{\eta_j}}=1+\frac{(u(a)+u(a'))^{\eta_i}-(u(a))^{\eta_i}}{(u(a))^{\eta_i}+(u(b))^{\eta_j}}\ge \frac{(u(a)+u(a'))^{\eta_i-1}}{(u(a))^{\eta_i-1}};\]
equivalently, 
\[\frac{(u(a))^{\eta_i-1}}{(u(a))^{\eta_i}+(u(b))^{\eta_j}}\ge \frac{(u(a)+u(a'))^{\eta_i-1}-(u(a))^{\eta_i-1}}{(u(a)+u(a'))^{\eta_i}-(u(a))^{\eta_i}}.\]
Notice that when $\rho$ is close to $0$, we can obtain arbitrary large $u(b)$. Then the left-hand side of above inequality can be arbitrary close to zero. Therefore, the right-hand side must be negative. Hence, since $\eta_i>0$ implies $(u(a)+u(a'))^{\eta_i}-(u(a))^{\eta_i}>0$, we have $\eta_i\le 1$.

\subsection{Proof of \autoref{CNLthm}}

Before we proceed to the proof of \autoref{CNLthm}, it is useful to consider the following generalization of unrestricted cross-nested logit:
\begin{equation}\label{gcnl}
p(x, A)=\sum_{k: x\in A\cap X_k}\frac{u^k_x}{\sum_{y\in A\cap X_k}u^k_y}\cdot\frac{\Big(\sum_{y\in A\cap X_k}u^k_y\Big)^{\lambda}}{\sum_{l:A\cap X_l\neq\emptyset}\Big(\sum_{z\in A\cap X_l}u^l_z\Big)^{\lambda}}.\end{equation}

Note that the representation (\ref{gcnl}) reduces to unrestricted cross-nested logit by setting $u^k_x=\big(\alpha^k_x\, u(x)\big)^\frac{1}{\lambda}$. It turns out that the representation (\ref{gcnl}) is behaviorally equivalent to the unrestricted cross-nested logit.

\begin{lem}\label{cnl-md} Any stochastic choice $p$ that admits the representation (\ref{gcnl}) is an unrestricted cross-nested logit.
\end{lem}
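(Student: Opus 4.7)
The plan is to make the lemma completely explicit: starting from a collection $\{u^k_x\}_{x \in X_k}$ yielding the representation (\ref{gcnl}), I will construct a utility function $u:X\to\mathds{R}_{++}$ and an allocation $(\alpha^k_x)^K_{k=1}$ so that $u^k_x = (\alpha^k_x\,u(x))^{1/\lambda}$ holds term by term. Substituting this identity into (\ref{gcnl}) then produces exactly the unrestricted cross-nested logit formula of \autoref{cross-nested} (with $\lambda_k = \lambda$ for all $k$), which is the desired conclusion.

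The construction is dictated by the identity we want. Assuming $u^k_x>0$ whenever $x\in X_k$ (otherwise $x$ may be dropped from $X_k$ without loss), I define
\[u(x) \equiv \sum_{k:\,x\in X_k}(u^k_x)^{\lambda},\qquad \alpha^k_x \equiv \frac{(u^k_x)^{\lambda}}{u(x)}\ \text{ for }x\in X_k,\qquad \alpha^k_x\equiv 0\ \text{ for }x\notin X_k.\]
Then I verify the four requirements of \autoref{cross-nested}: (i) $u(x)>0$ since $x$ belongs to at least one nest and each $u^k_x>0$, so $\alpha^k_x$ is well-defined; (ii) $\sum_{k=1}^K \alpha^k_x = u(x)/u(x)=1$; (iii) $\alpha^k_x=0$ iff $x\notin X_k$ by construction; and (iv) $(\alpha^k_x\,u(x))^{1/\lambda} = ((u^k_x)^{\lambda})^{1/\lambda}=u^k_x$ for each $x\in X_k$. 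The last identity is the only computation that matters: substituting it into every occurrence of $u^k_x$ in (\ref{gcnl}) turns that formula into the unrestricted cross-nested logit expression, so $p$ admits the desired representation.

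There is essentially no obstacle here: the entire content of the lemma is the observation that the exponent $1/\lambda$ appearing in the cross-nested logit is precisely the one that allows any positive profile $\{u^k_x\}$ to be factored as an allocation weight times a global utility raised to $1/\lambda$. Once one writes down the natural candidate $u(x)=\sum_{k:\,x\in X_k}(u^k_x)^{\lambda}$, the normalization $\sum_k \alpha^k_x=1$ is automatic, and every other condition follows by inspection. The only point requiring mild care is the positivity/support convention for $u^k_x$ on $X_k$, which can always be enforced by redefining nest membership.
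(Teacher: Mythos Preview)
Your proof is correct and follows essentially the same approach as the paper: both define $u(x)=\sum_{k}(u^k_x)^{\lambda}$ and $\alpha^k_x=(u^k_x)^{\lambda}/u(x)$, then verify $u^k_x=(\alpha^k_x\,u(x))^{1/\lambda}$ and $\sum_k\alpha^k_x=1$. Your version is slightly more explicit about the support convention $\alpha^k_x=0\Leftrightarrow x\notin X_k$, but the argument is otherwise identical.
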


\begin{proof}Suppose that $p$ admits the representation (\ref{gcnl}) with $\{X_k\}^K_{k=1}$, $\{u^k_x\}_{k\le K,\,x\in X}$, and $\lambda$. Let us define $u:X\to\mathds{R}_{++}$ and $\alpha^k_x$ as follows: for each $x\in X$ and $k\le K$,

\[u(x)\equiv\sum^K_{l=1}(u^l_x)^{\lambda}\text{ and }\alpha^k_x\equiv\frac{(u^k_x)^{\lambda}}{\sum^K_{l=1}(u^l_x)^{\lambda}}.\]
Then we have $u^k_x=(\alpha^k_x\,u(x))^{\frac{1}{\lambda}}$ and $\sum^K_{k=1}\alpha^k_x=1$. Hence, we obtain an unrestricted cross-nested logit.
\end{proof}

By \autoref{cnl-md}, we shall prove that any stochastic choice function $p$ admits the representation (\ref{gcnl}) with some $\{X_k\}^K_{k=1}, \{u^k_x\}_{k\le K,\,x\in X}$, and $\lambda$. We first set the collection of subsets $X_1, \ldots, X_K$ to be equivalent to $\mathscr{A}$. That is, for any $A\in\mathscr{A}$, there $i\le K$ such that $A=X_k$. Moreover, $X_i\neq X_j$ whenever $i\neq j$. Let us now write $u^A_x$ rather than $u^k_x$ when $A=X_k$. Then we shall find $U=\{u^A_x\}_{A\in\mathscr{A}, x\in A}\in\mathds{R}^N_{++}$ where $N\equiv \sum_{A\in\mathscr{A}}|A|$ and $\lambda>0$ such that
\begin{equation}
p(x, A)=\sum_{B: x\in B}\frac{u^B_x}{\sum_{y\in A\cap B}u^B_y}\cdot\frac{\Big(\sum_{y\in A\cap B}u^B_y\Big)^{\lambda}}{\sum_{C:A\cap C\neq\emptyset}\Big(\sum_{z\in A\cap C}u^C_z\Big)^{\lambda}}.
\end{equation}

We prove the above by two steps.

\medskip
\noindent\textbf{Step 1.} There is some function $\sigma:\mathds{R}^N_{++}\to \mathds{R}^N$ such that $U\in \mathds{R}^N_{++}$ is a fixed point of $\sigma$ iff $p$ admits the representation (15) with respect to $U$.\bigskip

Let $\tilde{U}=\{\tilde{u}^A_x\}_{A\in\mathscr{A}, x\in A}$. Let us define the following mapping for each $U, \tilde{U}$, and $\lambda$: 
\begin{eqnarray*}
q(x, A|U, \tilde{U}, \lambda)&\equiv&\frac{u^A_x}{\sum_{y\in A} u^A_y}\cdot\frac{\Big(\sum_{y\in A}\tilde{u}^A_y\Big)^{\lambda}}{\sum_{C:A\cap C\neq\emptyset}\Big(\sum_{z\in A\cap C}\tilde{u}^C_z\Big)^{\lambda}}\\
&+&\sum_{B: x\in B, A\neq B}\frac{\tilde{u}^B_x}{\sum_{y\in A\cap B}\tilde{u}^B_y}\cdot\frac{\Big(\sum_{y\in A\cap B}\tilde{u}^B_y\Big)^{\lambda}}{\sum_{C:A\cap C\neq\emptyset}\Big(\sum_{z\in A\cap C}\tilde{u}^C_z\Big)^{\lambda}}.\\
\end{eqnarray*}
Now note that it is enough for us to find $U$ and $\lambda$ such that $p(x, A)=q(x, A|U, U, \lambda)$. For notational simplicity, let 
\[f^A(\tilde{U}, \lambda)\equiv\frac{\Big(\sum_{y\in A}\tilde{u}^A_y\Big)^{\lambda}}{\sum_{C:A\cap C\neq\emptyset}\Big(\sum_{z\in A\cap C}\tilde{u}^C_z\Big)^{\lambda}}\]
and 
\[g^A_x(\tilde{U}, \lambda)\equiv\sum_{B: x\in B, A\neq B}\frac{\tilde{u}^B_x}{\sum_{y\in A\cap B}\tilde{u}^B_y}\cdot\frac{\Big(\sum_{y\in A\cap B}\tilde{u}^B_y\Big)^{\lambda}}{\sum_{C:A\cap C\neq\emptyset}\Big(\sum_{z\in A\cap C}\tilde{u}^C_z\Big)^{\lambda}}.\]
Then we have $q(x, A|U, \tilde{U}, \lambda)=\frac{u^A_x}{\sum_{y\in A} u^A_y}\cdot f^A(\tilde{U}, \lambda)+g^A_x(\tilde{U}, \lambda)$. For any $M\equiv\{m^A\}_{A\in\mathscr{A}}\in \mathds{R}^{|\mathscr{A}|}_{++}$, let 
\[\sigma^A_x(\tilde{U}, \lambda, M)\equiv m^A\,\frac{p(x, A)-g^A_x(\tilde{U}, \lambda)}{f^A(\tilde{U}, \lambda)}\text{ and }\sigma(\tilde{U}, \lambda, M)\equiv \{\sigma^A_x(\tilde{U}, \lambda, M)\}_{A\in\mathscr{A}, x\in A}.\]
Note that $f^A(\cdot, \lambda, M)$ and $g^A_x(\cdot, \lambda, M)$ are strictly positive and continuous functions on $\mathds{R}^N_{++}$. Hence, $\sigma(\cdot, \lambda, M):\mathds{R}^N_{++}\to \mathds{R}^N$ is continuous for each $(\lambda, M)$.  Moreover,

\begin{eqnarray*}
q(x, A|\sigma(\tilde{U}, \lambda, M), \tilde{U}, \lambda)&=&\frac{\sigma^A_x(\tilde{U}, \lambda, M)}{\sum_{y\in A} \sigma^A_y(\tilde{U}, \lambda, M)}\cdot f^A(\tilde{U}, \lambda)+g^A_x(\tilde{U}, \lambda)\\
&=&\frac{m^A\,\frac{p(x, A)-g^A_x(\tilde{U}, \lambda)}{f^A(\tilde{U}, \lambda)}}{\sum_{y\in A}m^A\,\frac{p(y, A)-g^A_y(\tilde{U}, \lambda)}{f^A(\tilde{U}, \lambda)}}\cdot f^A(\tilde{U}, \lambda)+g^A_x(\tilde{U}, \lambda)\\
&=&p(x, A)\text{ since }f^A(\tilde{U}, \lambda)=1-\sum_{x\in A}g^A_x(\tilde{U}, \lambda).
\end{eqnarray*}
Therefore, it is enough to find $U\in \mathds{R}^N_{++}$ such that $\sigma(U, \lambda, M)=U$; i.e., a fixed point of $\sigma(\cdot, \lambda, M)$ in $\mathds{R}^N_{++}$. To apply Brouwer's fixed point theorem,\footnote{\noindent\textbf{Brouwer's fixed point theorem:} Let $S\subset \mathds{R}^m$ be convex and compact and let $f:S\to S$ be continuous. Then $f$ has a fixed point; that is, there is $s\in S$ such that $f(s)=s$. For example, see \cite{ok2007real}, p.279.} we shall show that for some $(\lambda, M)$ there is a non-empty, convex, compact set $S\subset R^{N}_{++}$ such that $\sigma(\cdot, \lambda, M)$ is a self-map on $S$; that is, $\sigma(\cdot, \lambda, M):S\to S$.\medskip

\noindent\textbf{Step 2.} For some $\lambda$ and $M$, there is a non-empty, closed, convex set $S\subset R^{N}_{++}$ such that $\sigma(U, \lambda, M)\in S$ for any $U\in S$. Let $p^*\equiv\min\{\min_{B\in\mathscr{A}, y\in B} p(y, B), \frac{1}{|X|}\}>0$ and

\[S\equiv\{U\in\mathds{R}^N_{++}|\sum_{x\in A} u^A_x=1+\frac{|A|}{|X|^2} p^*\text{ and }\sum_{x\in B}u^A_x\le 1-\frac{(p^*)^2}{4}\text{ for any }A, B\text{ with }B\subset A\}.\]

\noindent\textbf{Step 2.1.} $S$ is non-empty. 

We will show that $U\in S$ when $u^A_x=\frac{1}{|A|}+\frac{p^*}{|X|^2}$ for each $A\in\mathscr{A}$ and $x\in X$. First, $\sum_{x\in A} u^A_x=1+\frac{|A|}{|X|^2}p^*$. Second, for any $B\subset A$, 
\[\sum_{x\in B} u^A_x\le (|A|-1)\big(\frac{1}{|A|}+\frac{p^*}{|X|^2}\big)=1-(\frac{1}{|A|}+\frac{p^*}{|X|^2}-\frac{|A|}{|X|^2}p^*)<1-(\frac{1}{|A|}-\frac{|A|}{|X|^2}p^*)\]
\[\le 1-(\frac{1}{|X|}-\frac{|X|}{|X|^2}p^*)=1-\frac{(1-p^*)}{|X|}\le 1-\frac{1}{2|X|}\le 1-\frac{p^*}{2}\le 1-\frac{(p^*)^2}{4}.\]

\medskip
\noindent\textbf{Step 2.2.} $S$ is convex. 

Let
\[S_1\equiv\{U\in\mathds{R}^N_{++}|\sum_{x\in A} u^A_x=1+\frac{|A|}{|X|^2} p^*\}\]
and
\[S_2\equiv\{U\in\mathds{R}^N_{++}|\sum_{x\in B}u^A_x\le 1-\frac{(p^*)^2}{4}\text{ for any }A, B\text{ with }B\subset A\}.\]
Note that $S_1, S_2$ are convex sets. Hence, $S=S_1\cap S_2$ is convex. 

\medskip
\noindent\textbf{Step 2.3.} $S$ is compact.

Note that $S\subset [0, 2]^N$. Hence, $S$ is bounded. Moreover, note that $S_1, S_2$ are closed sets. Hence, $S=S_1\cap S_2$ is closed.

\medskip
\noindent\textbf{Step 2.4.} For some $\lambda$, $f^A(U, \lambda)\in (1-\frac{p^*}{2}, 1)$ for any $U\in S$.\medskip

First, it is immediate that
\[f^A(U, \lambda)=\frac{\Big(\sum_{y\in A} u^A_y\Big)^{\lambda}}{\sum_{C:A\cap C\neq\emptyset}\Big(\sum_{z\in A\cap C}u^C_z\Big)^{\lambda}}<1\text{ when }U\gg 0.\]
Second, in order to show that 
\[f^A(U, \lambda)=\frac{\Big(\sum_{y\in A} u^A_y\Big)^{\lambda}}{\sum_{C:A\cap C\neq\emptyset}\Big(\sum_{z\in A\cap C}u^C_z\Big)^{\lambda}}>1-\frac{p^*}{2},\]
it is enough to prove that 
\[\frac{1}{f^A(U, \lambda)}-1=\frac{\sum_{C:A\cap C\neq\emptyset, C\neq A}\Big(\sum_{z\in A\cap C}u^C_z\Big)^{\lambda}}{\Big(\sum_{y\in A} u^A_y\Big)^{\lambda}}<\frac{p^*}{2}.\]
By the construction of $S$, we have $\Big(\sum_{y\in A} u^A_y\Big)^{\lambda}=\Big(1+\frac{|A|}{|X|^2} p^*\Big)^{\lambda}$.
Moreover, 
\[\sum_{C:A\cap C\neq\emptyset, C\neq A}\Big(\sum_{z\in A\cap C}u^C_z\Big)^{\lambda}=\sum_{B\subset A}\Big(\sum_{y\in B}u^B_y\Big)^{\lambda}+\sum_{C:A\cap C\neq\emptyset, C\not{\subseteq} A}\Big(\sum_{z\in A\cap C}u^C_z\Big)^{\lambda}\]
and by the construction of $S$, 
\[\sum_{B\subset A}\Big(\sum_{y\in B}u^B_y\Big)^{\lambda}=\sum_{B\subset A}\Big(1+\frac{|B|}{|X|^2} p^*\Big)^{\lambda}<2^{|X|}\Big(1+\frac{|A|-1}{|X|^2} p^*\Big)^{\lambda}.\]
Moreover, since $C\cap A\neq C$ whenever $C\nsubseteq A$, by the construction of $S$,
\[\sum_{C:A\cap C\neq\emptyset, C\nsubseteq A}\Big(\sum_{z\in A\cap C}u^C_z\Big)^{\lambda}\le \sum_{C:A\cap C\neq\emptyset, C\nsubseteq A}\Big(1-\frac{(p^*)^2}{4}\Big)^{\lambda}<2^{|X|}.\]
Combining the last two inequalities, we have
\[\frac{1}{f^A(U, \lambda)}-1=\frac{\sum_{C:A\cap C\neq\emptyset, C\neq A}\Big(\sum_{z\in A\cap C}u^C_z\Big)^{\lambda}}{\Big(\sum_{y\in A} u^A_y\Big)^{\lambda}}<\frac{2^{|X|}(1+\frac{|A|-1}{|X|^2} p^*)^{\lambda}+2^{|X|}}{(1+\frac{|A|}{|X|^2} p^*)^{\lambda}}.\]
Let $\lambda^*\equiv\max_{A\in\mathscr{A}}\frac{\log\Big(\frac{2^{|X|+2}}{p^*}\Big)}{\log\Big(\frac{1+\frac{|A|}{|X|^2} p^*}{1+\frac{|A|-1}{|X|^2} p^*}\Big)}$ and $\lambda>\lambda^*$. Then we have $\frac{2^{|X|+2}}{p^*}<\Big(\frac{1+\frac{|A|}{|X|^2} p^*}{1+\frac{|A|-1}{|X|^2} p^*}\Big)^\lambda$. Consequently,
\[\frac{p^*}{2}>\frac{2^{|X|+1}\big(1+\frac{|A|-1}{|X|^2} p^*\big)^{\lambda}}{\big(1+\frac{|A|}{|X|^2} p^*\big)^{\lambda}}>\frac{2^{|X|}\big(1+\frac{|A|-1}{|X|^2} p^*\big)^{\lambda}+2^{|X|}}{\big(1+\frac{|A|}{|X|^2} p^*\big)^{\lambda}}>\frac{1}{f^A(U, \lambda)}-1.\]

\medskip
\noindent\textbf{Step 2.5.} When $\lambda>\lambda^*$, $g^A_x(U, \lambda)\in [0, \frac{p^*}{2})$ for any $U\in S$.\medskip

It is immediate that $g^A_x(U, \lambda)\ge 0$. Moreover, since $\sum_{x\in A} g^A_x(U, \lambda)=1-f^A(U, \lambda)\in (0, \frac{p^*}{2})$ by Step 2.4, $g^A_x(U, \lambda)< \frac{p^*}{2}$.

\medskip
\noindent\textbf{Step 2.6.} Let $m^A=1+\frac{|A|}{|X|^2} p^*$ and $\lambda>\lambda^*$. Then $\sigma(U, \lambda, M)\in S$ for any $U\in S$.

\medskip
To show that $\sigma(U, \lambda, M)\in S$, we shall prove that for any $A\in\mathscr{A}$, (i) $\sigma^A_x(U, \lambda, M)>0$, (ii) $\sum_{x\in A} \sigma^A_x(U, \lambda, M)=1+\frac{|A|}{|X|^2} p^*$, and (iii) $\sum_{y\in B} \sigma^A_y(U, \lambda, M)\le 1-\frac{(p^*)^2}{4}$ for any $B\subset A$.

\bigskip
\noindent\textbf{Step 2.6.(i).} $\sigma^A_x(U, \lambda, M)>\frac{m^A\,p^*}{2}$.
\bigskip

By Step 2.5 and the definition of $p^*$, we have $p(x, A)\ge p^*$ and $\frac{p^*}{2}>g^A_x(U, \lambda)$. Therefore, since $f^A(U, \lambda)<1$, $\sigma^A_x(U, \lambda, M)=m^A\,\frac{p(x, A)-g^A_x(U, \lambda)}{f^A(U, \lambda)}>m^A\,\frac{\frac{p^*}{2}}{f^A(U, \lambda)}>\frac{m^A\,p^*}{2}$.

\bigskip
\noindent\textbf{Step 2.6.(ii).} $\sum_{x\in A} \sigma^A_x(U, \lambda, M)=1+\frac{|A|}{|X|^2} p^*$.
\bigskip

Since $f^A(U, \lambda)=1-\sum_{x\in A} g^A_x(U, \lambda)$ and by the definition of $m^A$,\\ $\sum_{x\in A}\sigma^A_x(U, \lambda, M)=$$\sum_{x\in A} m^A\,\frac{p(x, A)-g^A_x(U, \lambda)}{f^A(U, \lambda)}=m^A\,\frac{1-\sum_{x\in A} g^A_x(U, \lambda)}{f^A(U, \lambda)}=m^A=1+\frac{|A|}{|X|^2} p^*$.

\bigskip
\noindent\textbf{Step 2.6.(iii).} $\sum_{y\in B} \sigma^A_y(U, \lambda, M)\le 1-\frac{(p^*)^2}{4}$ for any $B\subset A$.
\bigskip

Suppose $x\in A\setminus B$. Then by Step 2.6.(i),
\[\sum_{y\in B} \sigma^A_y(U, \lambda, M)\le \sum_{y\in A}\sigma^A_y(U, \lambda, M)-\sigma^A_x(U, \lambda, M)=m^A-\sigma^A_x(U, \lambda, M)<m^A-\frac{m^A\,p^*}{2}.\]
Finally,
\[m^A-\frac{m^A\,p^*}{2}=(1-\frac{p^*}{2})(1+\frac{|A|}{|X|^2} p^*)\le (1-\frac{p^*}{2})(1+\frac{|X|}{|X|^2} p^*)\le (1-\frac{p^*}{2})(1+\frac{1}{2} p^*)=1-\frac{(p^*)^2}{4}.\]

To sum up, Step 1 shows that $p$ admits the representation (15) with respect to $U$ if and only if $U$ is a fixed point of $\sigma$. Step 2 shows that $\sigma$ has a fixed point by Brouwer's fixed point theorem. Therefore, by Steps 1-2 and Lemma 1, any $p$ is an unrestricted cross-nested logit.

\subsection{Proof of \autoref{NMPprop}} 

We use standard strong consistency results for M-estimators (e.g., see p. 2121-2 of Newey and McFadden (1994)). To obtain $\hat{\mathcal{X}}\xrightarrow{a.s.}\mathcal{X}^*$, since the set of all nest structures is finite, we only need to prove that $\mathcal{X}^*$ is the unique minimizer of $D^*$ and $D\xrightarrow{a.s.} D^*$.  

We first simplify the calculation of $D_1(\mathcal{Y})$. Let $\bar{\epsilon}_{B, A}\equiv\frac{\sum_{a\in A\cap B}\epsilon_{a, A}}{\overline{p}(A\cap B, A)}$. Then $r_A(a, b)=\frac{\overline{p}(a, A)+\epsilon_{a, A}}{\overline{p}(b, A)+\epsilon_{b, A}}=\bar{r}_A(a, b)\,\frac{1+\bar{\epsilon}_{a, A}}{1+\bar{\epsilon}_{b, A}}$. Let
\[\zeta_{a, b, A, B}\equiv\log\Big(\frac{1+\bar{\epsilon}_{a, A}}{1+\bar{\epsilon}_{b, A}}\Big{/}\frac{1+\bar{\epsilon}_{a, B}}{1+\bar{\epsilon}_{b, B}}\Big)\text{ and }\delta_{a, b, A, B}\equiv \log\Big(\frac{\bar{r}_A(a, b)}{\bar{r}_B(a, b)}\Big).\]
Then
\[\log\Big(\frac{r_A(a, b)}{r_B(a, b)}\Big)=\delta_{a, b, A, B}+\zeta_{a, b, A, B}.\]

We now simplify the calculation of $D_2(\mathcal{Y})$. Note that
\[r_A(Y, Y')=\frac{p(A\cap Y, A)}{p(A\cap Y', A)}=\frac{\overline{p}(A\cap Y, A)+\sum_{a\in A\cap Y}\epsilon_{a, A}}{\overline{p}(A\cap Y', A)+\sum_{a\in A\cap Y'}\epsilon_{a, A}}=\bar{r}_A(Y, Y')\,\frac{1+\bar{\epsilon}_{Y, A}}{1+\bar{\epsilon}_{Y', A}}.\]
Similarly, let 
\[\zeta_{Y, Y', A, B}\equiv\log\Big(\frac{1+\bar{\epsilon}_{Y, A}}{1+\bar{\epsilon}_{Y', A}}\Big{/}\frac{1+\bar{\epsilon}_{Y, B}}{1+\bar{\epsilon}_{Y', B}}\Big)\text{ and }\delta_{Y, Y', A, B}\equiv \log\Big(\frac{\bar{r}_A(Y, Y')}{\bar{r}_B(Y, Y')}\Big).\]
Then
\[\log\big(\frac{r_A(Y, Y')}{r_B(Y, Y')}\big)=\delta_{Y, Y', A, B}+\zeta_{Y, Y', A, B}.\]
Let $N_1(\mathcal{Y})=\sum_{Y\in\mathcal{Y}}|\{(A, B, a, b)| a, b\in A\cap B\cap Y\}|$ and $N_2(\mathcal{Y})=\sum_{Y, Y'\in\mathcal{Y}} |\{(A, B)| A\cap Y=B\cap Y,\,A\cap Y'=B\cap Y'\}|$ and take any $M>\max_{\mathcal{Y}}N_1(\mathcal{Y}), \max_{\mathcal{Y}} N_2(\mathcal{Y})$. Hence, 
\[D(\mathcal{Y})=\frac{\sum_{Y\in \mathcal{Y}}\sum_{A, B\in\mathscr{A}, a, b\in A\cap B\cap Y}
 (\delta_{a, b, A, B}+\zeta_{a, b, A, B})^2}{N_1(\mathcal{Y})}\]
 \[+\frac{\sum_{Y, Y'\in\mathcal{Y}} \sum_{A, B\in\mathscr{A}: A\cap Y=B\cap Y,\, A\cap Y'=B\cap Y'}(\delta_{Y, Y', A, B}+\zeta_{Y, Y', A, B})^2}{N_2(\mathcal{Y})}.\]

We then show that $\mathcal{X}^*$ is the unique minimizer of $D^*$. Since $D^*(\mathscr{X}^*)=0$, we shall show that  $D^*(\mathcal{Y})>0$ for any nest structure $\mathcal{Y} \neq \mathcal{X}^*$. It is enough to consider the following two cases.\medskip

\noindent\textbf{Case 1.} $\mathcal{Y}$ is a partition of $X$ such that there are $a\in X_i$ and $b\in X_j$ such that $a, b\in Y$ for some $Y\in\mathcal{Y}$.\medskip

By Assumption 1, there are $A, B$ with $\bar{r}_A(a, b)\neq \bar{r}_B(a, b)$; i.e., $\delta_{a, b, A, B}\neq 0$. Then we have $D^*(\mathcal{Y})\ge D^*_1(\mathcal{Y})>(\delta_{a, b, A, B})^2/M>0$.

\smallskip
\noindent\textbf{Case 2.} $\mathcal{Y}$ is a partition of $X$ such that for any $Y\in\mathcal{Y}$, $Y\subseteq X_i$ for some $i$, and $Y'\subset X_j$ for some $Y'\in\mathcal{Y}$ and $j$.\medskip

Take any $Y, Y'$ such that $Y\subset X_i$ and $Y'\subseteq X_j$. By Assumption 1, there are $A, B$ such such that $\bar{r}_A(Y, Y')\neq \bar{r}_B(Y, Y')$, $A\cap Y=B\cap Y$, and $A\cap Y'=B\cap Y'$. That is, $\delta_{Y, Y', A, B}\neq 0$.  Then we have $D^*(\mathcal{Y})\ge D^*_2(\mathcal{Y})\ge (\delta_{Y, Y', A, B})^2/M>0$.

\medskip
We finally show that $D(\mathcal{Y})\xrightarrow{a.s.}  D^*(\mathcal{Y})$ for every $\mathcal{Y}$. We have \begin{eqnarray*}D(\mathcal{Y})-D^*(\mathcal{Y})&=&\sum \frac{\big((\delta_{a, b, A, B}+\zeta_{a, b, A, B})^2-(\delta_{a, b, A, B})^2\big)}{N_1(\mathcal{Y})}+\sum\frac{\big((\delta_{Y, Y', A, B}+\zeta_{Y, Y', A, B})^2-(\delta_{Y, Y', A, B})^2\big)}{N_2(\mathcal{Y})}\\
&=&\sum \frac{\zeta_{a, b, A, B}\,(2\,\delta_{a, b, A, B}+\zeta_{a, b, A, B})}{{N_1(\mathcal{Y})}}+\sum\frac{\zeta_{Y, Y', A, B}\,(2\,\delta_{Y, Y', A, B}+\zeta_{Y, Y', A, B})}{{N_2(\mathcal{Y})}}\xrightarrow{a.s.} 0
\end{eqnarray*}
since $\zeta_{a, b, A, B}\xrightarrow{a.s.} 0$, $\zeta_{Y, Y', A, B}\xrightarrow{a.s.} 0$, and $\delta_{a, b, A, B}$ and $\delta_{Y, Y', A, B}$ are constants.

\subsection{Proof of \autoref{Reduceprop}} 

Take any $a, b\in X$. As we showed in the proof of Proposition 6, we have $\log\big(r_A(a, b)/r_B(a, b)\big)=\delta_{a, b, A, B}+\zeta_{a, b, A, B}$. If $a, b\in X_i$ for some $i$, then
\[d(a, b)=\frac{\sum_{A, B\in\mathscr{A}: a, b\in A\cap B} \zeta^2_{a, b, A, B}}{|\{(A, B, a, b)| a, b\in A\cap B\}|}\xrightarrow{a.s.} 0.\]

If $a\in X_i$ and $b\in X_j$ for some $i, j$ with $i\neq j$, then by Assumption 1, there are $A^*, B^*$ with $\bar{r}_{A^*}(a, b)\neq \bar{r}_{B^*}(a, b)$; i.e., $\delta_{a, b, A^*, B^*}\neq 0$. Hence, 
\[d(a, b)>\frac{(\delta_{a, b, A^*, B^*}+\zeta_{a, b, A^*, B^*})^2}{M}\ge \frac{(\delta_{a, b, A^*, B^*})^2}{2M}\text{ almost surely}.\]
Let
\[\epsilon^*\equiv \min_{a', b', A', B': \delta_{a', b', A', B'}\neq 0}\frac{(\delta_{a', b', A', B'})^2}{2M}.\]
Then by the previous inequality, $d(a, b)>\epsilon^*$ almost surely.  Since $\epsilon^*>0$, there is $\bar{N}$ such that for any $N^*>\bar{N}$,\[\max_{i}\max_{a, b\in X_i} d(a, b)<\epsilon^*<\min_{i<j}\min_{a'\in X_i, b'\in X_j} d(a', b')\,\text{ with probability one.}\]

\subsection{Proofs of \autoref{NestReduceprop} and Corollary 2} 

\noindent\textbf{Proof of \autoref{NestReduceprop}.} Since $X$ and $\mathscr{A}$ are finite, there is a set $\{d_1, \ldots, d_m\}$ of positive real numbers such that $\max_{a, b} d(a, b)=d_m>\ldots>d_2>d_1=\min_{a, b} d(a, b)\ge 0$ and for any $a', b'\in X$, $d(a', b')=d_s$ for some $s\le m$. Hence, it is immediate that $|\mathscr{X}^*|\le |X|^2$.

Let us prove $|\mathscr{X}^*|\le |X|$ by induction on the number of alternatives $|X|$. When $|X|=2$, we have $|\mathscr{X}^*|\le |\mathscr{X}|=2=|X|$. Suppose that the hypothesis is true for any set $X$ with $|X|=k$. We shall prove that this also holds for all sets $X$ with $|X|=k+1$.

Take a set $X$ and suppose $|X|=k+1$. Let $a^*, b^*$ be elements of $X$ such that $d_1=d(a^*, b^*)$. Hence, for any $\epsilon\in (d_1, d_m]$, $a^*\sim_\epsilon b^*$. In other words, $a^*$ and $b^*$ belong to the same nest for any partition $\mathcal{X}_\epsilon$ with $\epsilon\in (d_1, d_m]$. If $d_1=d(a^*, b^*)=d_m=\max_{a, b\in X} d(a, b)$, then we obtain the desired result since $|\mathscr{X}^*|=1$. Suppose that $d(a^*, b^*)<d_m$. Then, without loss of generality, we can assume that $d_m=\max_{a, b\in X'} d(a, b)$ where $X'=X\setminus\{b^*\}$.  Because $|X'|\le k$, it follows from the induction assumption that there are at most $k$ different partitions in $\mathscr{X'}^*=\{\mathcal{X}'_\epsilon\}_{\epsilon\in [0, d_m]}$. 

Now let us consider $\mathscr{X}^*$. When $\epsilon\in (d_1, d_m]$, adding $b^*$ to $X'$ does not increase the number of distinct partitions in $\{\mathcal{X}'_\epsilon\}_{\epsilon\in [0, d_m]}$ since $a^*$ and $b^*$ must belong to the same nest. Note that if $\sim_\epsilon$ is transitive on $X$, then it is also transitive on $X'$. Hence, adding $b^*$ to $X'$ does not extend the set of $\epsilon$ such that $\sim_{\epsilon}$ is transitive. Therefore, there is at most one new partition when $b^*$ is added to $X'$. Therefore, by induction, $|\mathscr{X}^*|\le |X|$.

\bigskip
\noindent\textbf{Proof of Corollary 2.} In the proof of \autoref{NestReduceprop}, we show that there is $\epsilon^*>0$ such that
\[\max_{i}\max_{a, b\in X_i} d(a, b)<\epsilon^*<\min_{i<j}\min_{a'\in X_i, b'\in X_j} d(a', b')\,\text{ almost surely.}\]
Hence $\mathcal{X}_{\epsilon^*}=\mathcal{X}^*$ almost surely. Therefore, $\mathcal{X}^*\in\mathscr{X}^*$ almost surely. By Proposition 6, $\hat{\mathcal{X}}\xrightarrow{a.s.}\mathcal{X}^*$. Therefore, since $\mathcal{X}^*\in\mathscr{X}^*$ almost surely and $\mathscr{X}^*\subset \mathscr{X}$, we have $\hat{\mathcal{X}}^*\xrightarrow{a.s.}\mathcal{X}^*$.

\section{Additional Results}

\subsection{Regularity, Increasing NSC, and The Similarity Effect}\label{monNSC}

\autoref{runlprop} shows that regularity has important behavioral implications for nested logit. 
In this section, we study the implications of regularity for general NSC. We show that there is a deep connection between regularity, increasing NSCs, and the similarity effect. To clarify the implications of regularity, we divide regularity into two logically independent axioms (as we did for IIA).

\begin{axm}[Dissimilar Regularity]\label{DR} For any $ A \in\mathscr{A}$,  $x \in A$, and $y \in  X$, 
  \[p(x,A\cup y) \le p(x, A)\text{ when }x \nsim_p y.\]
\end{axm}

\begin{axm}[Similar Regularity]\label{SR} For any $ A \in\mathscr{A}$,  $x \in A$, and $y \in  X$,
  \[p(x,A\cup y) \le p(x, A)\text{ when }x\sim_p y.\]
\end{axm}

The first axiom, \nameref{DR}, says that regularity should hold when $x$ and $y$ are revealed dissimilar, while the second axiom, \nameref{SR}, requires regularity when $x$ and $y$ are revealed similar. It is immediate that the joint assumption of Dissimilar Regularity and Similar Regularity is equivalent to regularity.

The first result shows that Similar Regularity is closely related to the similarity effect. In fact, the similarity effect implies Similar Regularity in a setting that is more general than NSC. 

\begin{prps}\label{SRsim} For any stochastic choice function $p$, if $\sim_p$ is transitive, then the \textbf{similarity effect} implies \textbf{\nameref{SR}}. \end{prps}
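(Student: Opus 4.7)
The plan is to use transitivity of $\sim_p$ to split $A$ into the revealed-similarity class of $x$ and its complement, then combine IIA within the class with the similarity effect across the two classes. Fix $x \in A$ and $y \in X$ with $x \sim_p y$; the case $y \in A$ is immediate, so assume $y \notin A$. By reflexivity, symmetry, and transitivity of $\sim_p$, the set $S \equiv \{a \in A : a \sim_p x\}$ contains $x$, every $a \in S$ satisfies $a \sim_p y$, and every $b \in D \equiv A \setminus S$ satisfies $b \not\sim_p y$ (the latter via the contrapositive of transitivity: $b \sim_p y$ and $y \sim_p x$ would force $b \sim_p x$).

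Because $\sim_p$ is total on $S \cup \{y\}$, IIA holds throughout $S \cup \{y\}$; in particular, $p(a, A \cup y)/p(a, A) = p(x, A \cup y)/p(x, A) \equiv r$ for every $a \in S$. When $D = \emptyset$, IIA extends to all of $A \cup \{y\}$ and the usual Luce calculation yields $r < 1$ directly. Otherwise, applying the similarity effect to each pair $(x, b)$ with $b \in D$ (valid since $x \sim_p y$ and $b \not\sim_p y$) gives $p(x, A \cup y)/p(b, A \cup y) < p(x, A)/p(b, A)$, which rearranges to $r < p(b, A \cup y)/p(b, A)$ for every $b \in D$; summing over $D$ produces $p(D, A \cup y) > r \cdot p(D, A)$.

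To finish, I combine this with conservation of probability. From $p(S, A \cup y) = r \cdot p(S, A)$ and the identities $p(S, A \cup y) + p(D, A \cup y) + p(y, A \cup y) = 1 = p(S, A) + p(D, A)$, one rearranges to
\[(1-r)\bigl[p(S, A) + p(D, A)\bigr] \;=\; p(y, A \cup y) + \bigl[p(D, A \cup y) - r \cdot p(D, A)\bigr].\]
Both terms on the right are strictly positive---the first by positivity of $p$, the second by the similarity-effect inequality just derived---so $1 - r > 0$, i.e., $p(x, A \cup y) < p(x, A)$, yielding Similar Regularity (with strict inequality, in fact). I expect the main obstacle to be precisely this final accounting step: the similarity effect only delivers pairwise comparisons, and it is the positivity of $p(y, A \cup y)$ together with the partition structure induced by transitivity that converts those pairwise inequalities into the aggregate monotonicity on $p(x, \cdot)$.
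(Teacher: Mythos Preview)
Your proof is correct and follows essentially the same approach as the paper: both use transitivity of $\sim_p$ to split $A$ into the similarity class of $x$ (your $S$, the paper's $A\cap X_i$) and its complement, invoke IIA within the class, apply the similarity effect across the two pieces, and finish with conservation of probability. The only organizational difference is that the paper first establishes the aggregate inequality $p(A\cap X_i, A) > p(A\cap X_i, A\cup a')$ and then uses IIA to pass to the individual level, whereas you track the individual ratio $r = p(x, A\cup y)/p(x, A)$ from the outset; the underlying argument is the same.
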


The intuition behind this result is quite simple. If an alternative $y$ is introduced and it is similar to some existing alternative $x$, the similarity effect requires that $y$ hurts $x$ more than it hurts anything that it is not similar to. Thus, the probability of $x$ must decrease. Note that this result only requires that $\sim_p$ is transitive; it does not rely on the structure of NSC. Consequently, it is difficult to explain both the similarity effect and violations of regularity. Further, we can show that under some richness condition, \nameref{SR} will imply a weak version of the similarity effect, and thus the similarity effect is essentially equivalent to \nameref{SR}. 

The second result shows that \nameref{DR} is equivalent to a mild but behaviorally important restriction on $v$: monotonicity in the size of the nest. We say $p$ is an \textbf{increasing NSC} if $v(A) \ge v(B)$ for any $i\le K$ and nonempty sets $A, B\subseteq X_i$ with $B\subseteq A$.

\begin{prps}\label{incNSC} A nondegenerate NSC $p$ satisfies \textbf{\nameref{DR}} if and only if it is an \textbf{increasing NSC}. 
\end{prps}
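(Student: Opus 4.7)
\textbf{Proof plan for Proposition \ref{incNSC}.} The plan is to prove both directions by a direct manipulation of the NSC formula. Throughout I assume $K\ge 2$, since when $K=1$ the NSC reduces to Luce: the factor $v(A\cap X_i)/\sum_j v(A\cap X_j)$ is identically one, so $v$ is unconstrained and may be taken as $v(A)=\sum_{a\in A}u(a)$, which is increasing; at the same time Dissimilar Regularity is vacuous since $x\sim_p y$ for all $x,y\in X$. Recall (Step 1 of the necessity part of Theorem~\ref{NSCthm}) that $x\not\sim_p y$ iff $x$ and $y$ lie in different nests $X_i, X_j$.

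For the reverse direction (increasing NSC $\Rightarrow$ Dissimilar Regularity), fix $x\in A$ and $y\in X\setminus A$ with $x\not\sim_p y$, say $x\in X_i$ and $y\in X_j$ with $j\neq i$. Since $(A\cup y)\cap X_i = A\cap X_i$ and the only nest whose slice changes is $X_j$, the NSC formula gives
\[
\frac{p(x,A\cup y)}{p(x,A)}=\frac{\sum_{k} v(A\cap X_k)}{v(A\cap X_i)+v((A\cap X_j)\cup\{y\})+\sum_{k\ne i,j} v(A\cap X_k)}.
\]
Monotonicity of $v$ on $X_j$, together with the convention $v(\emptyset)=0\le v(\{y\})$ that handles the case $A\cap X_j=\emptyset$, implies $v((A\cap X_j)\cup\{y\})\ge v(A\cap X_j)$, so the ratio is at most $1$.

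For the forward direction (Dissimilar Regularity $\Rightarrow$ increasing NSC), fix a nest $X_i$ and nonempty $B\subseteq A\subseteq X_i$. It suffices to prove the one-step statement $v(B)\le v(B\cup\{z\})$ for any $z\in X_i\setminus B$ and then iterate over the elements of $A\setminus B$. Using that $K\ge 2$, pick any $y\in X_j$ with $j\ne i$, so $z\not\sim_p y$. Applying Dissimilar Regularity to $y\in B\cup\{y\}$ and the added alternative $z$ yields
\[
\frac{v(\{y\})}{v(B\cup\{z\})+v(\{y\})}=p(y,B\cup\{y\}\cup\{z\})\;\le\; p(y,B\cup\{y\})=\frac{v(\{y\})}{v(B)+v(\{y\})},
\]
which, since $v(\{y\})>0$ by positivity of $p$, gives $v(B\cup\{z\})\ge v(B)$.

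The computations are straightforward; the only conceptual step is recognizing that Dissimilar Regularity, being a statement about how a fresh alternative from a different nest affects $x$, has to be turned around to reveal information about the nest of $z$. That is, to constrain $v$ on $X_i$ one must place the ``test'' alternative $y$ in a different nest and watch its probability change when $z\in X_i$ enters the menu. Once this dual role of $y$ and $z$ is spotted, the inequality falls out of a one-line comparison of NSC formulas, and the remaining iteration over $A\setminus B$ is routine.
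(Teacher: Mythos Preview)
Your proof is correct and follows essentially the same approach as the paper: for the direction increasing NSC $\Rightarrow$ Dissimilar Regularity you compare NSC formulas and use $v((A\cap X_j)\cup\{y\})\ge v(A\cap X_j)$, and for the converse you apply Dissimilar Regularity to a two-nest menu $B\cup\{y\}$ (with $B\subseteq X_i$, $y\in X_j$, $j\ne i$) to extract $v(B\cup\{z\})\ge v(B)$ one element at a time. Your explicit handling of the edge cases $K=1$ and $A\cap X_j=\emptyset$ is a nice addition that the paper leaves implicit.
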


Increasing NSC are interesting because they subsume many of the models in the literature, including nested logit. However, increasing NSC are incompatible with certain violations of regularity, such as choice overload. In fact, Propositions 7 and 8 imply that increasing NSC cannot allow the similarity effect and violations of regularity simultaneously. Thus, non-increasing NSC (e.g., the menu-dependent substitutability example from section \ref{bnl}) are of independent interest.

\subsection{Alternative Axiomatization of NSC}

In this section, we provide an alternative axiomatic characterization of NSC in which characterizing axioms do not rely on our revealed similarity relation $\sim_p$. To characterize NSC, we ``divide" ISA into two axioms. 

\begin{axm}[ISA-1]\label{ISSA} For any $A\in\mathscr{A}$, $a, b\in A$, and $x\not\in A$, 
\[\frac{p(a, \{a, x\})}{p(x, \{a, x\})}=\frac{p(a, \{a, b, x\})}{p(x, \{a, b, x\})}\text{ and }\frac{p(b, \{b, x\})}{p(x, \{b, x\})}=\frac{p(b, \{a, b, x\})}{p(x, \{a, b, x\})}\,\,\,\Longrightarrow\,\,\,\frac{p(a, A)}{p(b, A)}=\frac{p(a, A\cup x)}{p(b, A\cup x)}.\]
\end{axm}

Note that the above axiom is essentially identical to the first part of ISA with $a\sim_p x$ and $b\sim_p x$. 

\begin{axm}[ISA-2]\label{ISDA} For any $A, B, C\in\mathscr{A}$, $a\in A\cap B, b\in A\cap C$, and $x\in B\cap C$, 
\[\frac{p(a, \{a, x\})}{p(x, \{a, x\})}\neq \frac{p(a, B)}{p(x, B)}\text{ and }\frac{p(b, \{b, x\})}{p(x, \{b, x\})}\neq \frac{p(b, C)}{p(x, C)}\,\,\,\,\Longrightarrow\,\,\,\,\frac{p(a, A)}{p(b, A)}=\frac{p(a, A\cup x)}{p(b, A\cup x)}.\]
\end{axm}

Similarly, the above axiom is essentially identical to the second part of ISA with $a\not\sim_p x$ and $b\not\sim_p x$. 

We also need to strengthen our notion of nondegeneracy as follows. The NSC $p$ with $(v, u, \{X_i\}^N_{i=1})$ is \textbf{strict} if for any $A_i\subset X_i$ with $a\in A_i$ and $x\in X_i\setminus A_i$,
\[\text{if }\frac{u(a)+u(x)}{u(x)}=\frac{v(\{a, x\})}{v(x)}, \text{ then }\frac{\sum_{a'\in A_i}u(a')+u(x)}{\sum_{a'\in A_i}u(a')}=\frac{v(A_i\cup x)}{v(A_i)}.\]

\begin{thm}\label{AltNSC} Let $p$ be a stochastic choice function with at least three alternatives that are dissimilar to each other. Then $p$ satisfies \autoref{ISSA} and \autoref{ISDA} if and only if it is a strict nondegenerate NSC.
\end{thm}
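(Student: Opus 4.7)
The plan is to reduce the theorem to \autoref{NSCthm} by showing that ISA-1 and ISA-2 together are equivalent to ISA under the strict-nondegeneracy refinement. The key observation is that the premises of ISA-1 and ISA-2 are ``local'' (they compare IIA ratios at a single three-element menu versus a two-element menu, or at two specific menus), whereas ISA of \autoref{NSCthm} is stated in terms of the global relation $\sim_p$. Strictness is exactly what bridges the local premises with the global representation.

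For sufficiency, assume that $p$ satisfies ISA-1 and ISA-2. I will first derive ISA. If $a \sim_p x$ and $b \sim_p x$, then IIA between $(a,x)$ holds at every menu and in particular at $\{a,b,x\}$, so $\frac{p(a,\{a,x\})}{p(x,\{a,x\})} = \frac{p(a,\{a,b,x\})}{p(x,\{a,b,x\})}$, and symmetrically for $(b,x)$; hence the premises of ISA-1 are met and its conclusion gives the first part of ISA. If instead $a \not\sim_p x$ and $b \not\sim_p x$, then by definition there exist menus $B \supseteq \{a,x\}$ and $C \supseteq \{b,x\}$ at which IIA fails; these supply the premises of ISA-2, which yields the second part of ISA. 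Thus \autoref{NSCthm} produces a nondegenerate NSC $(v,u,\{X_i\})$. To upgrade nondegeneracy to strictness, I will invoke ISA-1 a second time. Suppose the coincidence $\frac{u(a)+u(x)}{u(x)} = \frac{v(\{a,x\})}{v(x)}$ holds for some $a, x \in X_i$, and pick any $z$ from a distinct nest $X_j$ (available by the three-dissimilar-alternatives hypothesis). A direct computation shows that both premises of ISA-1 hold for the triple $(z, a, x)$ at the menu $\{a, z, x\}$: the first uses the coincidence to cancel factors of $v(\{a,x\})$ against $(u(a)+u(x))/u(x)$, while the second is automatic since $a, x$ share the nest $X_i$. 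Applying the conclusion of ISA-1 at $A = A_i \cup z$ for any $A_i \subset X_i$ with $a \in A_i$ and $x \notin A_i$, and simplifying both sides through the NSC formula, produces $\frac{v(A_i \cup x)}{v(A_i)} = \frac{\sum_{a' \in A_i}u(a') + u(x)}{\sum_{a' \in A_i}u(a')}$, which is exactly the strictness condition.

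For necessity, assume $p$ is a strict nondegenerate NSC and verify each axiom by case analysis on nest memberships. The premise of ISA-2 forces $a, x$ to lie in different nests (within-nest IIA is never violated by the NSC formula), and similarly $b, x$; in the two remaining subcases ($a, b$ share a nest, or $a, b, x$ all lie in distinct nests) the conclusion follows by cancellation in the ratio $\frac{p(a,A)}{p(b,A)}$ since $x$ enters $v$ through a factor that either drops out or is absent. For ISA-1, the delicate case is when the premise holds even though, say, $a, x \in X_i$ and $b \in X_j$ with $i \neq j$; a direct calculation using the NSC formula shows that the premise for $(b,x)$ at $\{a,b,x\}$ forces precisely the coincidence $\frac{u(a)+u(x)}{u(x)} = \frac{v(\{a,x\})}{v(x)}$, and strictness then propagates this coincidence to $\frac{v(A_i \cup x)}{v(A_i)} = \frac{\sum u(a') + u(x)}{\sum u(a')}$ for the arbitrary $A_i = A \cap X_i$, which is exactly what is needed to make the conclusion hold. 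The main obstacle throughout is managing these coincidence cases: without strictness one can construct NSC representations whose $v$ satisfies the pairwise identity but fails its extension, so that the premises of ISA-1 are accidentally met at a three-element menu while its conclusion fails at a larger one. Strict nondegeneracy is precisely what synchronizes the local axioms ISA-1 and ISA-2 with the global structure delivered by \autoref{NSCthm}.
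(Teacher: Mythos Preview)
Your proposal is correct and follows essentially the same approach as the paper: reduce to \autoref{NSCthm} by showing that ISA-1 and ISA-2 jointly imply ISA, then use ISA-1 once more (with an element from a foreign nest) to extract strictness; for necessity, perform the same case analysis on nest membership and invoke strictness exactly in the coincidence subcase where $a,x\in X_i$ but $b\in X_j$. The only cosmetic difference is your relabeling (using $z$ for the foreign-nest witness) and your slightly more explicit discussion of why strictness is the right bridge between the local premises and the global conclusion.
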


\subsection{Three-Step Nested Stochastic Choice}\label{3nsc}

\autoref{NSCthm} shows that all two-level nested logit models are in fact special cases of NSC and are characterized by a strong notion of categorical similarity.  A natural question is, can we capture more complex substitution patterns through a more general notion of similarity?  Put another way, can we allow for contextual or comparative similarity? 

Consider a decision maker who is choosing between wines and beers. As in NSC, it is natural to think that there are three nests: one for white wines, one for red wines, and one for beers.  Intuitively, the wines are ``more similar'' to each other than they are to the beers. This can be captured though an intermediate step in which, before deciding between red or white wines, the decision maker decides between wines and beers.  After deciding between wine and beer, the consumer decides between different styles of wine (red vs. white), and then selects a specific one to consume.  This can be represented through a three-level nested structure, which we refer to as a 3-step NSC.  In this section, we show that we can capture such complex relationships through the introduction of a second similarity relation and a generalization of our main axiom. 

Formally, any $3$-step NSC consists of a nesting structure (tree) and conditional Luce rules.  

\begin{defn}[3-step NSC] A stochastic choice function $p$ is a \textbf{3-Step Nested Stochastic Choice} if there exist a partition $X_1, \ldots, X_K$ of $X$, a partition $X^1_k, \ldots, X^{q_k}_k$ of $X_k$ for each $k\le K$, and functions $u: X\to\mathds{R}_{++}, w: \bigcup^K_{i=1} 2^{X_i}\to\mathds{R}_{+}$, and $v:\bigcup^K_{k=1} \bigcup^{q_k}_{l=1} 2^{X^l_k}\to\mathds{R}_{+}$ with $w(\emptyset)=v(\emptyset)=0$ such that for any $A\in\mathscr{A}$ and $x\in A\cap X^j_k$,
\[p(x, A)=\frac{u(x)}{\sum_{y\in A\cap X^j_k}u(y)}\cdot \frac{v\big(A\cap X^j_k\big)}{\sum^{q_k}_{l=1}v\big(A\cap X^l_k\big)}\cdot \frac{w\big(A\cap X_k\big)}{\sum^K_{i=1}w\big(A\cap X_{i}\big)}.\]
\end{defn}

We now introduce a secondary notion of similarity which applies to alternatives that are not categorically similar, but satisfy IIA in the presence of mutually dissimilar alternatives. 

\begin{defn} For any $a, b\in X$, we say $a$ and $b$ are \textbf{approximately revealed similar}, denoted by $a\bowtie_p b$, if $a\not\sim_p b$ and 
\[\frac{p(a, A)}{p(b, A)}=\frac{p(a, A\cup x)}{p(b, A\cup x)}\text{ for any $A\in\mathscr{A}$ and $x\notin A$ with }x\not\sim_p a\text{ and }x\not\sim_p b.\]
We write $a\simeq_p b$ if either $a\sim_p b$ and $a\bowtie_p b$.
\end{defn}

It is crucial to note that our approximately revealed similar relation requires that $a$ and $b$ are not categorically similar. Thus we have two distinct ``layers'' of similarity; $\bowtie_p$ does not include $\sim_p$ as a sub-relation.  In terms of our drink example, all the red wines are categorically similar (related through $\sim_p$), while red and white wines are approximately similar (related through $\bowtie_p$), as IIA will hold between them when a beer is introduced but not if another wine were introduced.  Hence this second layer delineates the ``intermediate'' nests in the tree and captures aspects of context-dependent similarity. Consequently, this approach distinguishes between fundamental and contextual similarity. 

We now introduce a generalization of our main axiom, to characterize 3-step NSC.

\begin{axm}[Generalized Independence of Symmetric Alternatives]\label{GISA} For any $A\in\mathscr{A}$, $a, b\in A$, and $x \notin A$,

\begin{equation*}
\begin{aligned}[c]
&\,a\sim_p x\text{ and }b\sim_p x,\\
&\, a\bowtie_p x\text{ and }b\bowtie_p x,\\
&\,\,\,\,\,\,\,\,\,\,\,\,\,\,\,\,\text{ or }\\
&\, a\not\simeq_p x\text{ and }b\not\simeq_p x
\end{aligned}
\qquad\Longrightarrow\qquad
\begin{aligned}[c]
\frac{p(a, A)}{p(b, A)}=\frac{p(a, A\cup x)}{p(b, A\cup x)}.
\end{aligned}
\end{equation*}
\end{axm}

Recall that the second part of Independence of Symmetric Alternatives requires that $\frac{p(a, A)}{p(b, A)}=\frac{p(a, A\cup x)}{p(b, A\cup x)}$ when $a\not\sim_p x$ and $b\not\sim_p x$. However, the second part of Generalized Independence of Symmetric Alternatives requires that $\frac{p(a, A)}{p(b, A)}=\frac{p(a, A\cup x)}{p(b, A\cup x)}$ when either $a\bowtie_p x$ and $b\bowtie_p x$ or $a\not\simeq_p x$ and $b\not\simeq_p x$. Hence, Generalized Independence of Symmetric Alternatives relaxes the second part of Independence of Symmetric Alternatives. 

Lastly, we need a consistency condition to hold between the similarity relations. 

\begin{axm}[Consistency of Revealed Similarities] For any $x, y, x'\in X$, if $x\sim_p x'$, then
\[x\bowtie_p y\text{ if and only if } x'\bowtie_p y.\]
\end{axm}

In the language of our drink example, $y$ is a white wine and $x,x'$ are two red wines. Since $x\sim_p x'$, it must be the case that if a white wine $y$ is approximately similar to some red wine $x$, then it is approximately similar to any other red wine $x'$.

\begin{thm}\label{3NSCthm} Consider a stochastic choice function $p$ such that there are $a, b, c$ with $a\not\simeq_p b$, $b\not\simeq_p c$, and $a\not\simeq_p c$. Suppose, for any $x\in X$, there are $y, z\in X$ such that $x\bowtie_p y$, $y\bowtie_p z$, and $x\bowtie_p z$. If $p$ satisfies \textbf{Generalized Independence of Symmetric Alternatives and Consistency of Revealed Similarities}, then it is a \textbf{3-step NSC}.\footnote{Indeed, just as in Theorem 1, the necessity direction also holds when the appropriate nondegeneracy condition is imposed on $p$.}
\end{thm}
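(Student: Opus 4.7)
The representation has two layers of nesting, so the natural plan is to run the argument of \autoref{NSCthm} twice: once inside each top-level nest using the ``$\bowtie_p$'' branch of \textbf{\nameref{GISA}}, and once across top-level nests using the ``$\not\simeq_p$'' branch. As a preliminary step, I would identify the two partitions behaviorally. Exactly as in Step~1 of the proof of \autoref{NSCthm}, the first branch of \textbf{\nameref{GISA}} yields transitivity of $\sim_p$, so its equivalence classes define the fine nests $\{X^j_k\}$, and the standard Luce argument applied inside each class produces a utility $u$ with $p(a,A)=u(a)/\sum_{b\in A}u(b)$ for $A\subseteq X^j_k$.

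Next I would show that $\simeq_p$ is an equivalence relation, yielding the coarse partition $\{X_k\}$ (whose classes are unions of fine nests). Reflexivity and symmetry are immediate; transitivity splits into four cases depending on whether each link is $\sim_p$ or $\bowtie_p$. The mixed cases $(\sim_p,\bowtie_p)$ and $(\bowtie_p,\sim_p)$ follow directly from \textbf{Consistency of Revealed Similarities} combined with transitivity of $\sim_p$. The pure case $x\bowtie_p y$ and $y\bowtie_p z$ with $x\not\sim_p z$ reduces to verifying $x\bowtie_p z$: for every admissible test alternative $w$ (with $w\not\sim_p x$ and $w\not\sim_p z$), I would classify $w$'s relation to $y$ -- handling $w\sim_p y$ via \textbf{Consistency} (propagating $\bowtie_p$ along $\sim_p$-classes), $w\bowtie_p y$ via the second branch of \textbf{\nameref{GISA}}, and $w\not\simeq_p y$ via the third -- and then chain the ratios $\rho(x,z,\cdot)=\rho(x,y,\cdot)\cdot\rho(y,z,\cdot)$ through $y$ as in the $\sim_p$-transitivity argument. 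I expect this to be the main technical obstacle in the proof.

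With both partitions in hand, I would construct $v$ by restricting attention to menus $A\subseteq X_k$. Within $X_k$, the first branch of \textbf{\nameref{GISA}} covers the case that $a,b,x$ all lie in the same mid-nest, while the second branch covers the case that $a,b,x$ lie in distinct mid-nests of $X_k$; together they supply the full content of \textbf{\nameref{ISA}} for the restriction $p|_{X_k}$, treating mid-nests as the nests. The richness hypothesis (mutually $\bowtie_p$-related triples exist for every $x$) supplies ``three pairwise dissimilar alternatives'' inside each $X_k$, so Steps 3--8 in the proof of \autoref{NSCthm} apply, yielding a function $v$ on $\bigcup_{l} 2^{X^l_k}$ such that $p(a,A)=\frac{u(a)}{\sum_{b\in A\cap X^j_k}u(b)}\cdot\frac{v(A\cap X^j_k)}{\sum_{l}v(A\cap X^l_k)}$ whenever $a\in A\cap X^j_k\subseteq A\subseteq X_k$.

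Finally, to construct $w$, I would run the same construction one level higher, with ``$\not\sim_p$'' replaced by ``$\not\simeq_p$'' and the role of nests $X_i$ played by the top-level nests $X_k$. The third branch of \textbf{\nameref{GISA}} gives the analog of Equation~\eqref{IICeqn}, namely that $\frac{p(A\cap X_k,A)}{p(A\cap X_{k'},A)}$ is unaffected by alternatives drawn from any other top nest; the assumed existence of three pairwise $\not\simeq_p$ alternatives supplies the three-top-nest richness needed. The construction and verification of Facts~1--5 from the proof of \autoref{NSCthm} then translate to deliver $w$, and multiplying the within-$X_k$ decomposition by this top-level weight produces the desired three-factor form of $p(a,A)$.
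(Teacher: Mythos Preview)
Your proposal is correct and follows essentially the same route as the paper: establish transitivity of $\sim_p$ and then of $\simeq_p$ to obtain the fine and coarse partitions, apply \autoref{NSCthm} inside each coarse nest $X_k$ (the first two branches of Generalized ISA supplying exactly \nameref{ISA} for $p|_{X_k}$, and the $\bowtie_p$-triple hypothesis supplying the three dissimilar alternatives) to obtain $u,v$, and then re-run Steps~5--8 of that proof across the coarse nests using the third branch to obtain $w$. The paper organizes the $\bowtie_p$-transitivity argument by splitting on whether the intermediate alternative $y$ lies in the menu $A$ rather than on the test alternative's relation to $y$, but the content is the same.
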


The major insight from this result is that multi-step NSC is characterized by revealing multiple, layered similarity relations, and then imposing a generalization of our key axiom. Just as our similarity relation identifies endogenous nests, this secondary relation identifies endogenous, intermediate nests. Thus, \autoref{3NSCthm} shows that we may identify an endogenous tree structure. 

Multi-level nested logit models have been applied to many situations. Most famously, \cite{goldberg1995product} uses a multi-level nested logit to study automobile demand.   It is well known that the ``order'' in which the tree-structure of nests is specified matters for estimates. Our approach reveals the entire, endogenous tree, and so the ``order'' is also recovered: $\bowtie_p$ captures upper nests and $\sim_p$ captures lower nests.  It is straightforward to see how our approach could be extended  to characterize an $N$-Step NSC.

\subsection{Remaining Proofs}

\subsubsection{Proof of \autoref{SRsim}}

As discussed in the proof of Theorem 1, when $\sim_p$ is transitive, there is a partition $\{X_i\}^K_{i=1}$ such that for any $x, y\in X$, $x\sim_p y$ if and only if $x, y\in X_i$ for some $i\le K$. Suppose that $p$ satisfies the similarity effect; that is, for any $A\in\mathscr{A}$ and $a, a'\in X_i$ and $b\in X_j$ with $a, b\in A$ and $a'\not\in A$,
\[\frac{p(a,A)}{p(b,A)} > \frac{p(a,A\cup a')}{p(b,A\cup a')}.\]

\noindent\textbf{Step 1.} For any $A\in\mathscr{A}$ and $a'\in X_i\setminus A$ with $A\cap X_i\neq \emptyset$, $p(A\cap X_i, A)> p(A\cap X_i, A\cup a')$.\medskip

The similarity effect implies that for any $b\in A\setminus X_i$ and $a\in A\cap X_i$,
\begin{equation}\label{SimEq}
p(b,A\cup a')\,p(a,A)> p(b,A)\,p(a,A\cup a').
\end{equation}

Let us first add Inequality (\ref{SimEq}) across all $a\in A\cap X_i$. Then we have 
\[p(b, A\cup a')\,p(A\cap X_i,A)> p(b, A)\,p(A\cap X_i,A\cup a').\]
Let us add again the above inequality across all $b\in A\setminus X_i$. Then we obtain 
\[\big(1-p(A\cap X_i, A\cup a')-p(a', A\cup a')\big)\,p(A\cap X_i,A)> \big(1-p(A\cap X_i, A)\big)\,p(A\cap X_i,A\cup a').\]
The above inequality implies 
\[\big(1-p(A\cap X_i, A\cup a')\big)\,p(A\cap X_i,A)> \big(1-p(A\cap X_i, A)\big)\,p(A\cap X_i,A\cup a');\]
equivalently, $p(A\cap X_i,A)> p(A\cap X_i,A\cup a')$.

\smallskip
\noindent\textbf{Step 2.} For any $A\in\mathscr{A}$, $a\in A\cap X_i$, and $a'\in X_i\setminus A$, $p(a, A)>p(a, A\cup a')$.\medskip

For any $\tilde{a}\in A\cap X_i$, we have $\frac{p(a, A\cup a')}{p(\tilde{a}, A\cup a')}=\frac{p(a, A)}{p(\tilde{a}, A)}$. Let us add the equality $p(a, A\cup a')\,p(\tilde{a}, A)=p(a, A)\,p(\tilde{a}, A\cup a')$ for all $\tilde{a}\in A\cap X_i$. Then we obtain $p(a, A\cup a')\,p(A\cap X_i, A)=p(a, A)\,p(A\cap X_i, A\cup a')$. Finally, since $p(A\cap X_i, A\cup a')<p(A\cap X_i, A)$, we need to have $p(a, A)>p(a,A\cup a')$.

\subsubsection{Proof of \autoref{incNSC}}

Let $p$ be a nondegenerate NSC with $(v, u, \{X_i\}^K_{i=1})$.

\noindent\textbf{Sufficiency.} Take any $A_j\subset X_j$ and $y\in X_j\setminus A_j$. Take any $x \in X_i $ with $i\neq j$. Since $x\not\sim_p y$, by Dissimilar Regularity, we have \[p(x,A_j\cup x\cup y)=
\frac{v(x)}{v(x) + v(A_j\cup y)}\le p(x, A_j\cup x)=
\frac{v(x)}{v(x) + v(A_j)}\text{ iff }v(A_j)\le v(A_j\cup y).\] 

\noindent\textbf{Necessity.} Suppose $v$ is increasing. Take any $A\in\mathscr{A}$, $x\in A$, and $y\not\in A$ with $x\not\sim_p y$. Therefore, $x\in X_i$ and $y\in X_j$ for some $i, j$ with $i\neq j$. Since $v(A\cap X_j\cup y)\ge v(A\cap X_j)$, we have  
\[p(x, A\cup y)=\frac{p(x, A\cap X_i)\,v(A\cap X_i)}{v(A_j\cup y)+\sum_{k\neq j} v(A\cap X_k)}\le 
p(x, A)=\frac{p(x, A\cap X_i)\,v(A\cap X_i)}{v(A_j)+\sum_{k\neq j} v(A\cap X_k)}.\]

\subsubsection{Proof of \autoref{AltNSC}}

\textbf{Sufficiency.} Take any $A\in\mathscr{A}$, $a, b\in A$, and $x\not\in A$. Suppose $a\sim_p x$ and $b\sim_p x$. Then we have $\frac{p(a, \{a, x\})}{p(x, \{a, x\})}=\frac{p(a, \{a, b, x\})}{p(x, \{a, b, x\})}$ and $\frac{p(b, \{b, x\})}{p(x, \{b, x\})}=\frac{p(b, \{a, b, x\})}{p(x, \{a, b, x\})}$. Hence, by Axiom 8, $\frac{p(a, A)}{p(b, A)}=\frac{p(a, A\cup x)}{p(b, A\cup x)}$. Suppose $a\not\sim_p x$ and $b\not\sim_p x$. Then there are $B, C$ such that $\frac{p(a, \{a, x\})}{p(x, \{a, x\})}\neq \frac{p(a, B)}{p(x, B)}$ and $\frac{p(b, \{b, x\})}{p(x, \{b, x\})}\neq \frac{p(b, C)}{p(x, C)}$. Hence, by Axiom 9, $\frac{p(a, A)}{p(b, A)}=\frac{p(a, A\cup x)}{p(b, A\cup x)}$. Therefore, ISA is satisfied. Hence, by Theorem 1, $p$ is a nondegenerate NSC with some $(v, u, \{X_i\}^K_{i=1})$ where $X/\sim_p=\{X_i\}^K_{i=1}$.

To show the strictness, take any $a, x\in X_i$ such that $\frac{u(a)+u(x)}{u(x)}=\frac{v(\{a, x\})}{v(x)}$. Then for any $b\in X_j$, $\frac{p(b, \{b, x\})}{p(x, \{b, x\})}=\frac{p(b, \{a, b, x\})}{p(x, \{a, b, x\})}$. Since $\frac{p(a, \{a, x\})}{p(x, \{a, x\})}=\frac{p(a, \{a, b, x\})}{p(x, \{a, b, x\})}$, by Axiom 8, we have $\frac{p(a, A)}{p(b, A)}=\frac{p(a, A\cup x)}{p(b, A\cup x)}$. By NSC, we have
\[\frac{\frac{u(a)}{\sum_{a'\in A\cap X_i}u(a')} v(A\cap X_i)}{\frac{u(b)}{\sum_{b'\in A\cap X_j}u(b')} v(A\cap X_j)}=\frac{\frac{u(a)}{\sum_{a'\in A\cap X_i}u(a')+u(x)} v(A\cap X_i\cup x)}{\frac{u(b)}{\sum_{b'\in A\cap X_j}u(b')} v(A\cap X_j)}\]
equivalently,
\[\frac{\sum_{a'\in A\cap X_i}u(a')+u(x)}{\sum_{a'\in A\cap X_i}u(a')}=\frac{v(A\cap X_i\cup x)}{v(A\cap X_i)}.\]

\noindent\textbf{Necessity.}  Suppose $p$ is the strict nondegenerate NSC with $(v, u, \{X_i\}^K_{i=1})$. By the necessity part of Theorem 1, $p$ satisfies ISA and $a\sim_p b$ if and only if $a, b\in X_i$.

To prove Axiom 8, take any $A\in\mathscr{A}$, $a, b\in A$, and $x\not\in A$ with $\frac{p(a, \{a, x\})}{p(x, \{a, x\})}=\frac{p(a, \{a, b, x\})}{p(x, \{a, b, x\})}$ and $\frac{p(b, \{b, x\})}{p(x, \{b, x\})}=\frac{p(b, \{a, b, x\})}{p(x, \{a, b, x\})}$. We shall prove that $\frac{p(a, A)}{p(b, A)}=\frac{p(a, A\cup x)}{p(b, A\cup x)}$. It is immediate when $a, b\in X_i$. Hence, suppose $a\in X_i$ and $b\in X_j$. If $x\in X_k$, then we have $a\not\sim_p x$ and $b\not\sim_p x$. Consequently, by ISA, we have $\frac{p(b, \{b, x\})}{p(x, \{b, x\})}=\frac{p(b, \{a, b, x\})}{p(x, \{a, b, x\})}$. Hence, suppose now either $x\in X_i$ or $x\in X_j$. Since the role of $a$ and $b$ are symmetric, suppose $a\in X_i$ without loss of generality. By the NSC, $\frac{p(b, \{b, x\})}{p(x, \{b, x\})}=\frac{p(b, \{a, b, x\})}{p(x, \{a, b, x\})}$ implies $\frac{v(b)}{v(x)}=\frac{v(b)}{\frac{u(x)}{u(a)+u(x)}v(\{a, x\})}$; equivalently, $\frac{u(a)+u(x)}{u(x)}=\frac{v(\{a, x\})}{v(x)}$. Since by the strictness, $\frac{\sum_{a'\in A_i}u(a')+u(x)}{\sum_{a'\in A_i}u(a')}=\frac{v(A_i\cup x)}{v(A_i)}$ for any $A_i\subset X_i$.

By the NSC, $\frac{p(a, A)}{p(b, A)}=\frac{p(a, A\cup x)}{p(b, A\cup x)}$ is equivalent to  
\[\frac{\frac{u(a)}{\sum_{a'\in A\cap X_i}u(a')} v(A\cap X_i)}{\frac{u(b)}{\sum_{b'\in A\cap X_j}u(b')} v(A\cap X_j)}=\frac{\frac{u(a)}{\sum_{a'\in A\cap X_i}u(a')+u(x)} v(A\cap X_i\cup x)}{\frac{u(b)}{\sum_{b'\in A\cap X_j}u(b')} v(A\cap X_j)}.\]
The above equality holds by the strictness since it is equivalent to
\[\frac{\sum_{a'\in A\cap X_i}u(a')+u(x)}{\sum_{a'\in A\cap X_i}u(a')}=\frac{v(A\cap X_i\cup x)}{v(A\cap X_i)}.\]

To prove Axiom 9, take any $A, B, C\in\mathscr{A}$, $a\in A\cap B, b\in A\cap C$, and $x\in B\cap C$ with $\frac{p(a, \{a, x\})}{p(x, \{a, x\})}\neq \frac{p(a, B)}{p(x, B)}$ and $\frac{p(b, \{b, x\})}{p(x, \{b, x\})}\neq \frac{p(b, C)}{p(x, C)}$. Then $a\not\sim_p x$ and $b\not\sim_p x$. Hence, by ISA, $\frac{p(a, A)}{p(b, A)}=\frac{p(a, A\cup x)}{p(b, A\cup  x)}$. 
 
\subsubsection{Proof of \autoref{3NSCthm}}

We prove Theorem 6 by four steps.

\noindent\textbf{Step 1.} Note that the first part of Generalized Independence of Symmetric Alternatives is identical to the first part of Independence of Symmetric Alternatives. Hence, by Steps 1-2 of the proof of Theorem 1, $\sim_p$ is reflexive, transitive, and symmetric, we have a partition $X/\sim_p\equiv\{E_i\}^K_{i=1}$ of $X$ such that for any $x_i, x'_i\in E_i$ and $x_j\in E_j$, $x_i\sim_p x'_i$ and $x_i\not\sim_p x_j$. 

\medskip
\noindent\textbf{Step 2.} $\simeq_p$ is transitive.

Take any $x, y, z\in X$ such that $x\simeq_p y$ and $y\simeq_p z$. If $x\sim_p y$ and $y\sim_p z$, then by Step 1, $x\sim_p z$. If $x\sim_p y$ and $y\bowtie_p z$, then by Consistency of Revealed Similarities, $x\bowtie_p z$. If $x\bowtie_p y$ and $y\sim_p z$, then by Consistency of Revealed Similarities, $x\bowtie_p z$. Finally, we consider the case where $x\bowtie_p y$ and $y\bowtie_p z$. 

Suppose $x\not\sim_p z$. Then we shall prove that $x\bowtie_p z$; i.e., for any $A$ and $t\not\in A$ such that $x\not\sim_p t$ and $z\not\sim_p t$, $\frac{p(x, A)}{p(z, A)}=\frac{p(x, A\cup t)}{p(z, A\cup t)}$.

\noindent\textbf{Case 1.} $y\in A$. In this case, we can write $\frac{p(x, A)}{p(z, A)}=\frac{p(x, A)}{p(y, A)}/\frac{p(z, A)}{p(y, A)}$ and $\frac{p(x, A\cup t)}{p(z, A\cup t)}=\frac{p(x, A\cup t)}{p(y, A\cup t)}/\frac{p(z, A\cup t)}{p(y, A\cup t)}$. If $y\not\sim_p t$, $x\bowtie_p y$ implies $\frac{p(x, A)}{p(y, A)}=\frac{p(x, A\cup t)}{p(y, A\cup t)}$ and $y\bowtie_p z$ implies $\frac{p(z, A)}{p(y, A)}=\frac{p(z, A\cup t)}{p(y, A\cup t)}$. Therefore, $\frac{p(x, A)}{p(z, A)}=\frac{p(x, A)}{p(y, A)}/\frac{p(z, A)}{p(y, A)}=\frac{p(x, A\cup t)}{p(z, A\cup t)}=\frac{p(x, A\cup t)}{p(y, A\cup t)}/\frac{p(z, A\cup t)}{p(y, A\cup t)}$. Instead, if $y\sim_p t$, then by Consistency of Revealed Preferences $x\bowtie_p y$ and $y\bowtie_p z$ imply $x\bowtie_p t$ and $z\bowtie_p t$. Then by Generalized Independence of Symmetric Alternatives, $\frac{p(x, A)}{p(z, A)}=\frac{p(x, A\cup t)}{p(z, A\cup t)}$.\smallskip

\noindent\textbf{Case 2.} $y\not\in A$. By Generalized Independence of Symmetric Alternatives, $x\bowtie_p y$ and $z\bowtie_p y$ imply $\frac{p(x, A)}{p(z, A)}=\frac{p(x, A\cup y)}{p(z, A\cup y)}$ and $\frac{p(x, A\cup t)}{p(z, A\cup t)}=\frac{p(x, A\cup y\cup t)}{p(z, A\cup y\cup t)}$. Now by Case 1, $\frac{p(x, A)}{p(z, A)}=\frac{p(x, A\cup y)}{p(z, A\cup y)}=\frac{p(x, A\cup y\cup t)}{p(z, A\cup y\cup t)}=\frac{p(x, A\cup t)}{p(z, A\cup t)}$.

\noindent\textbf{Step 3.} Let $X/\simeq_p\equiv\{X_i\}^n_{i=1}$. Since $\simeq_p$ is reflexive, transitive, and symmetric, $\{X_i\}^n_{i=1}$ is a partition of $X$ such that for any $x_i, x'_i\in X_i$ and $x_j\in X_j$, $x_i\simeq_p x'_i$ and $x_i\not\simeq_p x_j$. Moreover, by the definition of $\simeq_p$, for any $i\le K$, there is some $j\le n$ such that $E_i\subseteq X_j$. Hence, without loss of generality, let $X_i=\bigcup^{t_i}_{s=1} X^s_i$ such that for any $s\le t_i$, $X^s_i=E_l$ for some $l\le K$.\medskip

\noindent\textbf{Step 4.} For any $x, y\in X_i$, $x \bowtie_p y$ if and only if $x\not\sim y$. Hence, the first two parts of Generalized Independence of Symmetric Alternatives are equivalent to  Independence of Symmetric Alternatives when $p$ is restricted on $X_i$. Hence, by Theorem 1, $p$ is an NSC on $X_i$ with some $(u_i, v_i, \{X^s_i\}^{t_i}_{s=1})$. Since $X_i$ and $X_j$ are disjoint for each $i, j$ with $j\neq i$, without loss of generality, we can say that $p$ is an NSC on $X_i$ with the same $(v, u)$. 

\noindent\textbf{Step 5.} Since there are $a, b, c\in X$ such that $a\not\simeq_p b$, $b\not\simeq_p c$, and $a\not\simeq_p c$, we have $n\ge 3$. Take any $A\in\mathscr{A}$. Let $A_i=A\cap X_i$ for each $i\le n$. Take any $a\in A_i, b\in A_j$, and $x\in A_k$. Note that $a\not\simeq_p x$ and $b\not\simeq_p x$. Then by Generalized Independence of Symmetric Alternatives, we have 
\[\frac{p(a, A)}{p(b, A)}=\frac{p(a, A\setminus\{x\})}{p(b, A\setminus\{x\})}.\]
Then by Steps 5-8 of Theorem 1 (also recall Equation (12)), there is a function $w:2^X\to \mathds{R}_{+}$ such that $\frac{p(A_i, A)}{p(A_j, A)}=\frac{w(A_i)}{w(A_j)}$. In other words, $p(A_i, A)=\frac{w(A_i)}{\sum_{j} w(A_j)}$. Since $p$ is an NSC on $X_i$, we also have \[p(a, A_i)=\frac{v(A_i\cap X^s_i)}{\sum_{l} v(A_i\cap X^l_i)}\,\frac{u(a)}{\sum_{b\in A_i\cap X^s_i} u(b)}\] when $a\in A_i\cap X^s_i$. Finally,
\[p(a, A)=p(A_i, A)\,p(a, A_i)=\frac{w(A_i)}{\sum_{j} w(A_j)}\,\frac{v(A_i\cap X^s_i)}{\sum_{l} v(A_i\cap X^l_i)}\,\frac{u(a)}{\sum_{b\in A_i\cap X^s_i} u(b)}.\]

\end{document}